\providecommand{\email}[1]{\href{mailto:#1}{\nolinkurl{#1}\xspace}}
\newtheorem*{theorem*}{Theorem}
\newcolumntype{L}[1]{>{\raggedright\arraybackslash}p{#1}} 
\newcolumntype{C}[1]{>{\centering\arraybackslash}p{#1}}    
\newcolumntype{R}[1]{>{\raggedleft\arraybackslash}p{#1}}   
\begin{document}

\newcommand{\sa}{\mathsf{StreamAlg}}
\newcommand{\dpr}{\mathsf{DistProt}}
\newcommand{\cor}{\mathsf{Coordinator}}
\newcommand{\ser}{\mathsf{Server}}
\newcommand{\pkgen}{\mathsf{PublicKeyGen}}
\newcommand{\skgen}{\mathsf{SecretKeyGen}}

\newcommand{\gsw}{\mathsf{GSW}}
\newcommand{\nand}{\mathsf{NAND}}
\newcommand{\pfhe}{\mathsf{PFHE}}
\newcommand{\update}{\mathsf{Update}}
\newcommand{\report}{\mathsf{Report}}
\newcommand{\tikzxmark}{%
\tikz[scale=0.23] {
    \draw[line width=0.7,line cap=round] (0,0) to [bend left=6] (1,1);
    \draw[line width=0.7,line cap=round] (0.2,0.95) to [bend right=3] (0.8,0.05);
}}
\newcommand{\dgaussian}{\mathcal{G}}
\newcommand{\OO}{\tilde{\mathcal{O}}}
\newcommand{\ldec}{\mathsf{LinearDec}}
\newcommand{\ds}{\mathcal{DS}}
\newcommand{\GapSVP}{\mathsf{GapSVP}}
\newcommand{\C}{\mathcal{C}}
\DeclarePairedDelimiter\ceil{\lceil}{\rceil}
\DeclarePairedDelimiter\floor{\lfloor}{\rfloor}

\renewcommand{\subparagraph}[1]{\medskip\noindent\underline{\textit{#1}}}

\title{Fast White-Box Adversarial Streaming Without a Random Oracle}
\author{Ying Feng\thanks{Carnegie Mellon University. 
E-mail: \email{yingfeng@andrew.cmu.edu}}
\and
Aayush Jain\thanks{Carnegie Mellon University. 
E-mail: \email{aayushja@andrew.cmu.edu}}
\and
David P. Woodruff\thanks{Carnegie Mellon University. 
E-mail: \email{dwoodruf@andrew.cmu.edu}}
}
\date{\today}
\maketitle




\begin{abstract}
Recently, the question of adversarially robust streaming, where the stream is allowed to depend on the randomness of the streaming algorithm, has gained a lot of attention. In this work, we consider a strong white-box adversarial model (Ajtai et al. PODS 2022), in which the adversary has access to all past random coins and the parameters used by the streaming algorithm. We focus on the sparse recovery problem and extend our result to other tasks such as distinct element estimation and low-rank approximation of matrices and tensors. The main drawback of previous work is that it requires a {\it random oracle}, which is especially problematic in the streaming model since the amount of randomness is counted in the space complexity of a streaming algorithm. Also, the previous work suffers from large update time. We construct a near-optimal solution for the sparse recovery problem in white-box adversarial streams, based on the subexponentially secure Learning with Errors assumption. Importantly, our solution does not require a random oracle and has a polylogarithmic per item processing time. We also give results in a related white-box adversarially robust distributed model. Our constructions are based on homomorphic encryption schemes satisfying very mild structural properties that are currently satisfied by most known schemes.

\end{abstract}

\newpage

\section{Introduction}\label{sec:intro}

The streaming model of computation has emerged as an increasingly popular paradigm for the analysis of massive datasets, where the overwhelming size of the data places restrictions on the memory, computation time, and other resources available to the algorithm. In the streaming model, the input data is implicitly defined through a stream of data elements that sequentially arrive one-by-one. One can also delete a previous occurrence of an item. An algorithm makes one pass over the stream and uses limited space to approximate a desired function of the input. In the formalization of Alon, Matias, and Szegedy \cite{10.1145/237814.237823}, the data is represented as an underlying $n$-dimensional vector that is initialized to $0^n$, and then undergoes a sequence of additive updates to its coordinates. The algorithm aims to optimize the space and time to process these updates, while being able to respond to queries about this underlying vector.

This model captures key resource requirements of algorithms for many practical settings such as
Internet routers and traffic logs \cite{10.1145/1140103.1140295,10.1145/1071690.1064258,1623822,Venkataraman2005NewSA,1561219,10.1145/1353343.1353442}, databases \cite{SelingerACLP79,DasuJMS02,FLAJOLET1985182,Lightstone18,ShuklaDNR96,BrownHMPRS05}, sensor networks \cite{10.1145/3532189,Gaber2007}, financial transaction data \cite{ABS}, and
scientific data streams \cite{10.1145/872757.872857,SAGJ}. 
See \cite{TCS-002} for an overview of streaming
algorithms and their applications.

A large body of work on streaming algorithms has been designed for {\it oblivious} streams, for which the sequence of updates may be chosen adversarially, but it is chosen independently of the randomness of the streaming algorithm. Recently, there is a growing body of work on robust streaming algorithms in the black-box adversarial model \cite{Ben-EliezerY20,Ben-EliezerJWY21,HassidimKMMS20,WoodruffZ21,AlonBDMNY21,KaplanMNS21,BravermanHMSSZ21,MenuhinN21,AttiasCSS21,Ben-EliezerEO22,ChakrabartiGS22}, in which the adversary can monitor the outputs of the streaming algorithm and choose future stream updates based on these outputs. While useful in a number of applications, there are other settings where the adversary may also have access to the internal state of the algorithm, and this necessitates looking at a stronger adversarial model known as the white-box adversarial streaming model.

\subsection{White-box Adversarial Streaming Model}

We consider the white-box adversarial streaming model introduced in \cite{10.1145/3517804.3526228}, where a sequence of stream updates is chosen adaptively by an adversary who sees the full internal state of the algorithm at all times, including the parameters and the previous randomness used by the algorithm. The interaction between an adversary and the streaming algorithm is informally depicted as the following game. 
\begin{itemize}[label={},leftmargin=*]
    \item {\it Consider a multi-round, two-player game between $\sa$, the streaming algorithm, and an adversary $\A$. Prior to the beginning of the game, fix a query function $f$, which asks for a function of an underlying dataset that will be implicitly defined by the stream. In each round:

    \begin{enumerate}[leftmargin=*]
        \item $\A$ computes an update $x$ for the stream, which depends on all previous stream updates, all previous internal states, and the randomness used by $\sa$.  

        \item $\sa$ acquires a fresh batch of random bits, uses $x$ to update its data structures, and (if asked) outputs a response to the query function $f$.

        \item $\A$ observes the random bits, the internal state of $\sa$, and the response.
    \end{enumerate}
    
    The goal of $\A$ is to make $\sa$ output an incorrect query at some round during the game.}
\end{itemize}

The white-box adversarial model captures characteristics of many real-world attacks, where an adaptive adversary has access to the entirety of the internal states of the system. In comparison to the oblivious stream model or the black-box adversarial model \cite{Ben-EliezerJWY21}, this model allows for richer adversarial scenarios.

For example, consider a distributed setting where a centralized server collects statistics of a database generated by remote users. The server may send components of its internal state $S$ to the remote users in order to optimize the total communication over the network. The remote users may use $S$ in some process that generates downstream data. Thus, future inputs depend on the internal state $S$ of the algorithm run by the central coordinator. In such settings, the white-box robustness of the algorithms is crucial for optimal selection of query plans~\cite{SelingerACLP79}, online analytical processing~\cite{ShuklaDNR96}, data integration~\cite{BrownHMPRS05}, and data warehousing~\cite{DasuJMS02}.

Many persistent data structures provide the ability to quickly access previous versions of information stored in a repository shared across multiple collaborators. The internal persistent data structures used to provide version control may be accessible and thus visible to all users of the repository. These users may then update the persistent data structure in a manner that is not independent of previous states~\cite{DriscollSST89,FiatK03,Kaplan04}.

Dynamic algorithms often consider an adaptive adversary who generates the updates upon seeing the entire data structure maintained by the algorithm during the execution~\cite{Chan10,ChanH21,RoghaniSW22}. For example, \cite{Wajc20} assumes the entire state of the algorithm (including the set of randomness) is available to the adversary after each update, i.e., a white-box model.

Moreover, robust algorithms and adversarial attacks are important topics in machine learning~\cite{SzegedyZSBEGF14,GoodfellowSS14}, with a large body of recent literature focusing on adversarial robustness of machine learning models against white-box attacks~\cite{IlyasEM18,MadryMSTV18,SchmidtSTTM18,TramerKPGBM18,CubukZMVL18,KurakinGB17,LiuCLS17}.

\subsection{Prior Work in the White-Box Model}

We aim to design {\it white-box adversarially robust} (WAR) streaming algorithms with provable correctness guarantees against such powerful adversaries. In general, it is difficult to design non-trivial WAR algorithms,
due to the fact that many widely used space-saving techniques are subject to
attacks when all parameters and generated randomness of the system are immediately revealed to the adversary. 
In the streaming community, both \cite{10.1145/3517804.3526228} and \cite{fw23} explicitly study the white-box model for data streams and suggest the use of cryptography in designing WAR algorithms (see also \cite{Ben-EliezerJWY21} for the use of cryptography for black box robustness). A similar model to the white-box model has also been independently developed in the cryptography community in the study of property preserving hashing \cite{BLV19}.

In this work, the central problem we consider is $k$-sparse recovery as follows:
\begin{center}
{\bf Sparse Recovery:} 
given an input vector $\vec{x}$, if $\vec{x}$ contains at most $k$ non-zero entries, recover $\vec{x}$. Otherwise, report \emph{invalid input}.
\end{center}
A primitive for sparse recovery in the white-box model can be used for tasks such as distinct element estimation, low rank approximation of matrices and tensors, and finding a maximum matching \cite{10.1145/3517804.3526228,fw23}. Sparse recovery is also itself a critical data acquisition and processing problem that arises in signal and image processing, machine learning, data networking, and medicine, see \cite{10.1145/3517804.3526228,fw23} and the references therein. 

We note that a weaker definition of sparse recovery assumes that the input vector $\vec{x}$ is promised to be $k$-sparse, and only under this promise recovers $\vec{x}$. If the input is not $k$-sparse, the algorithm can output any vector. This problem can in fact be solved deterministically (and thus also in the white-box model) with low memory (see, e.g., \cite{4797556}, for a version with fast update time, though other simpler schemes based on MDS codes exist). 
The drawback under this weaker definition though is that, without knowing the sparsity of the input ahead of time, a user of the algorithm cannot tell whether the output is a correct recovery or garbage. In the latter case, the client might ignorantly use the garbage output in subsequent applications and propagate such an error.

\cite{fw23} solves this problem based on the assumed hardness of the Short-Integer-Solution (SIS) problem, which is a problem that is studied in lattice-based cryptography. The robust version of the property preserving hashing \cite{BLV19} can also be applied to solve the same problem. However, a notable drawback of all prior solutions proposed is their reliance on either a {\it random oracle} or a prohibitively long random string. This is because a key component of these algorithms is a giant, random matrix $\mat{A}$ for which it is hard to find a vector $\vec{x}$ with small entries for which $\mat{A}\vec{x}$ is sparse. Long random strings are especially problematic in data streams, where the resource measure is the total space complexity, including the randomness, and there is a large body of work on derandomizing such algorithms, see, e.g.,  \cite{Nisan92,Indyk06,KNW10,kacham2023pseudorandom}. Since it was previously unknown how to generate $\mat{A}$ pseudorandomly while maintaining this hardness property, previous work \cite{fw23} relies on a random oracle to heuristically compress such a matrix $\mat{A}$. This assumes that the algorithm is given read access to a long string of random bits, which is often implemented with hash-based heuristic functions such as AES or SHA256. Unfortunately, this heuristic cannot be proven secure using a standard reduction-style security proof. This motivates a central question of this work: Do there exist efficient streaming algorithms in the white-box model without a random oracle?

\subsection{Our Results}
Our algorithms make use of the hardness assumptions of the
Learning with Error (LWE) problem and its variant, Ring LWE. Both problems are popular in cryptography known for building homomorphic encryption schemes. In this section, we state the efficiency of our algorithms under the sub-exponential hardness assumptions of these problems. However, we stress that in contrast to \cite{fw23}, our algorithms are robust even under weaker polynomial hardness assumptions.
We let $n$ be the dimension of the underlying input vector, and for notational convenience, we assume that the entries of the vector are integers bounded in absolute value by $\poly(n)$ at all times during the stream. We use $\tilde{\mathcal{O}}(f)$ to denote $f \cdot \poly(\log n)$. 

Our main result is informally stated as follows:

\begin{theorem}[Informal]
    Assuming the sub-exponential hardness\footnote{Sub-exponential hardness means that we assume that for adversaries of size $2^{\secparam^\beta}$,  where $\secparam$ is the security parameter and $\beta > 0$ is some constant, a sample from an LWE distribution is indistinguishable from a uniformly random sample. 
    } of the Learning with Errors (LWE) problem, there is a WAR streaming algorithm for $k$-sparse recovery, which:
    \begin{itemize}[itemsep=0em, label=\textperiodcentered]
        \item takes $\OO(k)$ bits of space,

        \item has $\OO(1)$ update time, and
        
        \item has $\OO(k^{1+c})$ report time for an arbitrarily small constant $c > 0$.
        
    \end{itemize}
\end{theorem}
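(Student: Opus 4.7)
The plan is to compile an oblivious-stream linear sketch for $k$-sparse recovery through an LWE-based homomorphic-encryption layer, then argue white-box security by reducing to LWE. Concretely, I would use a sketch $\vec{y}=\mat{A}\vec{x}\in\mathbb{Z}_q^{O(k)}$ where $\mat{A}$ is pseudorandom: if $\vec{x}$ is $k$-sparse a syndrome-style decoder recovers $\vec{x}$ from $\vec{y}$, and otherwise any alternative $k$-sparse $\vec{x}'$ consistent with $\vec{y}$ would yield a short nonzero kernel vector of $\mat{A}$, which is infeasible under SIS-style hardness. This is the same template as \cite{fw23}; what is new is to materialise $\mat{A}$ from a short secret without a random oracle and to keep the adversary's white-box view from undermining the SIS-style reduction.

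To remove the random oracle, I would generate $\mat{A}$ column-by-column from a short LWE-based pseudorandom generator and store the current sketch encrypted under a levelled FHE scheme such as GSW (or a Ring-LWE analogue). Each stream update touches a single coordinate $i$, so the algorithm would homomorphically expand the $i$-th column of $\mat{A}$ from the encrypted seed and fold it into the encrypted accumulator in a constant-depth circuit, giving $\tilde{O}(1)$ amortised update time and total space $\tilde{O}(k)$. At report time the algorithm decrypts the sketch in time $\tilde{O}(k)$ and runs a classical sparse-recovery decoder; using an MDS-code-based decoder parameterised to trade off block size against the constant $c$ gives the claimed $\tilde{O}(k^{1+c})$ reporting time.

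For security I would proceed through a chain of hybrids: first, by the semantic security of the FHE scheme, replace every ciphertext in the adversary's view with an encryption of zero; next, by LWE pseudorandomness, replace the PRG-generated $\mat{A}$ with a truly uniform random matrix; finally, invoke the SIS-based argument of \cite{fw23} in this idealised world to conclude that no efficient adversary can produce a non-$k$-sparse update stream that fools the decoder. Sub-exponential hardness of LWE is used so that the reduction can tolerate a union bound over polynomially many rounds while still bounding a sub-exponential adversary; under merely polynomial hardness the same argument yields security against polynomial-sized adversaries, which is why the paper can claim robustness under the weaker assumption as well.

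The main obstacle, and the technical heart of the proof, is handling the FHE secret key $sk$ in a model that reveals every random coin used by the algorithm. If $sk$ were simply part of the algorithm's stored randomness, the adversary would learn it and could manufacture fake sketches, collapsing the hybrid argument above. My plan is to design the scheme so that $sk$ is never materialised in the algorithm's per-round randomness or persistent state---for instance by deriving $sk$ only at report time from freshly sampled randomness that is disclosed to the adversary only after the final answer is output, or by exploiting the ``mild structural properties'' of the homomorphic encryption scheme alluded to in the abstract to perform decryption via a one-shot key-switching step whose published side information is independent of $sk$. Making this delayed-release mechanism interact cleanly with the hybrid reduction, while still preserving the $\tilde{O}(1)$ update bound and the $\tilde{O}(k)$ space bound, is the step I expect to require the most care.
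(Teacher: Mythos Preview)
Your proposal has a genuine architectural gap: you are using FHE to \emph{hide the sketch} from the adversary and then decrypt at report time, whereas the paper uses FHE to build a \emph{collision-resistant hash} and never decrypts anything. In the paper's construction there is \emph{no secret key anywhere in the algorithm}. The stored ``ciphertexts'' $\tilde{\ct}_1,\ldots,\tilde{\ct}_{\lceil\log n\rceil}$ and $\tilde{\pk}$ are truly random strings sampled from the distributions $\D_{\tilde{\ct}}$ and $\D_{\tilde{\pk}}$; they are not encryptions of anything. The algorithm maintains $sketch=\sum_i \hat{\ct}_i\cdot \vec{x}_i$ in the clear (the adversary sees it), runs a deterministic relaxed recovery to get a candidate $\vec{x}'$, recomputes $hash=\sum_i \hat{\ct}_i\cdot \vec{x}'_i$, and simply tests $hash\stackrel{?}{=}sketch$. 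No $\sk$ is ever generated, so the obstacle you flag as ``the technical heart of the proof'' does not arise.

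The secret key appears only inside the \emph{reduction}: one guesses a uniform index $m\in[n]$, then uses pseudorandomness of $\pk$ and $\ct$ to swap the random strings for real encryptions of the bits of $m$. In that hybrid, $\hat{\ct}_i$ encrypts $\mathds{1}(i=m)$, and the $\Z$-linear-homomorphism property says $sketch$ deterministically decodes to $\vec{x}_m$ while $hash$ decodes to $\vec{x}'_m$; hence $hash=sketch$ forces $\vec{x}_m=\vec{x}'_m$, contradicting the guess with probability $\ge 1/n$. Your delayed-$\sk$ idea cannot be made to work in this model: the game is multi-round with queries at arbitrary times (there is no ``after the final answer''), and in any case $\pk$ must be fixed before the stream begins, which pins down $\sk$. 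Separately, your plan to ``invoke the SIS-based argument of \cite{fw23}'' reintroduces the brute-force over $k$-sparse vectors that forces the security parameter above $k\log n$; the paper avoids this by having the reduction run the \emph{same deterministic} relaxed recovery $\sa_0$ that the algorithm runs, making the reduction polynomial-time and decoupling $\lambda$ from $k$.
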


This matches (up to a poly-logarithmic factor) the optimal space and update time complexity for the weaker $k$-sparse recovery problem under the easier oblivious streaming setting. On the other hand, assuming polynomial security of the LWE problem, there is a multiplicative overhead of $n^{\epsilon}$ for arbitrary constant $\epsilon>0$ on our space and time complexities.

\paragraph{Adversarial Model in the Distributed Setting.} We introduce and formalize the white-box adversarial model for distributed data processing as a means of modeling the existence of one or more malicious parties in a distributed system. We also study the sparse recovery problem in a white-box distributed model. A WAR streaming algorithm can be naively converted to a WAR distributed algorithm under certain conditions, yet the resulting processing time of the servers is $\calO(n)$ times that of the streaming update time.  In contrast to this suboptimal time implied by the naive conversion, we show that we can achieve a near-optimal processing time in a white-box distributed model assuming the {\it polynomial} hardness of Ring-LWE, which is a variant of the LWE problem defined over a polynomial ring. This is because Ring-LWE supports SIMD (Single-Instruction Multiple-Data) style operations that have been exploited in the design of efficient homomorphic encryption schemes in cryptography. We informally state our result as follows:

\begin{theorem}[Informal]
    Assuming the hardness of Ring-LWE, there exists a WAR distributed protocol for $k$-sparse recovery, which:
    \begin{itemize}[itemsep=0em, label=\textperiodcentered]
        \item uses $\OO(k)$ bits of communication between each server and the coordinator,

        \item has $\OO(n)$ processing time on each server, and
        
        \item the coordinator spends $\OO(\max(n, k^{1+c}))$ time to output the solution given the messages from the servers, for an arbitrarily small constant $c > 0$.
        
    \end{itemize}
\end{theorem}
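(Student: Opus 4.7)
The plan is to adapt the white-box streaming construction promised by the first main theorem to the distributed setting, while instantiating the underlying homomorphic encryption scheme over Ring-LWE so as to exploit SIMD batching and obtain the claimed near-linear time bounds. At a high level, each server encrypts a compact sketch of its local data under a public key published by the coordinator; the coordinator homomorphically combines these sketches, decrypts, and runs a streaming-style decoder.

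For the per-server step, I would have server $s$ fold its received updates into a local vector $\vec{x}_s \in \mathbb{Z}^n$ in $\OO(n)$ time, then produce $\OO(k)$ packed Ring-LWE ciphertexts, where each ciphertext uses $\Theta(n/k)$ SIMD slots to encode a block of coordinates of the sketch used by the streaming algorithm. Because each such ciphertext lives in a polynomial ring of degree $d = \Theta(n/k)$, the encoding operations reduce to polynomial arithmetic that can be carried out in $\OO(d)$ time per ciphertext via the number-theoretic transform, giving total processing $\OO(n)$ per server; communication is $\OO(k)$ bits, matching the theorem.

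At the coordinator, linearity of the scheme yields an encrypted sketch of $\sum_s \vec{x}_s$ by homomorphic addition of the corresponding ciphertexts in $\OO(k)$ time. The report step then unpacks and decrypts the $\OO(k)$ ciphertexts into an $n$-coordinate plaintext view in $\OO(n)$ time and invokes the streaming sparse-recovery decoder in its promised $\OO(k^{1+c})$ time, giving the stated $\OO(\max(n, k^{1+c}))$ bound. White-box robustness transfers by the same semantic-security reduction as in the streaming theorem: every transcript and internal state visible to the adversary is a collection of Ring-LWE ciphertexts indistinguishable from uniform under polynomial hardness, so no adversarial choice of future updates can induce a non-$k$-sparse input that passes the validity check except with negligible probability.

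The step I expect to be the main obstacle is aligning the SIMD packing with the algebraic structure required by the streaming decoder. The streaming sketch is, at its core, an encryption of a linear image of the input, and that linear map must act slot-wise on packed ciphertexts in constant homomorphic depth; otherwise slot permutations or key-switching would blow the $\OO(n)$ time budget. I would address this by partitioning the rows of the sketch matrix into blocks aligned with the Ring-LWE slot structure and using the fact that this matrix can be generated pseudorandomly from a short seed without affecting hardness. Noise growth is a secondary concern but remains manageable since the homomorphic circuit depth is constant, so Ring-LWE parameters consistent with polynomial hardness suffice.
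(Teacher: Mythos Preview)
Your proposal has a fundamental architectural error that makes it incompatible with the white-box model. You have the coordinator publish a public key, each server encrypt its sketch under that key, and the coordinator later \emph{decrypt}. But decryption requires the coordinator to hold a secret key, and in the white-box model the adversary sees the coordinator's entire internal state and all randomness---including that secret key and all encryption randomness. Consequently your claim that ``every transcript and internal state visible to the adversary is a collection of Ring-LWE ciphertexts indistinguishable from uniform'' is false: with the secret key in hand, the adversary can distinguish perfectly, and semantic security gives you nothing.

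The paper's construction avoids this by never holding a secret key at all. The ``public key'' and the $\ceil{\log n}$ ``ciphertexts'' broadcast by the coordinator are \emph{truly random strings} sampled from the pseudorandom-$\pk$ and pseudorandom-$\ct$ distributions; there is no $\sk$ anywhere in the real protocol. The FHE machinery is used not for confidentiality but as a \emph{collision-resistant hash}: each server deterministically evaluates these random strings on the point-function circuits $C_i$ to obtain $\hat{\ct}_i$ and computes $sketch_t = \sum_i \hat{\ct}_i \cdot (\vec{x}_t)_i$, which the coordinator sums and compares against the hash of the candidate output of a separate deterministic relaxed-recovery sketch $\Phi_n$. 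The secret key appears only inside the security \emph{proof}, where the random strings are swapped (via pseudorandomness) for genuine encryptions of a hidden index $m$; linear-homomorphism then forces any hash collision to agree at coordinate $m$, yielding a contradiction. Your SIMD-packing discussion is also misaligned with this: the Ring-LWE speedup in the paper comes from choosing ring degree $g = \lambda = n^w$, partitioning $[n]$ into $n^{1-w}$ blocks, and evaluating one ciphertext per block via FFT-based ring multiplication---not from packing $\Theta(n/k)$ sketch coordinates into ciphertext slots.
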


We summarize the complexities of our sparse recovery algorithms in Table \ref{tab:summary}, as compared to the best-known upper bounds for these problems in the white-box adversarial model.

\tabcolsep=0.15cm
\begin{table*}[h!]
    \caption{A summary of the bit complexities and runtime of our $k$-sparse recovery algorithm, as compared to the best-known upper bounds for these problems in the white-box adversarial model \cite{fw23}. The SIS assumption used in the previous work is proven to be equivalently hard  to the LWE assumption we use, up to different parameters.
    The TIME columns refer to the update time in the streaming model and processing time in the distributed model. $c$ denotes an arbitrarily small positive constant.}\label{tab:summary}
    \begin{center}
    \begin{small}
    \begin{sc}
    \begin{tabular}{lcccccr}
    \toprule
    &  Model & Assumption & Space & Time & Rand. Oracle?\\
    \midrule

     \multirow{2}{*}{\textbf{Prev.}} & Streaming  & subexp SIS & $\OO(k)$ & $\OO(k)$ &  $\checkmark$\\

     & Distributed  & subexp SIS & $\OO(k)$ & $\OO(nk)$ & $\checkmark$ \\

     \addlinespace\hline\addlinespace

     \multirow{5}{*}{\textbf{Ours}} & Streaming & subexp LWE & $\OO(k)$ & $\OO(1)$ & $\tikzxmark$\\

     & Streaming  & poly LWE & $\OO(n^c +k)$ & $\OO(n^c)$ & $\tikzxmark$\\

     & Distributed  & subexp LWE & $\OO(k)$ & $\OO(n)$ & $\tikzxmark$ \\

     & Distributed  & poly LWE & $\OO(n^c +k)$ & $\OO(n^{1+c})$ & $\tikzxmark$ \\

     & Distributed  & poly Ring-LWE & $\OO(n^c +k)$ & $\OO(n)$ & $\tikzxmark$ \\
    \bottomrule
    \end{tabular}
    \end{sc}
    \end{small}
    \end{center}
    \vskip -0.1in
\end{table*}

\paragraph{Extensions to matrix and tensor recoveries.}
Similar to \cite{fw23}, generalizing our construction produces low-rank matrix and tensor recovery algorithms, again with the crucial property that they do not assume a random oracle or store a large random seed. 

Consider a data stream updating the entries of an underlying matrix or tensor, which can be represented as its vectorization in a stream. Let $n \geq m$ be the dimensions of an underlying input matrix, and let $n$ be the dimension parameter of an underlying data tensor in $\R^{n^d}$ for some $d \in \calO(1)$ (we assume all dimensions of the tensor are the same only for ease of notation). We have the following results in the streaming model:

\begin{theorem}[Informal]
    Assuming the sub-exponential hardness of the Learning with Errors (LWE) problem, there is a WAR streaming algorithm for rank-$k$ matrix recovery, which takes $\OO(nk)$ bits of space. There is also a WAR streaming algorithm for rank-$k$ tensor recovery, which takes $\OO(nk^{\ceil{\log d}})$ bits of space.
\end{theorem}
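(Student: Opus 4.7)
The plan is to reduce exact rank-$k$ matrix and tensor recovery to the $k'$-sparse vector recovery problem solved by the main theorem, following the algorithmic template of \cite{fw23} and substituting their SIS-based primitive with our new LWE-based WAR sparse recovery. Because both primitives expose the same linear-sketching interface and are parameterized only by a black-box semantic-security assumption, the reduction itself carries over with only cosmetic changes; the real work is in verifying that this substitution preserves both the complexity bounds and the white-box security.

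For the matrix case, I would represent an $n \times m$ matrix $M$ (with $n \geq m$) by sketches that capture its row and column spaces. Concretely, sample subspace-embedding matrices $S \in \Z^{m \times r}$ and $T \in \Z^{r \times n}$ with $r = \OO(k)$, and maintain the three small sketches $MS$, $TM$, and $TMS$; each stream update to an entry of $M$ induces a constant number of updates to these sketches, so the total space is $\OO(nk)$. If $M$ really has rank $\leq k$ and $TMS$ is invertible (which holds with high probability for standard choices of $S,T$), the identity $M = MS \cdot (TMS)^{-1} \cdot TM$ yields exact reconstruction. The remaining step is to \emph{certify} the rank assumption: following \cite{fw23}, I would feed auxiliary linear sketches of $M$ into our WAR sparse-recovery primitive, which returns \emph{invalid input} whenever the residual of the low-rank reconstruction formula is nonzero.

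For the tensor case, I would use a recursive matricization. A rank-$\leq k$ $d$-mode tensor in $\R^{n^d}$ can be reshaped, by grouping modes into two halves of sizes $\ceil{d/2}$ and $\floor{d/2}$, into an $n^{\ceil{d/2}} \times n^{\floor{d/2}}$ matrix of rank $\leq k$; the matrix algorithm then produces two sub-tensors, and recursing on each yields a binary tree of depth $\ceil{\log d}$. At each level the rank bound may multiply by $k$, producing leaf subproblems of rank up to $k^{\ceil{\log d}}$; summing space across the recursion gives the claimed $\OO(n k^{\ceil{\log d}})$.

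The hard part will be the rank-certification step: because the white-box adversary $\A$ sees the sketching matrices $S, T$ (along with the randomness internal to the recovery primitive), we must rule out adversarial inputs whose sketches coincide with those of some genuine rank-$\leq k$ matrix. This reduces to the same hybrid argument used for the main theorem---any distinguishing adversary yields an attacker against subexponential LWE by embedding the LWE challenge into the sketching matrices---and the reduction has to pass uniformly through each of the $\OO(\log d)$ recursive levels in the tensor case, which I expect to dominate the proof effort.
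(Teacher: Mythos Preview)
Your certification step---maintain the FHE-based hash of $\mathrm{vec}(M)$ during the stream and, at report time, compare it to the hash of the candidate reconstruction---is exactly what the paper does. The paper does not literally reduce to the $k'$-sparse recovery primitive; it reuses only the verification component (the collision-resistant hash of Construction~\ref{const1}) applied to the vectorization of the matrix/tensor, paired with a separate deterministic low-rank recovery scheme. That said, your ``residual is zero'' test is equivalent to this, so the difference is cosmetic.

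There is, however, a genuine gap on the \emph{recovery} side. You propose sampling random subspace-embedding matrices $S,T$ and relying on $TMS$ being invertible ``with high probability for standard choices of $S,T$.'' In the white-box model the adversary sees $S$ and $T$ \emph{before} choosing $M$, so this high-probability statement no longer applies: for any fixed $T \in \Z^{r \times n}$ with $r = \tilde{\mathcal{O}}(k)$, the adversary can pick a genuinely rank-$k$ matrix $M$ whose column space lies in the kernel of $T$ (or otherwise force $TMS$ to drop rank), causing your reconstruction formula $MS\,(TMS)^{-1}\,TM$ to fail on a valid input. Your hybrid argument only addresses false positives (high-rank inputs masquerading as low-rank); it does nothing for this false-negative failure, and no LWE-style reduction will help here because the embedding property of random $S,T$ is information-theoretic, not computational. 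The paper sidesteps this entirely by using the \emph{deterministic} low-rank measurement scheme of \cite{10.1145/2213977.2213995} for the recovery half, so that correctness on rank-$\leq k$ inputs holds for every input with no randomness for the adversary to exploit; the FHE hash is then used only to catch the high-rank case. Your proof would need the same substitution (or some other argument that the recovery step itself is WAR) to close the gap.
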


We remark that the above matrix recovery algorithm performs $\OO(nk)$ measurements using $n$-sparse matrices, which can be converted into a distributed protocol with $\OO(n^2k)$ processing time. This improves upon the matrix recovery algorithm in the previous work \cite{fw23} which, if implemented in a distributed setting, would require $\OO(n^3k)$ processing time at each server.

\paragraph{Additional Applications.} 
Our $k$-sparse recovery algorithm can be used to construct an $L_0$-norm estimation algorithm following the same recipe in \cite{fw23}. In short, by running our $k$-sparse recovery algorithm with the sparsity parameter $k$ set to $n^{1-\alpha}$ for some constant $\alpha$, we can get an $n^\alpha$-approximation to the $L_0$ norm of a vector. In comparison to \cite{fw23}, our construction again removes the need for a random oracle or the need to store a lot of randomness.

%
%


\subsection{Technical Overview}

We leverage structural properties of lattice-based assumption to address all drawbacks listed above. In particular, departing from the previous construction of \cite{fw23}, we work with a suitable fully homomorphic encryption (FHE) scheme to achieve the desired property. We start with an intuitive explanation of the previous construction in \cite{fw23}. 

\paragraph{Previous Construction} One general framework for solving the sparse recovery problem is to run a weak recovery algorithm as a black-box subroutine, which potentially provides a garbage output, in addition to an additional ``tester'' scheme to verify whether the recovery output matches the original input (i.e., is a correct recovery) or is garbage. In the oblivious streaming setting, such a ``tester'' can be efficiently implemented using, e.g., a random polynomial “fingerprint” \cite{10.1145/3297715}. However, this approach is not WAR as the randomness of the sampler is revealed and thus not independent of the stream. 

Instead, \cite{fw23} uses a random matrix $\mat{A}$ to implement a WAR tester, which is assumed to be heuristically compressed by an idealized hash function. During the stream, \cite{fw23} maintains a hash $h =\mat{A}\vec{x}$, where $\vec{x}$ is the data vector implicitly defined by the stream. In the testing stage, it acknowledges an output $\vec{x'}$ to be a correct recovery if and only if $\vec{x'}$ is $k$-sparse and $\mat{A}\vec{x'}$ equals the hash $h$. The idea is that if there exists a $k$-sparse collision $\vec{x'}$ satisfying $\mat{A}(\vec{x} - \vec{x'}) = 0$ and $\vec{x'} \neq \vec{x}$, then an adversary can brute-force over all $k$-sparse vectors to find such $\vec{x} - \vec{x'}$, which is a solution to the SIS problem. To spawn $n$ matrix columns and support the brute-force reduction, the security parameter (i.e., the number of rows of $\mat{A}$) needs to be in $\omega(k\log n)$, assuming the sub-exponential security of SIS.

\paragraph{Efficient Reduction.} The number of matrix rows in the previous construction depends on the sparsity parameter $k$ due to a brute-forcing step in the reduction. To grant enough time for iterating through all $k$-sparse vectors (with $\poly(n)$-bounded entries), the security parameter has to be greater than $k\log n$, which results in $\OO(k)$ time to process every single update. Our first idea is to construct a poly-time reduction in order to remove such a dependency on $k$. This allows us to achieve better space and time efficiency, in addition to basing the construction on a polynomial hardness assumption.

The observation is that the non-existence of a sparse collision is an overly strong condition. Instead, for our purpose, we only need to ensure that such a collision, if it exists, cannot be found by the poly-time weak recovery scheme that the algorithm runs. Thus, in the reduction, instead of brute-forcing over all $k$-sparse vectors, we let the adversary run the same poly-time weak recovery scheme that the algorithm does. If a collision $\vec{x'}$ satisfying $\mat{A}\vec{x} = \mat{A}\vec{x'}$ is output by the scheme, then the adversary can still break the security assumption.

\paragraph{Reducing the Amount of Randomness.} The rest of our effort will be attributed to reducing the size of the state of the algorithm. Rather than using a truly random giant matrix to hash the data as in the previous construction, we would like to design a short digest (ideally linear-sized in the security parameter) and generate the columns of the hash matrix on the fly. This naturally yields correlated columns as opposed to the truly random columns in an SIS matrix. We aim to show that in this setting, the chance of finding a collision is still negligible.

\paragraph{Streaming Friendly Collision-Resistant Hash.} Our main insight from cryptography to address the issue above is that we could construct a hash function family $\mathcal{H}$, with special properties to help solve our streaming problem:
\begin{itemize}
\item The hash key $h\leftarrow\mathcal{H}$ is small in size, ideally polynomial in the security parameter $\lambda$ (or even polylogarithmic assuming subexponential time assumptions).
\item The hash should be computable in a streaming fashion and it should  be updatable where each update takes polynomial time in the security parameter $\lambda$ (or even polylogarithmic time assuming subexponential time assumptions).
\item The hash key should be pseudorandom. In other words, in our real-world implementation, the white-box adversarial streaming model does not allow us to use any structured randomness (such as with hidden planted secrets). In {\it the proof}, however, we could introduce these secrets which will be important for our construction.
\item The hash should satisfy collision-resistance.
\end{itemize}

\paragraph{Constructing Streaming Friendly Hashing from FHE.} How could we build such a hash function?  The SIS based hash function does have a fast update time. On the other hand, we could take an arbitrary collision resistant function that compresses the number of bits by a factor of $1/2$ and use it to perform a Merkle tree style hashing to hash strings of arbitrary length. Such a hash will have $\poly(\lambda)$-sized hashing keys. Unfortunately, the first idea suffers from a large hashing key size, which could be shortened in the random oracle (RO) model by using the RO to store the SIS matrix. Moreover, the second proposal will not support positive and negative stream updates to the coordinates of the underlying vector. 

Our main idea is to leverage a suitably chosen Fully Homomorphic Encryption (FHE) scheme satisfying mild structural properties (that are satisfied currently by most schemes based on LWE and Ring-LWE) to build such a hash function. Our construction bears a resemblance to the construction of a collision resistant function from a private information retrieval scheme, where the private information retrieval scheme is implemented by a fully homomorphic encryption scheme.

We directly explain the idea in the context of our construction.
In our hash function, the hash key consists of $L=O(\log n)$ ciphertexts $\ct_1,\ldots,\ct_{L}$. We assume that our FHE scheme has pseudorandom ciphertexts and pseudorandom public keys. In our actual scheme, these ciphertexts will be generated as random strings. It is only in the proof that they will correspond to encryptions of carefully chosen values. 

The idea is that from these $L$ ciphertexts, we will derive $n$ evaluated ciphertexts $\hat{\ct}_j$ for $j\in [n]$ by evaluating $L$ ciphertexts on simple functions $C_1,\ldots,C_n$ (described later). 
 These $n$ ciphertexts will correspond to columns of the SIS type hashing. 
The hash of a vector $\vec{x}$ is computed as the linear combination $\sum_{i \in [n]}\hat{\ct_i}\vec{x}_i$ where $\vec x_i$ is the $i^{th}$coordinate. Here the sum is done over the field on which the FHE ciphertexts live.

The circuits $C_1,\ldots,C_n$ are chosen keeping the security proof in mind. The ciphertexts $\ct_1,\ldots,\ct_{L}$ are ``thought" to encrypt a value $m\in \{0,1\}^{L}$ that will be chosen in the proof, and each $C_i(\cdot)$ on input $m$ outputs $1$ if $i=m$ and $0$ otherwise. Observe that these circuits are polynomial in $L=O(\log n)$ sized and therefore can be evaluated on $\ct_1,\ldots,\ct_{L}$ in polynomial in $(\lambda, \log n)$ time guaranteeing fast updates.

How do we prove security? The idea is that if an adversary breaks collision-resistance it will produce two vectors 
 $\vec{x} \neq \vec{x'}$, so that the corresponding hashes $\sum_{i \in [n]}\hat{\ct_i}\vec{x}_i=\sum_{i \in [n]}\hat{\ct_i}\vec{x}'_i$.
Suppose one could know exactly one coordinate $v\in [n]$ so that $\vec x_v\neq \vec x'_v$. Then we could set $m=v$. In this case $\hat{\ct_v}$ encrypts $1$ and the other ciphertexts encrypt $0$. In this case, 
 $\sum_{i \in [n]}\hat{\ct_i}\vec{x}_i=\sum_{i \in [n]}\hat{\ct_i}\vec{x}'_i$ would lead to a contradiction if our FHE scheme has a special property (that is satisfied by common FHE schemes). The property is that if $\hat{\ct_i}$ encrypts bit $\mu_i\in \{0,1\}$ and $\vec x_i$ are small norm then there is a decryption algorithm that uniquely binds $\sum_{i \in [n]}\hat{\ct_i}\vec{x}_i$ to $\sum_{i\in [n]}\mu_i\vec{x}_i$. If this holds, then the equality $\sum_{i \in [n]}\hat{\ct_i}\vec{x}_i=\sum_{i \in [n]}\hat{\ct_i}\vec{x}'_i$ would imply $\vec x_v=\vec x'_v$ which is a contradiction.

This leads us to two remaining problems:
\begin{enumerate}
    \item Given that we do not know $\vec{x}$ and $\vec{x'}$ ahead of time, how do we decide which index $m$ to encrypt such that it will help identify whether $\vec{x} = \vec{x'}$?

    \item Moreover, a WAR algorithm cannot actually encrypt FHE ciphertexts, since all randomness and parameters used by the algorithm will be exposed to the adversary. Instead, we want our digest to be unstructured and truly random.
\end{enumerate}

\paragraph{FHE with a Pseudorandom Property.}
We solve both issues above by considering a pseudorandom property of the FHE scheme and show that a random guess of the index $m$ actually suffices. We assume that the distributions of public keys and ciphertexts of the FHE scheme are indistinguishable, respectively, from some truly random distributions from the perspective of the adversary. 

We sample truly random digests in our construction, and perform a random guess of $m \in [n]$ in the proof. If an adversary finds a collision pair $\vec{x} \neq \vec{x'}$ with probability $\gamma$, then with probability $p \geq \gamma/n$, our guess pins down a coordinate of disagreement $\vec{x}_m \neq \vec{x'}_m$. In the proof, we then switch the ciphertexts to encrypt $m$. The chance of guessing the disagreement is still $p$, as otherwise one can distinguish the pseudorandom ciphertext. However, as argued earlier, $\vec{x}_m \neq \vec{x'}_m$ roughly implies $\sum_{i \in [n]}\hat{\ct_i}\vec{x}_i \neq \sum_{i \in [n]}\hat{\ct_i}\vec{x'}_i$, and thus two distinct vectors cannot form a hash collision that fools our ``tester''.

In our complexity analysis, we instantiate the FHE scheme using GSW \cite{C:GenSahWat13}, which satisfies the additional properties required by our construction. We remark that in addition to GSW, most current FHE schemes based on LWE and Ring-LWE also exhibit these properties, and thus they are not an artifact obtained from a particular FHE scheme.

\paragraph{Ring-LWE for the Distributed Setting.}
In the distributed model, each remote server receives as input a length-$n$ vector. The goal is to design a protocol such that each server can efficiently compute a short summary of their partition of data. Adapting the above streaming algorithm to this setting, each server needs to evaluate $n$ ciphertexts in order to compute a linear combination $\sum_{i \in [n]}\hat{\ct_i}\vec{x}_i$. 

Using a polynomially secure LWE-based FHE scheme, it is unclear how to speed up such $n$-many individual evaluations as each computation takes $\poly(\lambda)$ operations, resulting in a total complexity of $n\poly(\lambda)$, where $\lambda$ is the security parameter. Therefore, we appeal to the Ring-LWE assumption, which is an algebraic variant of LWE that works over the ring $\mathbb{Z}_{p}[x]/(x^N+1)$. 
Over this ring, we can exploit the SIMD property of Ring-LWE by packing the data as ring elements, i.e., we represent the input vectors as partitioned segments of length proportional to the security parameter (rather than the vector length $n$), and efficiently operate on them segment-by-segment using the Fast Fourier Transform.  This improves the processing time of each server to be almost linear in the dimension of the data. We refer the reader to Section \ref{sec:dis} for more details.

\paragraph{Summary.} Leveraging FHE or lattice-based assumptions in algorithm design are relatively new techniques. We think it would be good to explore their applications more broadly in robust algorithms, beyond the recovery problems or even the adversarial settings considered here. 

\section{Preliminaries}

\paragraph{Basic Notations.} 
For a finite set $S$, we write $x \rsmpl S$ to denote a uniform sample $x$ from $S$. Our logarithms are in base $2$. For $n \in \R$, we use $\ceil{n}$ to denote rounding $n$ up to the nearest integer. For a vector $\vec{x}$, $\lVert \vec{x} \rVert_0$ denotes the $\ell_0$ norm of $\vec{x}$, which is the number of its non-zero entries. We abbreviate PPT for probabilistic polynomial-time. A function $\negl: \N \to \R$ is negligible if for every constant $c > 0$ there exists an integer $N_c$ such that $\negl(x) < x^{-c}$ for all $x > N_c$. Let $\mathds{1}(p)$ for some proposition $p$ be an indicator variable, which equals $1$ if $p$ is true and $0$ otherwise. We use $\vec{b}_i$ to denote the standard basis vector, whose $i$-th entry equals $1$ and $0$ anywhere else.

\subsection{Lattice Preliminaries}

\paragraph{General definitions.}
A \emph{lattice} $\cL$ is a discrete subgroup of $\R^{g}$, or equivalently the set $\cL(\vec{b}_{1},\dots,\vec{b}_{m})=\left\{ \sum_{i=1}^{g}x_{i}\vec{b}_{i} ~:~ x_{i}\in\Z\right\}$ of all integer combinations of $g$ vectors
$\vec{b}_{1},\dots,\vec{b}_{m} \in \R^{g}$ that are linearly independent. Such $\vec{b}_i$'s form a \emph{basis} of $\cL$. The lattice $\cL$ is said to be \emph{full-rank} if $g=m$.
We denote by $\lambda_{1}(\cL)$ the so-called first minimum of $\cL$, defined to be the length of a shortest non-zero vector of $\cL$.

\noindent
\paragraph{Discrete Gaussian and Related Distributions.}
For any $s>0$, define $\rho_s(\vec{x})=\exp(-\pi\|\vec{x}\|^2/s^2)$ for all $\vec{x}\in\R^g$.
We write $\rho$ for $\rho_1$. For a discrete set $S$, we extend $\rho$ to sets by
$\rho_s(S)=\sum_{\vec{x}\in S}\rho_s(\vec{x})$. Given a lattice $\cL$,
the \emph{discrete Gaussian} $\dgaussian_{\cL,s}$ is the distribution over $\cL$
such that the probability of a vector $\vec{y}\in\cL$ is proportional to $\rho_s(\vec{y})$:
  $\Pr_{X\leftarrow \dgaussian_{\cL,s}}[X=\vec{y}]=\frac{\rho_s(\vec{y})}{\rho_s(\cL)}.$

The discrete Gaussian is not bounded but with overwhelming probability a sample from $\dgaussian_{\cL,s}$ is bounded by $s\sqrt{g}$ in $\ell_2$ norm.

\paragraph{Lattice Problems.}

We consider worst-case and average-case problems over lattices. We start by defining the gap version of the shortest vector problem.

\begin{definition}
For any function $\gamma = \gamma(g) \ge 1$, the decision problem $\GapSVP_\gamma$ (Gap Shortest Vector Problem) is defined as follows: the input is a basis $\vec{B}$ for a lattice $\cL \subset \R^g$ and a real number $d > 0$ that satisfy the promise that $\lambda_1(\cL(\vec{B})) \le d$, or $\lambda_1(\cL(\vec{B})) \ge \gamma d$. The goal is to decide whether $\lambda_1(\cL(\vec{B})) \le d$, or $\lambda_1(\cL(\vec{B})) \ge \gamma d$.
\end{definition}

This problem is known to be $\mathsf{NP}$ hard for $\gamma=g^{o(1)}$ \cite{FOCS:Khot04a,STOC:HavReg07}, and is known to be inside $\mathsf{NP}\cap \mathsf{coNP}$ for $\gamma=O(\sqrt{g})$ \cite{Aharonov2005LatticePI,GG00}. The problem is believed to be intractable by polynomial time adversaries even for $\gamma=2^{g^{\epsilon}}$ for $0<\epsilon <1$.

Next, we define the Learning with Errors problem $\mathsf{LWE}$.

\begin{definition} [Learning with Errors]
\label{def:lwe}
Suppose we are given an integer dimension $g\in \mathbb{N}$, a modulus $q\in \mathbb{N}$, a sample complexity $h\geq \Omega( g \cdot \log q)$ and a parameter $\sigma \in \mathbb{R}^{\geq 0}$. Define an error distribution as the discrete Gaussian $\chi^h_\sigma := \dgaussian_{\sigma, \Z^{h}}$. We define $\mathsf{LWE}_{g,h, q,\chi^h_\sigma}$ as follows: sample a matrix $\mat{A}\leftarrow \Z^{g\times h}_{q}$ and a secret $\vec s \in \Z^{1\times g}_{q}$. Sample an error vector $\vec e \gets \chi^h_\sigma$. The problem is to distinguish $(\mat{A}, \vec s \mat{A}+\vec e \mod q)$ from $(\mat{A},\vec r)$ where $\vec r$ is random in $\Z^{1\times h}_{q}$.   
\end{definition}

Under quantum and classical reductions it can be shown that $\mathsf{LWE}_{g,h, q,\chi_{\sigma}}$ is harder to solve than $\mathsf{Gap\mbox{-}SVP}_{\gamma}$ with $\gamma=g\frac{q}{\sigma}$.

We also consider the Ring-LWE problem, which is an average-case lattice problems over ideal lattices. 

\begin{definition} [Ring-LWE]
 Let $q\geq 1$ be a prime number, and let $g$ be a power of $2$. Let $g,\sigma$ be positive integers. The $\mathsf{Ring\mbox{-}LWE}_{g,h,q,\chi_\sigma}$ problem is defined as follows: Consider $R=\frac{\mathbb{Z}_q[x]}{x^{g}+1}$. Sample at random $\vec a_1,\cdots,\vec a_h \leftarrow R$ and a secret polynomial $\vec s \leftarrow R$. Sample $h$ error polynomials $\vec e_1,\cdots, \vec e_h$ such that each $\vec{e}_i$ is a $g$-dimensional vector $\vec e_i\leftarrow \chi_\sigma = \dgaussian_{\sigma,\Z^g}$. The problem is to distinguish the tuple $(\vec a_1,\cdots,\vec a_h, \{\vec s \vec a_i+\vec e_i\}_{i\in [h]})$ from $(\vec a_1,\cdots,\vec a_h, \{\vec r_i\}_{i\in [h]})$ where $\vec r_1,\cdots,\vec r_h$ are random polynomials in $R$.
\end{definition}

The Ring-LWE problem is as hard as worst-case problems over ideal lattices \cite{Ec:LyuPeiReg10}. Ring LWE is sometimes advantageous to use over LWE because of shorter parameters. One ring LWE element is of dimension $g$, and requires only one ring element as a coefficient, as opposed to $g$ vectors in the case of LWE. Moreover, ring multiplication can be sped up using number-theoretic transforms \cite{Eprint:SmaVer11}. 

\subsection{Fully Homomorphic Encryption}\label{sec:fhe}


\paragraph{Notation.} We denote the security parameter of a cryptographic scheme by $\secparam$. We write PPT$_\secparam$ to denote probabilistic poly-time in $\secparam$. We say two ensembles of distributions $\{\calD_\lambda\}_{\lambda \in \N}$ and $\{\calD'_\lambda\}_{\lambda \in \N}$ are computationally indistinguishable, denoted by $\approx_c$, if for any non-uniform PPT adversary $\calA$ there exists a negligible function $\negl$ such that $\calA$ can distinguish between the two distributions with probability at most $\negl(\lambda)$.

\






\begin{remark}\label{rem:indist}
    We also define {\it $(\calT,\epsilon)$-indistinguishability}, where $\calT$ and $\epsilon$ are two functions in $\secparam$. We say that two distribution ensembles $\{\calD_\lambda\}_{\lambda \in \N}$ and $\{\calD'_\lambda\}_{\lambda \in \N}$ are $(\calT, \epsilon)$-indistinguishable, denoted by $\approx_{(\calT, \epsilon)}$,  if there exists an integer $\secparam_0$ such that for all $\secparam > \secparam_0$, for every adversary $\A$, $\calA$ can distinguish between the two distributions with probability at most $\epsilon(\lambda)$.

\end{remark}


\begin{definition}[FHE]\label{def:FHE}
    A fully homomorphic encryption $\fhe$ scheme is a tuple of \emph{PPT$_\secparam$} algorithms $\fhe = (\fhe.\setup, \fhe.\enc, \fhe.\eval, \fhe.\dec)$ with the following properties:

    \begin{itemize}[leftmargin=1em]
        \item $\fhe.\setup(1^{\lambda}) \to (\sk, \pk)$ : On input a security parameter $\lambda$, the setup algorithm outputs a key pair $(\pk, \sk)$.

        \item $\fhe.\enc(\pk, \mu) \to \ct$ : On input a public key $\pk$ and a message $\mu \in\{0, 1\}$, the encryption algorithm outputs a ciphertext $\ct$.

        \item $\fhe.\eval(\pk, C, \ct_1, \cdots , \ct_\ell) \to \hat{\ct}$ : On input a public key $\pk$, a circuit $C : \{0, 1\}^\ell \to \{0, 1\}$ of depth at most $\poly(\lambda)$, and a tuple of ciphertexts $\ct_1, \cdots , \ct_\ell$, the evaluation algorithm outputs an evaluated ciphertext $\hat{ct}$.

        \item $\fhe.\dec(\pk, \sk, \hat{\ct}) \to \hat{\mu}$ : On input a public key $\pk$, a secret key $\sk$, and a ciphertext $\hat{\ct}$ , the decryption algorithm outputs a message $\hat{\mu} \in \{0, 1, \bot\}$.
    \end{itemize}
\end{definition}

We also require an $\fhe$ scheme to satisfy compactness, correctness, and security properties as follows:

\begin{definition}[$\fhe$ Compactness]\label{def:FHEcompactness}
    We say that a $\fhe$ scheme is compact if there exists a polynomial $\poly(\cdot)$ such that for all security parameters $\lambda$, circuits $C: \{0, 1\}^\ell \to \{0, 1\}$ of depth at most $\poly(\secparam)$, and $\mu_i \in \{0, 1\}$ for $i \in [\ell]$, the following holds:
    Given $(\pk, \sk) \gets \fhe.\setup(1^\lambda)$, $\ct_i \gets \fhe.\enc(\pk, \mu_i)$ for $i \in [\ell]$, $\hat{\ct} \gets \fhe.\eval(\pk, C, \ct_1, \cdots, \ct_\ell)$, we always have $\lvert\hat{\ct}\rvert \leq \poly(\lambda)$.
\end{definition}

\begin{definition}[$\fhe$ Correctness]\label{def:FHEcorrectness}
    We say that a $\fhe$ scheme is correct if for all security parameters $\lambda$, circuit $C: \{0, 1\}^\ell \to \{0, 1\}$ of depth at most $\poly(\secparam)$, and $\mu_i \in \{0, 1\}$ for $i \in [\ell]$, the following holds: Given $(\pk, \sk) \gets \fhe.\setup(1^\lambda)$, $\ct_i \gets \fhe.\enc(\pk, \mu_i)$ for $i \in [\ell]$, $\hat{\ct} \gets \fhe.\eval(\pk, C, \ct_1, \cdots, \ct_\ell)$, 
    \[\prob[ \fhe.\dec(\pk, \sk, \hat{\ct}) = f(\mu_1, \cdots, \mu_\ell) ] = 1.\]
\end{definition}

\begin{definition}[$\fhe$ Security]\label{def:FHEsecurity}
    We say that a $\fhe$ scheme satisfies security if for all security parameter $\lambda$, the following holds: For any PPT$_{\secparam}$ adversary $\A$, there exists a negligible function $\negl$ such that
    \[\adv_{\A, \fhe} := 2 \cdot \lvert 1/2 - \Pr[\expt_{\A, \fhe}(1^\lambda) = 1]\rvert < \negl(\lambda),\]
where the experiment $\expt_{\A, \fhe}$ is defined as follows: \vspace{0.2em}

    $\expt_{\A, \fhe}(1^\lambda): $\vspace{-0.7em}
    \begin{enumerate}
        \item On input the security parameter $1^\lambda$, the challenger runs $(\pk, \sk) \gets \fhe.\setup(1^\lambda)$, and $\ct \gets \fhe.\enc(\pk, b)$ for $b \rsmpl \{0, 1\}$. It provides $(\pk, \ct)$ to $\A$.
        
        \item $\A$ outputs a guess $b'$. The experiment outputs $1$ if and only if $b = b'$.
    \end{enumerate}

\end{definition}

\begin{remark}\label{rem:sec}
    The security definition above is in terms of PPT$_{\secparam}$ adversaries and $\negl$ advantage, but it can be lifted to a $(\calT,\epsilon)$ setting similar to Remark \ref{rem:indist} as follows: We say that a $\fhe$ scheme is $(\calT,\epsilon)$-secure if there exists an integer $\secparam_0$, such that for all $\secparam > \secparam_0$, for any non-uniform probabilistic $\calT(\secparam)$ time adversary, the advantage $\adv_{\A, \fhe}$ of the above experiment is at most $\epsilon(\lambda)$.
\end{remark}

\paragraph{Additional properties.} Our construction of robust streaming algorithms will be based on an $\fhe$ scheme satisfying a few additional properties. We refer to this type of $\fhe$ scheme as a {\it pseudorandom $\fhe$ ($\pfhe$) scheme}.

\begin{definition}[Pseudorandom $\fhe$ ($\pfhe$)]\label{def:PFHE}
    Let $\fhe$ be a fully homomorphic encryption scheme \emph{(Definition \ref{def:FHE})}. We call $\fhe$ a $(\calT, \epsilon)$-pseudorandom $\fhe$ scheme if for all security parameters $\lambda$, it satisfies the following additional properties:

    \begin{itemize}[leftmargin=1em]
    \item $\fhe.\eval$ is deterministic.

        \item \emph{(Pseudorandom $\pk$)}. There exists an explicit distribution $\D_{\tilde{\pk}(\lambda)}$, such that the following two distributions are $(\calT, \epsilon)$-indistinguishable:
        
        \[\{\pk : (\pk, \sk) \gets \fhe.\setup(1^\lambda)\} \approx_{(\calT, \epsilon)} \D_{\tilde{\pk}(\lambda)}\]

        \item \emph{(Pseudorandom $\ct$)}. There exists an explicitly distribution $\D_{\tilde{\ct}(\lambda)}$, such that the following two distributions are $(\calT, \epsilon)$-indistinguishable:
        \[\left\{ \ct: \begin{aligned}
            (\pk, \sk) &\gets \fhe.\setup(1^\lambda) \\
            \mu &\in \{0, 1\} \\
            \ct &\gets \fhe.\enc(\pk, \mu) \\
        \end{aligned}\right\} \approx_{(\calT, \epsilon)} \D_{\tilde{\ct}(\lambda)}\]

        \item \emph{($\Z$ Linear Homomorphism)}. Given $(\pk, \sk) \gets \fhe.\setup(1^\lambda)$, the set of all ciphertexts is a subset of a vector space over $\Z_q$ for some $q = q(\secparam)$.
        Let $\calM = \sum_{i \in [k]}x_i\cdot \ct_i \mod q$ be an arbitrary linear combination of ciphertexts. Denote $\mu_i \gets \fhe.\dec(\pk, \sk, \ct_i)$. Suppose $\calM$ satisfies:
        \begin{enumerate}[topsep=0.2em]
            \item $x_i \in \Z$ for all $i \in [k]$
            \item $k \cdot \max_{i \in [k]} x_i \ll q$
        \end{enumerate}
        then we have the following: $\lvert \calM \rvert \leq \poly(\secparam)$. Also, there exists a $\poly(\lambda)$-time decoding algorithm $\ldec$, such that $\Pr[\ldec(\pk, \sk, \calM) = \sum_{i \in [k]} x_i \cdot \mu_i] =1$.
    \end{itemize}

    As in a standard $\fhe$ scheme, we also require $\pfhe$ to satisfy compactness, correctness, and security.
    
\end{definition}

There exist FHE schemes based on LWE and Ring-LWE \cite{C:GenSahWat13,BGV12,BV11F} assumptions. We will focus on the GSW scheme in our streaming construction and the BV scheme in the distributed setting. Both of them satisfy the above definition (or a slight variant) of $\pfhe$.

\subsection{Streaming Algorithms}\label{sec:prelim-stream}

In this section, we recall the streaming model, the structure of a streaming algorithm, and the formal definition of white-box adversarial robustness.

\paragraph{Streaming Model.} The streaming model captures key resource requirements of algorithms for database, machine learning, and network tasks, where the size of the data is significantly larger than the available storage. This model was formalized by Alon, Matias, and Szegedy \cite{10.1145/237814.237823} as a vector undergoing additive {\it updates} to its coordinates. In this model, the input data takes the form of a sequence of updates. These updates assume the existence of a fixed-length underlying vector which represents the current dataset, and dynamically modify this vector one coordinate at a time. 

Given a dimension parameter $n \in \N$, the model behaves as follows:

\begin{itemize}[topsep=0.3em, itemsep=0.3em]
    \item At the beginning of the stream, an underlying vector $\vec{x}$ is initialized to $0^n$.

    \item The input stream is a sequence of $T \leq \poly(n)$ additive updates $(x_1, \cdots, x_T)$. Each update $x_t$ for $t \in [T]$ is a tuple $x_t := (i_t, \Delta_t)$, where $i_t \in [n]$ denotes a coordinate of $\vec{x}$ and $\Delta_t$ is an arbitrary value. 
    
    $x_t := (i_t, \Delta_t)$ is interpreted as updating $\vec{x}_{i_t} \gets \vec{x}_{i_t} + \Delta_t$. Thus at any time $t' \in [T]$ during the stream, $\vec{x}$ is the accumulation of all past updates, with its $i$-th coordinate for $i \in [n]$: $\vec{x}_i = \sum_{t \leq t'} \mathds{1}(i_t = i_{t_0})\cdot \Delta_t.$
\end{itemize}

A streaming algorithm receives the stream of updates and uses limited memory to compute a function of $\vec{x}$. We stress that during the stream, the underlying vector $\vec{x}$ is not explicitly available to the algorithm, but is instead implicitly defined by the sequence of past updates. 

In this work, we consider a bounded integer stream, which assumes that throughout the stream, $\vec{x}$ is promised to be in the range $\{-N, -N+1,\cdots, N-1, N\}^n$ for some integer $N \in \poly(n)$. 

\paragraph{Streaming Algorithm.}

 A streaming algorithm maintains some internal state during the stream, which evolves upon receiving new updates. It takes one pass over all updates and then uses its maintained state to compute a function of the underlying dataset $\vec{x}$. In the streaming model, the dimension parameter $n$ (the length of $\vec{x}$) is typically large. Therefore, a streaming algorithm usually optimizes the size of the state so that it takes a sublinear number of bits of space in $n$ to represent.

 We define the general framework for streaming algorithms as follows:

\begin{definition}[Streaming Algorithm]\label{def:SA}
    Given a query function ensemble $\mathcal{F} = \{f_n : \Z^n \to X\}_{n \in \N}$ for some domain $X$, a streaming algorithm $\sa$ that computes $\mathcal{F}$ contains a tuple of operations $\sa$ $=$ $(\sa.\setup$, $\sa.\update, \sa.\report)$, with the following properties:

    \begin{itemize}
        \item $\sa.\setup(n)$ : On input a dimension parameter $n$, the setup operation initializes an internal data structure $\ds$ within the memory of $\sa$.

        \item $\sa.\update(x_t)$ : On input an additive update $x_t := (i_t, \Delta_t)$, the update operation modifies $\ds$.

        \item $\sa.\report() \to r$: The report operation uses $\ds$ to compute a query response $r \in X$. 
        
        A response $r$ is said to be correct if $r = f_n(\vec{x})$, where $\vec{x} \in \Z^n$ is the accumulation of all past updates such that $\vec{x}_i = \sum_t \mathds{1}(i_t = i)\cdot \Delta_t$ for $i \in [n]$.
    \end{itemize}
\end{definition}


In this work, we only consider exact computations of query functions ($r = f_n(\vec{x})$), though the above definition can also be generalized to approximation problems ($r \approx f_n(\vec{x})$).

\paragraph{White-box Adversarial Stream.}
In the conventional {\it oblivious} stream model, the sequence of updates may be chosen adversarially, but is assumed to be chosen independently of the randomness of a streaming algorithm $\sa$. In contrast, in the white-box adversarial streaming model, a sequence of stream updates $(x_1, \cdots, x_T)$ is adaptively generated by an adversary who sees the {\it full internal state} of $\sa$ at all times, including $\ds$ and any other parameters and randomness used by $\sa$. We say that a streaming algorithm is {\it white-box adversarially robust} if it guarantees to output correct responses even against such powerful adversaries. This is described as the following game:

\begin{definition}[White-Box Adversarially Robustness]\label{def:wbRobust}
    We say a streaming algorithm $\sa$ is white-box adversarially robust (WAR) if for all dimension parameters $n \in \N$, the following holds: For any PPT adversary $\A$,
    the following experiment $\expt_{\A, \sa}(n)$ outputs $1$ with probability at most $\negl(n)$: 
    
    $ $\vspace{-0.5em}
    
    $\expt_{\A, \sa}(n): $
    \begin{enumerate}
        \item The challenger and $\A$ agree on a query function ensemble $\mathcal{F} = \{f_n : \Z^n \to X\}_{n \in \N}$ for some domain $X$. On input a dimension parameter $n$, let $f_n \in \mathcal{F}$ be the query function of the experiment.

        \item The challenger runs $\sa.\setup(n)$. It then provides all internal states of $\sa$ to $\A$, including $\ds$ and any other parameters and randomness previously used by $\sa$.

        \item For some $T \in \poly(n)$, at each time $t \in [T]$: 
        \begin{itemize}[label=-, leftmargin=*]
            \item $\A$ outputs an adaptive update $x_t := (i_t, \Delta_t)$ for $t \in [T]$. Also, $\A$ may issue a query. 
        
            \item The challenger runs $\sa.\update(x_t)$, and outputs a response $r_t \gets \sa.\report()$ if it is queried. 
            
            The challenger then provides all internal states of $\sa$ to $\A$, including $\ds$ and any other parameters and randomness used by $\sa$.
        \end{itemize}

        \item The experiment outputs $1$ if and only if at some time $t\in [T]$ in the stream, $\sa$ output an incorrect response $r_t \neq f_n(\vec{x})$, where $\vec{x}$ is the underlying data vector with $\vec{x}_i = \sum_{j \in [t]} \mathds{1}(i_j = i)\cdot \Delta_j$ for $i \in [n]$.
    \end{enumerate}
\end{definition}
\section{Streaming $K$-Sparse Recovery}

\paragraph{Sparse Recovery Problem.}

We study the sparse recovery problem for vectors in white-box adversarial streaming settings. There exist deterministic (thus WAR) streaming algorithms that recover $k$-sparse vectors {\it assuming that the input vector is $k$-sparse }. However, their outputs can be arbitrary when the input violates such an assumption. In contrast, our algorithm has the crucial property that it can detect if the input violates the sparsity constraint. We describe the problem as follows:

\begin{definition}[$k$-Sparse Recovery]\label{def:rec1}
    Given a sparsity parameter $k$ and an input vector $\vec{x} \in \mathbb{Z}^n$, output $
    \vec{x}$ if $\lVert \vec{x}\rVert_0 \leq k$ and report $\bot$ otherwise.
\end{definition}

Here and throughout, for the sparsity to make sense we always assume $k \leq n$, though $k$ can be a function in $n$.
Before tackling this problem, we first consider a relaxation for it, which only guarantees correct recovery when the input vector $\vec{x}$ is $k$-sparse. In other words, we have no constraint on the function output when $\lVert \vec{x}\rVert_0 > k$. 

\begin{definition}[Relaxed $k$-Sparse Recovery]\label{def:rrec1}
    Given a sparsity parameter $k$ and an input vector $\vec{x} \in \mathbb{Z}^n$, output $\vec{x} \in \Z^n$ if $\lVert \vec{x}\rVert_0 \leq k$.
\end{definition}

As mentioned earlier, there exists a space- and time-efficient, deterministic streaming algorithm that solves the relaxed $k$-sparse recovery problem, which we will refer to as $\sa_0$ in our algorithm. With $\sa_0$ and a $\pfhe$, we are now ready to construct a WAR streaming algorithm for the $k$-sparse recovery problem.

\begin{const}\label{const1}

    Let $\sa_0$ be an algorithm for the Relaxed $k$-Sparse Recovery problem. Let $\pfhe$ be a pseudorandom $\fhe$. Let $\lambda = \calS(n)$ be the security parameter of $\pfhe$, for some function $\calS$ in $n$ that we will specify later. Choose $q \in \poly(n)$ with $q \gg n\cdot N$ to be the modulus of $\pfhe$ ciphertexts, where $N \in \poly(n)$ is a bound on the stream length. All computations will be modulo $q$.
    Our construction is as follows:
    \begin{itemize}[leftmargin=1em]
        \item $\sa.\setup(n)$:
        
        Run $\sa_0.\setup(n)$.
        
        Define $\ell = \ceil{\log n}$. 
        For each $i \in [n]$, define a circuit $C_{i}: \{0, 1\}^\ell \to \{0, 1\}$, such that:      
        \[C_{i}(\mu_1, \mu_2, \cdots, \mu_\ell) = \begin{cases}
            1 & \text{if $(\mu_1,\mu_2, \cdots, \mu_\ell)$ is the bit representation of $i$} \\
            0 & \text{otherwise}.
        \end{cases}\]
        
        Take $\D_{\tilde{\pk}(\lambda)}$ to be the distribution of pseudo-public key in Definition \ref{def:PFHE}. Sample $\tilde{\pk} \gets \D_{\tilde{\pk}(\lambda)}$.
        Similarly, take $\D_{\tilde{\ct}(\lambda)}$ to be the distribution of pseudo-ciphertext in Definition \ref{def:PFHE}. Sample $\tilde{\ct_1}, \tilde{\ct_2}, \cdots, \tilde{\ct_\ell} \gets \D_{\tilde{\ct}(\lambda)}$. Store $(\tilde{\pk}, \tilde{\ct_1}, \cdots, \tilde{\ct_\ell})$ and $sketch := 0$.
        
        \item $\sa.\update(i_t, \Delta_t)$: 
        
        Run $\sa_0.\update(i_t, \Delta_t)$. 
        
        Evaluate $\hat{\ct_{i_t}} \gets \pfhe.\eval(\tilde{\pk}, C_{i_t}, \tilde{\ct_1}, \tilde{\ct_2}, \cdots, \tilde{\ct_\ell})$. Then update \[sketch \gets sketch + \Delta_t \cdot \hat{\ct_{i_t}}.\]

        \item $\sa.\report()$: 
        
        Let $\vec{x}' \gets \sa_0.\report()$. 
        
        If $\lVert \vec{x}'\rVert_0 > k$, output $\bot$. Otherwise, evaluate $\hat{\ct_i} \gets \pfhe.\eval(\tilde{\pk}, C_{i}, \tilde{\ct_1}, \tilde{\ct_2}, \cdots, \tilde{\ct_\ell})$ for all $i \in [n]$. Then compute \[hash \gets \sum_{i \in [n]} \hat{\ct_i}\cdot \vec{x'}_i.\]
        Output $\vec{x}'$ if $hash = sketch$, and output $\bot$ otherwise.
        \end{itemize}
    \end{const}

\paragraph{Robustness.}

Recall that in the $\fhe$ Preliminary section, we define the $(\calT, \epsilon)$-security of a $\fhe$ scheme, where $\calT$ and $\epsilon$ are functions in $\secparam$. Also, in Construction \ref{const1}, we set $\secparam = \calS(n)$ for some function $\mathcal{S}$. We will show that given appropriate relationships between these functions, the $\sa$ scheme in Construction \ref{const1} is WAR. We state our main theorem as follows:

\begin{theorem}[WAR of Construction \ref{const1}]\label{thm:stream}
    Suppose $\sa_0$ is a deterministic, polynomial time streaming algorithm for Relaxed $k$-Sparse Recovery, and $\pfhe$ is a pseudorandom $\fhe$ scheme with $(\calT, \epsilon)$-security. Let  $\secparam = \calS(n)$ be the $\pfhe$ security parameter for some function $\mathcal{S}$. 
    
    If for all $n \in \N$, functions $\calT, \epsilon, $ and $\calS$ satisfy: 
    
    \begin{itemize}[topsep=0.2em, itemsep=0.1em, label=\textperiodcentered]
        \item $\calT(\mathcal{S}(n)) = \Omega(\poly n)$

        \item $\epsilon(\mathcal{S}(n)) \leq \negl(n)$ for some negligible function $\negl$.
    \end{itemize}
    
    then the $\sa$ scheme in \emph{Construction \ref{const1}} is a WAR streaming algorithm for the $k$-Sparse Recovery problem.
  
\end{theorem}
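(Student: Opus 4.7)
The plan is to reduce white-box robustness of Construction~\ref{const1} to the $(\calT,\epsilon)$-security of $\pfhe$ by a short hybrid argument, with the $\Z$-linear homomorphism property supplying the final contradiction. The first observation is that the algorithm can output an incorrect response only when the stream defines an underlying $\vec{x}$ with $\|\vec{x}\|_0 > k$ while $\sa_0.\report()$ returns some $\vec{x}' \neq \vec{x}$ with $\|\vec{x}'\|_0 \le k$ that nevertheless satisfies $\sum_{i \in [n]} \hat{\ct_i}\,\vec{x}'_i \equiv \sum_{i \in [n]} \hat{\ct_i}\,\vec{x}_i \pmod q$: indeed, if $\|\vec{x}\|_0 \le k$, then the Relaxed $k$-Sparse Recovery guarantee forces $\vec{x}' = \vec{x}$ deterministically and the hash check trivially passes. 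Without loss of generality I will insert a magnitude check in $\sa.\report()$ so that when not rejected $\vec{x}'$ lies in $\{-N,\ldots,N\}^n$. Call this failure event $E$, set $\gamma := \Pr[E]$ in the real experiment, and suppose for contradiction that $\gamma$ is non-negligible in $n$; the goal is to convert this into a $\poly(n)$-time $\pfhe$ distinguisher, contradicting $\calT(\calS(n)) = \Omega(\poly n)$ and $\epsilon(\calS(n)) \le \negl(n)$.

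I set up a short hybrid sequence. Let the reduction sample $m^* \rsmpl [n]$ with binary expansion $(m^*_1,\ldots,m^*_\ell)$, $\ell = \lceil \log n \rceil$. Hybrid $H_0$ is the real WAR experiment; $H_1$ replaces $\tilde{\pk} \gets \D_{\tilde{\pk}(\lambda)}$ by an honest $\pk$ from $\pfhe.\setup(1^\lambda)$; and $H_2$ further replaces each $\tilde{\ct_j} \gets \D_{\tilde{\ct}(\lambda)}$ by $\ct_j \gets \pfhe.\enc(\pk, m^*_j)$ for $j \in [\ell]$. The entire reduction runs $\A$, $\sa_0$, and $\poly(n)$ many $\pfhe$ operations, so its total runtime is $\poly(n) \le \calT(\calS(n))$. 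By the pseudorandom $\pk$ property and an $\ell$-fold hybrid on the pseudorandom $\ct$ property, any polynomial-time predicate on the transcript has probabilities within $(\ell+1)\,\epsilon(\calS(n))$ in $H_0$ and $H_2$; in particular this applies both to $E$ itself and to the refined event $E \wedge (\vec{x}_{m^*} \neq \vec{x}'_{m^*})$. In $H_0$ the draw of $m^*$ is independent of $\A$'s view (there $\tilde{\ct_j}$'s are sampled from $\D_{\tilde{\ct}(\lambda)}$), so conditional on $E$ the random $m^*$ hits a coordinate of disagreement with probability at least $1/n$, giving $\Pr[E \wedge \vec{x}_{m^*} \neq \vec{x}'_{m^*} \mid H_0] \ge \gamma/n$. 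Transporting through the hybrids: $\Pr[E \wedge \vec{x}_{m^*} \neq \vec{x}'_{m^*} \mid H_2] \ge \gamma/n - (\ell+1)\,\epsilon(\calS(n))$.

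It remains to rule out the joint event in $H_2$, and this is where the FHE structure is genuinely used. In $H_2$ each $\hat{\ct_i} = \pfhe.\eval(\pk, C_i, \ct_1, \ldots, \ct_\ell)$ is an honestly evaluated ciphertext of the bit $C_i(m^*_1,\ldots,m^*_\ell) = \mathds{1}(i = m^*)$ by FHE correctness. On $E$, the $\Z_q$-element $\calM := \sum_{i \in [n]} \hat{\ct_i}(\vec{x}_i - \vec{x}'_i)$ equals $0$; the coefficients are integers of magnitude at most $2N$ with $n$ terms, and by the construction's choice $q \gg nN$ we have $n \cdot \max_i |\vec{x}_i - \vec{x}'_i| \le 2nN \ll q$, satisfying the hypothesis of $\Z$-linear homomorphism. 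Applying the deterministic decoder $\ldec(\pk,\sk,\calM)$ twice---once viewing $\calM$ via the decomposition with coefficients $\vec{x}_i - \vec{x}'_i$, and once via the trivial all-zero decomposition (both admissible)---the decoded value must simultaneously equal $\sum_i \mathds{1}(i=m^*)(\vec{x}_i - \vec{x}'_i) = \vec{x}_{m^*} - \vec{x}'_{m^*}$ and $0$. Hence $\vec{x}_{m^*} = \vec{x}'_{m^*}$, i.e.\ $\Pr[E \wedge \vec{x}_{m^*} \neq \vec{x}'_{m^*} \mid H_2] = 0$. Combining with the previous paragraph: $\gamma/n \le (\ell+1)\,\epsilon(\calS(n))$, whence $\gamma \le n(\lceil \log n \rceil + 1)\,\epsilon(\calS(n)) = \negl(n)$, a contradiction.

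The main technical obstacle is the deterministic-decoding step of the last paragraph: we exploit that $\ldec$ is a function of $\calM$ alone, so any two admissible integer-coefficient decompositions of the same $\calM$ must yield the same decoded value, and we must verify the coefficient-size hypothesis $n \cdot \max \ll q$ for the chosen modulus. Everything else---the three-step hybrid across $\pk$ and $\ell = O(\log n)$ ciphertexts, the pigeonhole over $m^*$, and the efficiency bookkeeping that keeps the reduction inside $\calT(\calS(n))$---is essentially routine.
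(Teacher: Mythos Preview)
Your proposal is correct and follows essentially the same route as the paper: a hybrid argument switching $\tilde{\pk}$ and the $\ell$ pseudo-ciphertexts to honest encryptions of the bits of a hidden random index $m^*$, a $1/n$ pigeonhole over $m^*$ (valid because in $H_0$ the index is independent of $\A$'s view), and a final contradiction via $\Z$-linear homomorphism. The only cosmetic differences are that the paper spreads this over five named hybrids (inserting an explicit ``$\A$ extracts $\vec{x},\vec{x}'$'' step) and decodes $hash$ and $sketch$ separately rather than your difference $\calM=0$, and that you add an explicit magnitude check on $\vec{x}'$ to justify the coefficient bound---a point the paper leaves implicit.
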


    \begin{remark}
    Roughly, we need $\poly(n) \subseteq \calT(\mathcal{S}(n))$ to convert between the runtime of a $\pfhe$ adversary and a $\sa$ adversary. We also need $\epsilon(\mathcal{S}(n)) \leq \negl(n)$ to convert between the advantage of a $\pfhe$ adversary and a $\sa$ adversary.
    \end{remark}

    \begin{proof}[Proof (of Theorem \ref{thm:stream})]

We first show that at any time during the stream, if the underlying vector $\vec{x}$ is $k$-sparse, then the query response of $\sa$ is correct. 

\begin{lemma}\label{lem:sparseCase}
    Suppose $\pfhe$ satisfies \emph{Definition \ref{def:PFHE}} and $\sa_0$ solves the Relaxed $k$-Sparse Recovery problem. Then the following holds: at any time $t \in [T]$, if the current underlying vector $\vec{x}$ satisfies $\lVert \vec{x}\rVert_0 \leq k$, then $\sa.\report$ correctly outputs $\vec{x}$.
\end{lemma}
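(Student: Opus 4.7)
The plan is to show the claim follows from three facts: (i) $\pfhe.\eval$ is deterministic, so the evaluated ciphertexts $\hat{\ct_i}$ are a fixed function of $(\tilde{\pk}, \tilde{\ct_1},\dots,\tilde{\ct_\ell})$ and the index $i$; (ii) the $sketch$ variable accumulates linearly in the stream; and (iii) $\sa_0$ solves Relaxed $k$-Sparse Recovery, so when the underlying vector is $k$-sparse it is recovered exactly.

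First I would unroll the definition of $sketch$. After processing updates $(i_1,\Delta_1),\dots,(i_t,\Delta_t)$, we have
\[
sketch \;=\; \sum_{s \le t} \Delta_s \cdot \hat{\ct_{i_s}} \pmod q.
\]
Because $\pfhe.\eval$ is deterministic and applied to the fixed tuple $(\tilde{\pk},\tilde{\ct_1},\dots,\tilde{\ct_\ell})$, the value $\hat{\ct_i}$ produced by $\pfhe.\eval(\tilde{\pk}, C_i, \tilde{\ct_1},\dots,\tilde{\ct_\ell})$ is the same whenever it is evaluated for the same index $i$. Grouping the sum over $s$ by the coordinate $i=i_s$ and using $\vec{x}_i = \sum_s \mathds{1}(i_s=i)\cdot\Delta_s$ then gives
\[
sketch \;=\; \sum_{i \in [n]} \vec{x}_i \cdot \hat{\ct_i} \pmod q.
\]
I would also note that no modular wraparound corrupts the identity: since $|\vec{x}_i|\le N$ and $n\cdot N \ll q$, the coefficients lie in a safe range for the $\Z$-linear homomorphism of $\pfhe$ (Definition \ref{def:PFHE}); but really we only need equality as ciphertexts modulo $q$, which holds tautologically.

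Next I would invoke the guarantee of $\sa_0$: since $\|\vec{x}\|_0 \le k$ at time $t$, the relaxed recovery output satisfies $\vec{x}' = \vec{x}$. In particular $\|\vec{x}'\|_0 \le k$, so the sparsity check in $\sa.\report$ passes and the algorithm computes
\[
hash \;=\; \sum_{i \in [n]} \hat{\ct_i}\cdot \vec{x}'_i \;=\; \sum_{i \in [n]} \hat{\ct_i}\cdot \vec{x}_i \;=\; sketch.
\]
Thus $hash = sketch$ and $\sa.\report$ outputs $\vec{x}' = \vec{x}$, which is the correct answer.

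There is essentially no obstacle here: the only substantive ingredients are the determinism of $\pfhe.\eval$ (so that the same coordinate $i$ contributes the same $\hat{\ct_i}$ during updates and during the final hash computation) and the correctness of $\sa_0$ on sparse inputs. The hard direction — ruling out a false positive when $\|\vec{x}\|_0 > k$ or when an adaptive adversary tries to force $\sa_0$ to output some $\vec{x}' \neq \vec{x}$ with $hash = sketch$ — is a separate collision-resistance argument reducing to $\pfhe$ security, and is not needed for this lemma.
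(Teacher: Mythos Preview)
Your proof is correct and follows essentially the same approach as the paper: both invoke the correctness of $\sa_0$ on $k$-sparse inputs to get $\vec{x}'=\vec{x}$, then use the determinism of $\pfhe.\eval$ to conclude that $sketch$ and $hash$ are the same linear combination of the same ciphertexts, hence equal. Your write-up is simply more explicit about unrolling $sketch$ and grouping terms by coordinate, which the paper leaves implicit.
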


\begin{proof}
    By the correctness of the $\sa_0$ scheme, when $\lVert \vec{x}\rVert_0 \leq k$, $\sa_0.$ $\report$ outputs $\vec{x}'=\vec{x}$. Since Definition \ref{def:PFHE} restricts $\pfhe.\eval$ to be deterministic, we essentially have $hash = \sum_{i \in [n]} \hat{ct}_i \cdot \vec{x'}_i \hspace{0.3em} \text{ and } \hspace{0.3em} sketch = \sum_{i \in [n]} \hat{ct}_i \cdot \vec{x}_i$ for the same set of ciphertexts $\hat{ct}_i$.
Thus $hash = sketch$ follows from $\vec{x}'=\vec{x}$. In this case, $\sa.\report$ correctly outputs $\vec{x}$ as desired.

\end{proof}

It remains to show the correctness of responses in the case when the underlying data vector is denser than $k$. Our proof proceeds via a sequence of hybrid experiments between an adversary $\A$ and a challenger.

\begin{itemize}[leftmargin=*]
    \item $\hybrid_0$: This is a real WAR experiment $\expt_{\A, \sa}(n)$ from Definition \ref{def:wbRobust}, where the challenger runs the $\sa$ scheme from Construction \ref{const1}. On input a dimension parameter $n$, the challenger samples $\tilde{\pk} \gets \D_{\tilde{\pk}(\lambda)}$ and $\tilde{\ct_1}, \tilde{\ct_2}, \cdots, \tilde{\ct_{\ceil{\log n}}}  \gets \D_{\tilde{\ct}(\lambda)}$ from the distributions of pseudo-public key and pseudo-ciphertext (Definition \ref{def:PFHE}) and provides them to $\A$. 
    
    $\A$ outputs $T \in \poly(n)$ -many adaptive updates $x_t:=(i_t, \Delta_t)$ upon seeing past internal states of $\sa$. At any time $t \in [T]$, these updates implicitly define an underlying vector $\vec{x}$, whose $i$-th coordinate $\vec{x}_i = \sum_t \mathds{1}(i_t = i)\cdot \Delta_t$ for $i \in [n]$.

    \item $\hybrid_1$: This is the same experiment as $\hybrid_0$, except that $\A$ now runs some additional extraction on itself: 
    \begin{itemize}[label=-]
        \item At each time $t \in [T]$, $\A$ computes the underlying vector $\vec{x}$ defined by all of its past updates. 

        \item $\A$ runs an additional instance of the $\sa_0$ scheme from Definition \ref{def:rrec1} (which solves the Relaxed k-Sparse Recovery problem): 
        
        \begin{enumerate}[itemsep=0.4em]
            \item Before $\A$ generates its first update, it initializes its $\sa_0$ instance to be in the same state as the $\sa_0$ scheme run by the challenger. (i.e., $\A$ runs $\sa_0.\setup(n)$ using the same set of randomness used by the challenger.)
            
            \item At each time $t \in [T]$, after $\A$ generates an update $x_t$, it calls \newline $\sa_0.\update(x_t)$ and then $\sa_0.\report()$ to get a query response $\vec{x}'$.
        \end{enumerate}
    \end{itemize}
\end{itemize}

We note that if $\A$ runs in $\poly(n)$ time in $\hybrid_0$, then it still runs in $\poly(n)$ time in $\hybrid_1$ since computing $\vec{x}$ is trivial and $\sa_0$ runs in poly-time.

For an adversary $\A$, we write $\hyb_i(\A)$ to denote the output of $\hybrid_i$.

\begin{lemma}\label{lem:hyb01}
    For any adversaries $\A$, $\lvert \Pr[\hyb_0(\A) = 1] - \Pr[\hyb_1(\A) = 1] \rvert = 0$.
\end{lemma}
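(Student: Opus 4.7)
The plan is to argue that the two hybrids induce \emph{identical} joint distributions over (i) the challenger's randomness, (ii) the transcript of messages exchanged between $\A$ and the challenger, and (iii) the sequence of outputs produced by $\sa.\report$. Since the event that $\expt_{\A,\sa}(n)$ outputs $1$ is a deterministic function of this data (namely: does there exist a time $t$ at which the reported answer differs from $f_n(\vec{x})$?), identical distributions over the transcript and outputs immediately give $\Pr[\hyb_0(\A)=1] = \Pr[\hyb_1(\A)=1]$.

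To carry this out, I will first observe that the challenger's code in $\hybrid_1$ is byte-for-byte identical to that in $\hybrid_0$: it still samples $\tilde{\pk}, \tilde{\ct_1}, \ldots, \tilde{\ct_{\ceil{\log n}}}$ from the stated pseudo-key / pseudo-ciphertext distributions, runs $\sa_0.\setup(n)$ and $\sa.\setup(n)$, and processes updates exactly as in Construction \ref{const1}. Thus, conditioned on any fixed sequence of updates $(x_1,\ldots,x_T)$, the sequence of internal states of $\sa$ (and therefore the messages the challenger sends to $\A$) is identically distributed in both hybrids.

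Next, I will check that $\A$'s message schedule is unaffected by the extra bookkeeping. The only difference in $\hybrid_1$ is that, after choosing $x_t$ in exactly the way it does in $\hybrid_0$, the adversary additionally (a) writes down the implicit vector $\vec{x}$ computable from $(x_1,\ldots,x_t)$, and (b) advances a private copy of $\sa_0$ that was initialized from the same deterministic setup $\sa_0.\setup(n)$ used by the challenger; this copy is fed the same updates and queried for $\vec{x}'$. Crucially, this post-hoc computation (i) does not touch the challenger's state, and (ii) is not an input to the procedure by which $\A$ selects its next update $x_{t+1}$ in $\hybrid_0$. Hence, coupling the randomness of the two experiments, the sequence $(x_1,\ldots,x_T)$ produced by $\A$ and the responses $r_t$ produced by $\sa.\report$ are pointwise identical across the two hybrids.

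Combining these two observations, the indicator of the experiment outputting $1$ is the same random variable in $\hybrid_0$ and $\hybrid_1$, so the two probabilities agree exactly. There is no real obstacle here; the one thing to be explicit about is that $\sa_0$ is deterministic (as assumed in Theorem \ref{thm:stream}), so $\A$ can indeed reproduce the challenger's $\sa_0$ state without needing any shared randomness beyond what the white-box model already grants it. This hybrid is purely a notational bridge that lets subsequent hybrids refer to the adversary-side $\vec{x}'$ when we later invoke pseudorandomness of ciphertexts and the $\Z$-linear homomorphism of $\pfhe$.
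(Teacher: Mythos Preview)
Your proposal is correct and takes essentially the same approach as the paper: the extra bookkeeping that $\A$ performs in $\hybrid_1$ does not influence the challenger's state, the update sequence, or the reported answers, so the output distributions of the two experiments are identical. The paper's proof is the one-sentence version of exactly this observation.
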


\begin{proof}
    This just follows from the fact that the additional extractions run by $\A$ do not affect the output of the experiments.
\end{proof}

\begin{itemize}[leftmargin=*]
    \item $\hybrid_2$: Same as $\hybrid_1$, except that at the beginning of the experiment, the challenger samples a random coordinate $m \rsmpl [n]$ which will be hidden from $\A$. The experiment outputs $1$ if at some time $t \in [T]$ both of the following holds:

    \begin{enumerate}
        \item $\sa$ outputs an incorrect response $r_t \neq f_n(\vec{x})$ (i.e., the same condition for $\hyb_1(\A) = 1$), 

        \item The $m$-th coordinates of $\vec{x}$ and $\vec{x'}$ extracted by $\A$ satisfy $\vec{x}_m \neq \vec{x'}_m$,
        
    \end{enumerate}

    \end{itemize}

We show that with the additional success condition, the success probability of $\hybrid_2$ drops by at most a $1/n$ multiplicative factor compared to $\hybrid_1$.

\begin{lemma}\label{lem:hyb12}
    For any adversaries $\A$, $\Pr[\hyb_2(\A) = 1] \geq \Pr[\hyb_1(\A) = 1]/n$.
\end{lemma}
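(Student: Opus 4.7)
The claim is essentially an averaging argument exploiting that the random coordinate $m$ is information-theoretically hidden from $\A$ in $\hybrid_2$, so its distribution conditioned on the adversary's entire view is still uniform over $[n]$.

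The plan is to condition on $\hyb_1(\A)=1$ and show that the additional event $\vec{x}_m \neq \vec{x}'_m$ holds with probability at least $1/n$. Concretely, I would fix all randomness used by both the challenger and $\A$ in the $\hybrid_1$ execution except for the freshly sampled $m$. Under this conditioning, both the underlying vector $\vec{x}^{(t)}$ implicitly defined by the stream and the vector $\vec{x}'^{(t)}$ extracted by $\A$ via its internal copy of $\sa_0$ are deterministic functions of time $t$, and the event $\hyb_1(\A)=1$ is likewise determined. Suppose this event holds and let $t^\star$ be the first time at which $\sa$ returns an incorrect response $r_{t^\star} \neq f_n(\vec{x}^{(t^\star)})$.

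Next I would argue that at time $t^\star$ the vectors $\vec{x}^{(t^\star)}$ and $\vec{x}'^{(t^\star)}$ must differ in at least one coordinate. By Lemma \ref{lem:sparseCase}, an incorrect response cannot occur when $\lVert \vec{x}^{(t^\star)}\rVert_0 \leq k$, so $\lVert \vec{x}^{(t^\star)}\rVert_0 > k$. In this case the correct output is $\bot$, so for $r_{t^\star}$ to be incorrect we must have $r_{t^\star} \neq \bot$. Inspecting $\sa.\report$ in Construction \ref{const1}, this means the reported vector is exactly $\vec{x}'^{(t^\star)}$ with $\lVert \vec{x}'^{(t^\star)}\rVert_0 \leq k$. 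Since $\lVert \vec{x}^{(t^\star)}\rVert_0 > k \geq \lVert \vec{x}'^{(t^\star)}\rVert_0$, the two vectors disagree, so the set $D := \{i \in [n] : \vec{x}^{(t^\star)}_i \neq \vec{x}'^{(t^\star)}_i\}$ satisfies $|D| \geq 1$.

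Finally, since $m$ is sampled uniformly from $[n]$ and, by construction, is never revealed to $\A$ (neither through the internal state of $\sa$ nor through any message of the challenger), $m$ is independent of all other randomness and thus of $D$. Therefore $\Pr[m \in D \mid \hyb_1(\A)=1] = \E[|D|/n \mid \hyb_1(\A)=1] \geq 1/n$. Whenever $m \in D$ at time $t^\star$, both success conditions of $\hybrid_2$ hold simultaneously at $t^\star$, so $\hyb_2(\A)=1$. Combining, $\Pr[\hyb_2(\A)=1] \geq \Pr[\hyb_1(\A)=1] \cdot \Pr[m \in D \mid \hyb_1(\A)=1] \geq \Pr[\hyb_1(\A)=1]/n$, which is the desired bound. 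There is no real obstacle here; the only subtlety is verifying that $m$ is truly independent of $\A$'s view, which is immediate from the definition of $\hybrid_2$ since $m$ is sampled by the challenger and never transmitted or used to alter any value communicated to $\A$.
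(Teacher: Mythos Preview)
Your proposal is correct and follows essentially the same route as the paper: you use Lemma~\ref{lem:sparseCase} and the structure of $\sa.\report$ to show that at any time of an incorrect response the extracted $\vec{x}'$ must differ from $\vec{x}$ (this is the paper's Claim~\ref{clm:hybClaim}), and then conclude via the independence of the hidden uniform index $m$ that $\Pr[\hyb_2(\A)=1\mid\hyb_1(\A)=1]\geq 1/n$. Your version is slightly more explicit about fixing a witness time $t^\star$ and about the conditioning, but the argument is the same.
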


Before proving Lemma \ref{lem:hyb12}, we make the following observation:

\begin{claim}\label{clm:hybClaim}
    At any time $t \in [T]$ during the $\hybrid_2$ experiment, if $\sa$ outputs an incorrect response $r_t \neq f_n(\vec{x})$, and $\vec{x}$ and $\vec{x'}$ are the vectors extracted by $\A$ at time $t$, then $\vec{x'} \neq \vec{x}$.
\end{claim}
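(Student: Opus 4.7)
The plan is to do a simple case split on the sparsity of the current underlying vector $\vec{x}$ at time $t$, using the fact that the $\vec{x}'$ extracted by $\A$ in $\hybrid_1$ (and hence $\hybrid_2$) is \emph{exactly} the same as the $\vec{x}'$ computed inside the challenger's call to $\sa.\report()$. To see this, recall that $\sa_0$ is assumed to be a deterministic algorithm and $\A$ initializes its local copy of $\sa_0$ with the same randomness as the challenger's copy and feeds it the same sequence of updates $x_1,\ldots,x_t$; thus both copies are in identical states and produce the same output on $\report()$.

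With that pinned down, consider the two cases for $\lVert \vec{x}\rVert_0$. In the first case, $\lVert \vec{x}\rVert_0 \leq k$, Lemma \ref{lem:sparseCase} guarantees that $\sa.\report()$ outputs $\vec{x}$, so $r_t = \vec{x} = f_n(\vec{x})$, which contradicts the hypothesis that $r_t \neq f_n(\vec{x})$. Hence this case is vacuous.

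In the remaining case, $\lVert \vec{x}\rVert_0 > k$, the correct response is $f_n(\vec{x}) = \bot$, so $r_t \neq \bot$. Inspecting Construction \ref{const1}, the only way $\sa.\report()$ can return a value other than $\bot$ is to return $\vec{x}'$, and this branch is only reached when $\lVert \vec{x}'\rVert_0 \leq k$. Combining this with $\lVert \vec{x}\rVert_0 > k$ immediately yields $\vec{x}' \neq \vec{x}$, as desired.

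The only subtle point — and the one small obstacle — is justifying that $\A$'s extracted $\vec{x}'$ matches the one the challenger uses internally; everything else is a direct reading of the construction and Lemma \ref{lem:sparseCase}. This matching uses only that $\sa_0$ is deterministic (an explicit hypothesis of Theorem \ref{thm:stream}) and that the white-box model exposes all of the challenger's randomness to $\A$, so $\A$ can faithfully replay $\sa_0$.
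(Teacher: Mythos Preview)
Your proof is correct and follows essentially the same route as the paper's: both use Lemma~\ref{lem:sparseCase} to rule out the sparse case, then in the dense case observe that a non-$\bot$ response must equal the (at most $k$-sparse) $\vec{x}'$, hence differs from $\vec{x}$, and both invoke the determinism of $\sa_0$ to identify $\A$'s extracted $\vec{x}'$ with the challenger's. The paper's version is simply a more compressed statement of the same argument.
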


\begin{proof}
    By Lemma \ref{lem:sparseCase}, $r_t \neq f_n(\vec{x})$ can occur only when $\lVert \vec{x}\rVert_0 > k$ and $r_t \neq \bot$. In this case, by the construction of $\sa.\report$, $r_t$ should be a $k$-sparse vector, and thus different from $\vec{x}$. Also, using deterministic $\sa_0.\update$ and $\sa_0.\report$, $\A$ exactly extracts $\vec{x'} = r_t$, so $\vec{x'} \neq \vec{x}$.
\end{proof}

Lemma \ref{lem:hyb12} then follows from Claim \ref{clm:hybClaim}:

\begin{proof}[Proof (of Lemma \ref{lem:hyb12}).]
     Claim \ref{clm:hybClaim} states that $\vec{x'} \neq \vec{x}$, and in particular, $\vec{x'}$ and $\vec{x}$ disagree on at least one coordinate. Recall that the random coordinate $m$ is hidden from $\A$ at all times, so with probability at least $1/n$, $m$ collides with a coordinate where $\vec{x'}$ and $\vec{x}$ disagree. In this case, both conditions in $\hybrid_2$ are satisfied, so $\Pr[\hyb_2(\A) = 1 \mid \hyb_1(\A) = 1] \geq 1/n$.
\end{proof}

\begin{itemize}[leftmargin=*]
    \item $\hybrid_3$: Same as $\hybrid_2$, except that the challenger now generates an actual public key $\pk: (\pk, \sk) \gets \pfhe.\setup(1^\lambda)$ to replace $\tilde{\pk}$.

    \item $\hybrid_4$: Same as $\hybrid_3$, except that the challenger now uses the random coordinate $m \rsmpl [n]$ (whose bits are denoted as $m_1 m_2\cdots m_{\ceil{\log n}}$) that was sampled privately as the message in the ciphertexts: It generates $\ceil{\log n}$ actual ciphertexts $\ct_i \gets \pfhe.\enc(\pk, m_i)$ for $i \in [\ceil{\log n}]$ to replace $(\tilde{\ct_1}, \tilde{\ct_2}, \cdots, $ $\tilde{\ct_{\ceil{\log n}}})$. i.e., $(\ct_1, \ct_2, \cdots, \ct_{\ceil{\log n}})$ encrypts the bit representation of $m$.
    
\end{itemize}

\begin{lemma}\label{lem:hyb23}
    Given that $\pfhe$ satisfies \emph{Definition \ref{def:PFHE}}, for any PPT$_n$ adversaries $\A$, $\lvert \Pr[\hyb_2(\A) = 1] - \Pr[\hyb_3(\A) = 1] \rvert \leq \negl(n)$ for some negiligble function $\negl$.
\end{lemma}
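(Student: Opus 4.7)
The plan is to prove the lemma by a direct reduction to the pseudorandomness of public keys guaranteed by Definition \ref{def:PFHE}. Concretely, suppose toward contradiction that there is a PPT$_n$ adversary $\A$ with advantage $\alpha(n) := \lvert \Pr[\hyb_2(\A)=1] - \Pr[\hyb_3(\A)=1] \rvert$ that is non-negligible. I would build a distinguisher $\B$ against the pseudo-public-key property with advantage at least $\alpha(n)$ and runtime $\poly(n)$, and then invoke the hypothesis $\calT(\calS(n)) = \Omega(\poly n)$ and $\epsilon(\calS(n)) \le \negl(n)$ from \Cref{thm:stream} to derive a contradiction.

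The reduction $\B$ is set up as follows. $\B$ receives as external challenge a key $\pk^*$ that is either drawn from $\D_{\tilde{\pk}(\lambda)}$ or is the output of $\pfhe.\setup(1^\lambda)$. Internally, $\B$ samples a uniformly random coordinate $m \rsmpl [n]$ (hidden from $\A$), samples $\tilde{\ct_1}, \ldots, \tilde{\ct_{\ceil{\log n}}} \gets \D_{\tilde{\ct}(\lambda)}$ exactly as in $\hybrid_2$/$\hybrid_3$, and then simulates the WAR experiment with $\A$ using $(\pk^*, \tilde{\ct_1}, \ldots, \tilde{\ct_{\ceil{\log n}}})$ in place of the challenger's key/ciphertext tuple, and using fresh randomness for $\sa_0.\setup(n)$. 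At each round $\B$ honestly runs $\sa.\update$ and $\sa.\report$ on $\A$'s updates (which only require $\pk^*$ and $\pfhe.\eval$, both of which $\B$ can execute), and it also performs the book-keeping done by $\A$ in $\hybrid_1$ to extract $\vec{x}$ and $\vec{x'}$ at each time step. At the end of the interaction, $\B$ outputs $1$ iff at some round $t$ the pair $(\vec{x},\vec{x'})$ together with the experiment state witnesses the joint success condition of $\hybrid_2$/$\hybrid_3$: namely, $\sa$ produced an incorrect response and $\vec{x}_m \ne \vec{x'}_m$.

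The key observation is that when $\pk^* \sim \D_{\tilde{\pk}(\lambda)}$, $\B$ perfectly simulates $\hybrid_2$, and when $\pk^* \gets \pfhe.\setup(1^\lambda)$, $\B$ perfectly simulates $\hybrid_3$ (note that both hybrids still use $\tilde{\ct}$'s sampled from $\D_{\tilde{\ct}(\lambda)}$, so only the $\pk$ distribution differs between them, which is exactly what $\B$ is switching). Hence $\B$'s distinguishing advantage equals $\alpha(n)$. The runtime of $\B$ is dominated by $\A$'s running time plus at most $T = \poly(n)$ invocations of $\pfhe.\eval$ and a single run of $\sa_0$, all of which are $\poly(n)$. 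By the setting $\secparam = \calS(n)$ with $\calT(\calS(n)) = \Omega(\poly n)$, $\B$ fits within the time budget of the pseudo-public-key game, so its advantage is bounded by $\epsilon(\calS(n)) \le \negl(n)$, giving $\alpha(n) \le \negl(n)$ as required.

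The only delicate point I anticipate is making sure the simulation is truly perfect, i.e., that $\B$ never needs the secret key $\sk$ nor any randomness internal to $\pfhe.\setup$. This is where the structural requirements of $\pfhe$ matter: since $\pfhe.\eval$ is deterministic and takes only $\pk$ as input, and since $\sa.\update$/$\sa.\report$ manipulate ciphertexts purely via $\pfhe.\eval$ and $\Z_q$-linear combinations, $\B$ can carry out every computation the challenger does without ever touching $\sk$. With this in hand, the hybrids are identically distributed conditioned on the challenge bit, and the reduction goes through cleanly.
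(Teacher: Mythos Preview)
Your proposal is correct and follows essentially the same approach as the paper: both argue that the only difference between $\hybrid_2$ and $\hybrid_3$ is whether the public key is sampled from $\D_{\tilde{\pk}(\lambda)}$ or from $\pfhe.\setup$, and hence any gap yields a distinguisher against the pseudorandom-$\pk$ property. Your write-up is simply more explicit (spelling out the reduction $\B$, the simulation, and why $\sk$ is never needed), whereas the paper's proof is a two-sentence sketch of the same reduction.
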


\begin{proof}

   The only difference in $\A$'s views in $\hybrid_2$ and $\hybrid_3$ is the way the challenger generates $\tilde{\pk}$ or $\pk$. In $\hybrid_2$, the challenger samples $\tilde{\pk} \gets \mathcal{D}_{\tilde{\pk}(\lambda)}$; and in $\hybrid_3$, the challenger runs $\pk: (\pk, \sk) \gets \pfhe.\setup(1^\lambda)$. Thus $\A$ distinguishing these two hybrids will distinguish the two distributions $\{\pk : (\pk, \sk) \gets \fhe.\setup(1^\lambda)\}$ and $\D_{\tilde{\pk}(\lambda)}$ with the same advantage. Also by the $(\calT, \epsilon)$-indistinguishability in Definition \ref{def:PFHE}, for any $\poly(n) \leq \calT(\lambda)$-sized adversaries, the latter two distributions can be distinguished with probability at most $\epsilon(\lambda) \leq \negl(n)$.
\end{proof}

\begin{lemma}\label{lem:hyb34}
    Given that $\pfhe$ satisfies \emph{Definition \ref{def:PFHE}}, for any PPT$_n$ adversaries $\A$, $\lvert \Pr[\hyb_3(\A) = 1] - \Pr[\hyb_4(\A) = 1] \rvert \leq \negl(n)$ for some negiligble function $\negl$.
\end{lemma}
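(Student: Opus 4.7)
The plan is to prove Lemma \ref{lem:hyb34} by a standard hybrid argument, swapping the $\ceil{\log n}$ pseudo-ciphertexts of $\hybrid_3$ with genuine $\pfhe$ encryptions of the bits of $m$ one at a time, invoking the pseudorandom-ciphertext property of $\pfhe$ at each step and closing with a union bound. This parallels the proof of Lemma \ref{lem:hyb23}, except that here we have $\ceil{\log n}$ objects to replace rather than a single public key.

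Concretely, I would introduce intermediate hybrids $\hybrid_{3,j}$ for $j = 0, 1, \ldots, \ceil{\log n}$, where the challenger (still using a real $\pk$ sampled from $\pfhe.\setup(1^\lambda)$, exactly as in $\hybrid_3$) generates $\ct_i \gets \pfhe.\enc(\pk, m_i)$ for $i \leq j$ and samples $\tilde{\ct}_i \gets \D_{\tilde{\ct}(\lambda)}$ for $i > j$. Thus $\hybrid_{3,0} = \hybrid_3$ and $\hybrid_{3,\ceil{\log n}} = \hybrid_4$, and the adversary's view is otherwise identical in these two endpoints.

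To bound the gap between consecutive hybrids, given a PPT$_n$ distinguisher $\A$ between $\hybrid_{3,j}$ and $\hybrid_{3,j+1}$, I would build a reduction $\calB$ against the pseudorandom-ciphertext game as follows: $\calB$ samples $m \rsmpl [n]$ internally so that it knows the target bit $m_{j+1}$, receives an external challenge $(\pk^{\star}, \ct^{\star})$ where $\ct^{\star}$ is either $\pfhe.\enc(\pk^{\star}, m_{j+1})$ or a sample from $\D_{\tilde{\ct}(\lambda)}$, encrypts $m_1, \ldots, m_j$ honestly under $\pk^{\star}$, plants $\ct^{\star}$ in the $(j+1)$-st slot, fills the remaining slots with independent fresh samples from $\D_{\tilde{\ct}(\lambda)}$, and then faithfully simulates the rest of the white-box experiment against $\A$ (including the internal $\sa_0$ extraction introduced in $\hybrid_1$), outputting whatever the experiment decides. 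By inspection, $\calB$'s simulated view is exactly $\hybrid_{3,j}$ when $\ct^{\star}$ is pseudorandom and exactly $\hybrid_{3,j+1}$ when $\ct^{\star}$ is a genuine encryption, so $\calB$'s distinguishing advantage equals $\lvert\Pr[\hyb_{3,j}(\A) = 1] - \Pr[\hyb_{3,j+1}(\A) = 1]\rvert$.

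Given the parameter conditions $\calT(\calS(n)) \geq \poly(n)$ and $\epsilon(\calS(n)) \leq \negl(n)$ from Theorem \ref{thm:stream}, $\calB$ fits in the allowed time budget and hence each adjacent pair of hybrids differs by at most $\epsilon(\calS(n)) \leq \negl(n)$. A union bound over the $\ceil{\log n} = O(\log n)$ steps then yields $\lvert \Pr[\hyb_3(\A) = 1] - \Pr[\hyb_4(\A) = 1] \rvert \leq \ceil{\log n} \cdot \negl(n) = \negl(n)$, as claimed. The only step that requires care is the runtime accounting: $\calB$ must fit the full simulation --- the $\poly(n)$ adaptive updates, the up to $n$ homomorphic evaluations performed inside $\sa.\report$, and the additional $\sa_0$ extraction that $\A$ runs in $\hybrid_1$ --- into $\calT(\calS(n))$ time, which is precisely the reason the theorem imposes $\calT(\calS(n)) \geq \poly(n)$.
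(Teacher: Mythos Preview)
Your proposal is correct and takes essentially the same approach as the paper: both invoke the pseudorandom-$\ct$ property of Definition~\ref{def:PFHE} once per ciphertext and incur a $\ceil{\log n}\cdot\epsilon(\lambda)=\negl(n)$ loss via a union bound. The paper's proof is terser, simply asserting the $\ceil{\log n}\cdot\epsilon(\lambda)$ bound without writing out the intermediate hybrids or the distinguisher, whereas you have made the standard hybrid argument and the reduction $\calB$ explicit.
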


\begin{proof}
       Similar to Lemma \ref{lem:hyb23}, the only difference in $\A$'s views in $\hybrid_3$ and $\hybrid_4$ is the way the challenger generates $(\tilde{\ct_1}, \cdots, \tilde{\ct_{\ceil{\log n}}})$ or $(\ct_1, \cdots, \ct_{\ceil{\log n}})$. Recall that we have: $\left\{ \ct\right\} \approx_{(\calT, \epsilon)} \D_{\tilde{\ct}(\lambda)}$ from Definition \ref{def:PFHE}. Therefore for any PPT$_n$ adversary $\A$, the advantage of distinguishing $(\tilde{\ct_1}, \tilde{\ct_2}, \cdots, $ $ \tilde{\ct_{\ceil{\log n}}})$ in $\hybrid_3$ from a ciphertext sequence $(\ct_1, \ct_2, \cdots, \ct_{\ceil{\log n}})$ where each $\ct_i \gets \pfhe.\enc(\pk, m_i)$ is at most $\ceil{\log n} \cdot \epsilon(\lambda) = \negl(n)$. Thus $\A$ cannot distinguish between $\hybrid_3$ and $\hybrid_4$ with non-negligible advantage, as otherwise either the pseudorandom-$\ct$ property in Definition \ref{def:PFHE} would be broken.
       
\end{proof}

Now that we have related the success probability between the above consecutive hybrid experiments, we will show next that $\Pr[\hyb_4(\A) = 1] = 0$.

\begin{lemma}\label{lem:hyb4}
    Given that $\pfhe$ satisfies \emph{Definition \ref{def:PFHE}}, for any PPT$_n$ adversaries $\A$, $\Pr[\hyb_4(\A) = 1] = 0$.
\end{lemma}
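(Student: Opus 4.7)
The plan is to show that in $\hybrid_4$ the two success conditions are logically incompatible, producing a contradiction via the $\Z$-linear homomorphism and the correctness of $\pfhe$. The experiment outputs $1$ only when $\sa.\report$ at some time $t$ emits a vector $\vec{x}' \neq \bot$ that passed the check $hash = sketch$, while simultaneously $\vec{x}_m \neq \vec{x}'_m$ for the secret coordinate $m$. I want to derive from these that in fact $\vec{x}'_m = \vec{x}_m$.

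First, I would argue by induction on the stream length that, because $\pfhe.\eval$ is deterministic (Definition \ref{def:PFHE}) and the $sketch$ update rule is additive in $\Delta_t$, after the $t$-th update we have $sketch \equiv \sum_{i \in [n]} \hat{\ct_i}\cdot \vec{x}_i \pmod q$, where $\hat{\ct_i}$ is the (same, reused) evaluation $\pfhe.\eval(\pk, C_i, \ct_1, \ldots, \ct_{\ceil{\log n}})$ and $\vec{x}$ is the true underlying vector at that moment. The $\report$ routine then recomputes the same $\hat{\ct_i}$'s and forms $hash = \sum_{i \in [n]} \hat{\ct_i}\cdot \vec{x}'_i$, so the equality $hash = sketch$ becomes
\[
\sum_{i \in [n]} \hat{\ct_i}\cdot \vec{x}'_i \;\equiv\; \sum_{i \in [n]} \hat{\ct_i}\cdot \vec{x}_i \pmod q.
\]

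Next, I would apply the $\ldec$ algorithm from the $\Z$-linear homomorphism property to both sides. Its preconditions are met in our parameter regime: the coefficients $\vec{x}_i$ and $\vec{x}'_i$ are integers with $|\vec{x}_i|, |\vec{x}'_i| \le N$, and the construction enforces $q \gg n\cdot N$, so $n \cdot \max_i |\vec{x}_i - \vec{x}'_i| \ll q$. The property then yields
\[
\sum_{i \in [n]} \mu_i\cdot \vec{x}'_i \;=\; \sum_{i \in [n]} \mu_i\cdot \vec{x}_i,
\]
where $\mu_i := \pfhe.\dec(\pk, \sk, \hat{\ct_i})$. By $\pfhe$ correctness on the circuit $C_i$ evaluated on ciphertexts encrypting the bits $m_1,\ldots,m_{\ceil{\log n}}$ of $m$, we get $\mu_i = C_i(m_1, \ldots, m_{\ceil{\log n}}) = \mathds{1}(i = m)$. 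Substituting collapses both sums to a single term and gives $\vec{x}'_m = \vec{x}_m$, directly contradicting the second success condition. Hence $\Pr[\hyb_4(\A) = 1] = 0$.

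The main obstacle I would flag is the magnitude bookkeeping for $\ldec$: even though $\vec{x}'$ is guaranteed to be $k$-sparse, the true vector $\vec{x}$ can have up to $n$ nonzero coordinates, so the relevant linear combination ranges over all $n$ indices rather than just $k$ of them. This is the reason the construction sets $q \gg n\cdot N$ (not merely $q \gg k\cdot N$), and it is the one place where the parameter choice must be verified carefully; everything else in the argument is bookkeeping on top of the deterministic-evaluation and homomorphism properties already guaranteed by Definition \ref{def:PFHE}.
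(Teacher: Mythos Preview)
Your argument is correct and follows essentially the same route as the paper: use determinism of $\eval$ to identify $sketch$ and $hash$ as the two linear combinations $\sum_i \hat{\ct_i}\vec{x}_i$ and $\sum_i \hat{\ct_i}\vec{x}'_i$, then apply the $\Z$-linear homomorphism together with $\fhe$ correctness (so that $\hat{\ct_i}$ decrypts to $\mathds{1}(i=m)$) to conclude $\vec{x}_m=\vec{x}'_m$, contradicting the second success condition. Your explicit verification of the $\ldec$ preconditions via $q\gg n\cdot N$ is a useful addition that the paper leaves implicit.
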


\begin{proof}
As in the proof for Claim \ref{clm:hybClaim}, if $\hyb_4(\A) = 1$, at some time $t$ we must have $r_t = \vec{x'} \neq \vec{x}$. Additionally, we have:

\begin{itemize}[label=\textperiodcentered, topsep=0.4em, itemsep=0.2em]
    \item $\vec{x'}_m \neq \vec{x}_m$, which is among the condition of $\hyb_4(\A) = 1$, and 
    \item $hash = sketch$, which is verified in $\sa.\report$.
\end{itemize}

Recall the $\sa$ scheme computes $hash := \sum_{i \in [n]} \hat{\ct_i} \cdot \vec{x'}_i$ and $sketch := \sum_{i \in [n]} \hat{\ct_i} \cdot \vec{x}_i$, where $\hat{\ct_i} \gets \pfhe.\eval(\pk, C_{i}, \ct_1, \cdots, \ct_{\ceil{\log n}}), \hspace{0.3em} \text{for } \hspace{0.3em}C_{i}(\mu) = 1$ iff $\mu = i$. Therefore, given that $(\ct_1, \cdots, \ct_{\ceil{\log n}})$ encrypts $m$ in $\hybrid_4$, $\hat{\ct_m}$ should encode $1$, and $\hat{\ct_i}$ encodes $0$ for all $i \neq m$. Thus by the linear homomorphism property of $\pfhe$, there exists a deterministic decoding algorithm such that $hash$ decodes to $\vec{x}'_m$ and $sketch$ decodes to $\vec{x}_m$. However, since $\vec{x'}_m \neq \vec{x}_m$, this essentially implies $hash \neq sketch$, contradicting the condition in $\sa.\report$.

\end{proof}
Finally, summarizing Lemma $\ref{lem:hyb01}$ to Lemma $\ref{lem:hyb4}$:

\begin{enumerate}[leftmargin=*, topsep=0.3em, itemsep=0.2em]
    \item $\lvert \Pr[\hyb_0(\A) = 1] - \Pr[\hyb_1(\A) = 1] \rvert = 0$

    \item $\Pr[\hyb_2(\A) = 1] \geq \Pr[\hyb_1(\A) = 1]/n$

    \item $\lvert \Pr[\hyb_2(\A) = 1] - \Pr[\hyb_3(\A) = 1] \rvert = \negl(n)$

    \item $\lvert \Pr[\hyb_3(\A) = 1] - \Pr[\hyb_4(\A) = 1] \rvert = \negl(n)$

    \item $\Pr[\hyb_4(\A) = 1] = 0$
\end{enumerate}

We have that for any adversaries $\A$, $\Pr[\hyb_0(\A) = 1] \leq \negl(n) \cdot n = \negl(n).$ i.e., The real WAR game between $\A$ and a challenger running the $\sa$ scheme from Construction \ref{const1} outputs $1$ with probability at most $\negl(n)$.

\end{proof}
\section{Efficiency}
In this section, we discuss the space and time complexity of Construction \ref{const1}. We start by instantiating $\sa_0$ and $\pfhe$ used in our construction.

\subsection{Deterministic Relaxed Sparse Recovery}\label{sec:instant-sa0}
There exist space- and time-efficient streaming algorithms for the relaxed $k$-sparse recovery problems. Given a sparse vector $\vec{x} \in \Z^n$, many of these algorithms perform some linear measurement $\Phi: \Z^n \to \Z^r$ on $\vec{x}$ with $r \ll n$ to get a compressed representation $\Phi(\vec{x})$, which is often referred to as a {\it sketch}
of $\vec{x}$. This sketch is later used to reconstruct $\vec{x}$. The linearity property is useful for incrementally maintaining the sketch during the stream. 
Many algorithms implement the above specification using a random measurement matrix $\Phi$. However, in our attempt to remove the random oracle assumption within sub-linear total space, we cannot afford to store an $n$-column random matrix during the stream. Instead, we use algorithms which are deterministic, such that the measurements themselves incur no space overhead.

\begin{theorem}[\cite{10.1145/3039872}]\label{thm:rrec1}
    There exists a streaming algorithm $\sa_0$ for the Relaxed $k$-Sparse Recovery problem, such that given an input parameter $n$, for an integer stream with entries bounded by $\poly(n)$:
    \begin{itemize}[topsep=0.2em, itemsep=0.1em, label=\textperiodcentered]
        \item $\sa_0.\setup, \update$, and $\report$ are all deterministic.

        \item $\sa_0$ takes $\OO(k)$ bits of space.

        \item $\sa_0.\update$ runs in $\OO(1)$.

        \item $\sa_0.\report$ runs in $\OO(k^{1+c})$ for an arbitrarily small constant $c > 0$.
    \end{itemize}
    
\end{theorem}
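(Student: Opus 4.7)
The plan is to instantiate $\sa_0$ as a purely linear sketch. Fix an explicit measurement matrix $\Phi\in\Z^{r\times n}$ with $r=\OO(k)$ rows and store only $y:=\Phi\vec{x}\in\Z^r$. Because $\Phi$ is linear, each stream update $(i_t,\Delta_t)$ reduces to $y\gets y+\Delta_t\cdot \Phi_{i_t}$, where $\Phi_{i_t}$ is the $i_t$-th column of $\Phi$. Since the stream values are $\poly(n)$-bounded and $T\le\poly(n)$, every coordinate of $y$ remains $\poly(n)$-bounded, so $y$ fits in $\OO(k)$ bits. Setup only records an implicit (algorithmic) description of $\Phi$, not its entries, so setup space is $\OO(1)$.

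For the matrix itself, I would use a known deterministic construction of an unbalanced bipartite expander, for instance the family obtained from Parvaresh--Vardy / Guruswami--Umans--Vadhan codes, in which every column of $\Phi$ has at most $d=\OO(1)$ nonzero entries (the constant depends on $c$) and every set of $k$ columns expands by a $(1-\eta)$ factor for small $\eta$. The crucial property for efficiency is that the positions and values of the nonzeros of $\Phi_{i}$ can be computed from $i$ alone in $\OO(1)$ time, using only the algebraic description of the code. With this in hand, $\sa_0.\update$ touches only $\OO(1)$ coordinates of $y$ and runs in $\OO(1)$ time, while the state remains $\OO(k)$ bits.

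For $\sa_0.\report$, I would invoke a decoder matched to $\Phi$. When $\lVert\vec{x}\rVert_0\le k$, the expansion property guarantees that a deterministic iterative decoder (a peeling / bit-flipping / weak-identifiability procedure) recovers $\vec{x}$ exactly; by choosing the expansion constant and the column sparsity $d$ appropriately as functions of $c$, the decoder runs in $\OO(k^{1+c})$ time. Correctness is automatic: when $\vec{x}$ is $k$-sparse the decoder outputs $\vec{x}$, and when it is not, the relaxed problem places no constraint on the output.

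The main obstacle is reconciling the three targets simultaneously: $\OO(k)$ total space forbids storing $\Phi$ explicitly, $\OO(1)$ worst-case update time forbids any per-update linear algebra, and $\OO(k^{1+c})$ recovery forbids the naive $\Omega(nk)$ decoding that one would use with a generic compressed-sensing matrix. Resolving this tension is precisely the contribution of \cite{10.1145/3039872}, so my plan is to appeal to their explicit expander/decoder pair as a black box rather than re-derive it from scratch, verifying only that the bounds they state translate to the streaming language used in Definition~\ref{def:SA}.
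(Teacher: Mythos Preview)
The paper does not prove this theorem at all; it simply imports it from \cite{10.1145/3039872} as a black box, and the only commentary it adds is that ``the measurement is `constructed' non-explicitly using the probabilistic method.'' So your proposal already goes well beyond what the paper does. That in itself is harmless, but your reconstruction of what the cited reference supplies is inaccurate in a way that undercuts your plan.

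You build your whole argument on the premise that \cite{10.1145/3039872} furnishes an \emph{explicit} expander (Parvaresh--Vardy / GUV style) whose column nonzeros can be computed from the index $i$ in $\OO(1)$ time via a short algebraic description. The paper states the opposite: the measurement matrix in Theorem~\ref{thm:rrec1} is only shown to exist by the probabilistic method and is not explicit. The property you lean on hardest---on-the-fly column generation from a compact seed---is precisely what the cited result does \emph{not} provide. This is why the paper separately records Theorem~\ref{thm:rrec2} (the Vandermonde scheme) as the explicit alternative, paying for explicitness with a slower $\OO(k^2)$ report time and an amortized (rather than worst-case) update bound.

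So the high-level shape of your proposal (deterministic linear sketch, sparse columns, combinatorial decoder) is a reasonable mental model for this class of results, but the concluding step ``appeal to their explicit expander/decoder pair as a black box'' fails: there is no explicit pair in \cite{10.1145/3039872} to appeal to. If you want to keep the explicit-expander route, you would need to cite a different construction and argue carefully that it actually meets all three targets $\OO(k)$ space, $\OO(1)$ update, and $\OO(k^{1+c})$ report simultaneously; known explicit expanders do not obviously hit these without additional overhead.
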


The deterministic Relaxed k-Sparse Recovery algorithm in Theorem \ref{thm:rrec1} achieves nearly optimal space and time. However, the measurement is ``constructed"
non-explicitly using the probabilistic method. For explicit construction, we consider the following result:

\begin{theorem}[\cite{DeterministicCompressedSensing}]\label{thm:rrec2}
    There exists a streaming algorithm $\sa_0$ for $k$-Sparse Recovery without Detection, such that given an input parameter $n$, for an integer stream with entries bounded by $\poly(n)$:

    \begin{itemize}[topsep=0.2em, itemsep=0.1em, label=\textperiodcentered]
       \item $\sa_0.\setup, \update$, and $\report$ are all deterministic.

        \item $\sa_0$ takes $\OO(k)$ bits of space.

        \item $\sa_0.\update$ runs in (amortized) $\OO(1)$.

        \item $\sa_0.\report$ runs in $\OO(k^2)$.
    \end{itemize}
    
\end{theorem}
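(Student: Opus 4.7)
The goal is to construct a deterministic linear sketch $\Phi\vec{x}$ where $\Phi\in\mathbb{F}_q^{2k\times n}$ for a prime $q\in\poly(n)$ strictly larger than every coordinate bound, such that the sketch determines the underlying $k$-sparse vector uniquely, and such that maintenance and inversion fit within the claimed complexity. A natural choice is a Vandermonde-type matrix $\Phi_{j,i}=\alpha_i^{\,j}$, where $\alpha_1,\dots,\alpha_n$ are $n$ distinct elements of $\mathbb{F}_q$: any $2k$ columns of $\Phi$ form a square Vandermonde block with distinct nodes and are therefore linearly independent, so $\Phi$ is injective on $2k$-sparse vectors, and hence $\vec{x}\mapsto \Phi\vec{x}$ is injective on $k$-sparse vectors. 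The internal state is then the $2k$-tuple of power sums $p_j:=\sum_{i\in[n]}\vec{x}_i\,\alpha_i^{\,j}\pmod q$, which uses $\OO(k)$ bits.

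For updates, processing $(i_t,\Delta_t)$ naively costs $\OO(k)$ because it touches all $2k$ power sums. To amortize to $\OO(1)$ per update I would buffer incoming updates in a queue of capacity $\Theta(k)$. When the queue fills, I would flush the whole batch $\{(i_t,\Delta_t)\}_{t=1}^{\Theta(k)}$ in $\OO(k)$ total time by exploiting the rational generating function identity
\[
\sum_{j\ge 0}(\Delta p_j)\,z^j \;=\; \sum_{t}\frac{\Delta_t}{1-\alpha_{i_t}z} \;=\; \frac{P(z)}{Q(z)},
\]
with $Q(z)=\prod_t(1-\alpha_{i_t}z)$ and $P$ of degree $<k$ obtained by combining Lagrange-style numerators via a subproduct tree in $\OO(k)$ field operations. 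Inverting $Q$ as a truncated power series and multiplying by $P$ then yields $\Delta p_0,\dots,\Delta p_{2k-1}$ in an additional $\OO(k)$ operations, giving $\OO(1)$ amortized cost per update. The buffer itself has $\OO(k)$ size, preserving the total space budget.

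For reporting, I would first feed the $2k$ power sums to the Berlekamp--Massey algorithm, which in $O(k^2)$ field operations produces the minimal linear recurrence whose characteristic polynomial is $\chi(z)=\prod_{i\in\operatorname{supp}(\vec{x})}(1-\alpha_i z)$, of degree at most $k$. Given $\chi$, recovering the support reduces to finding its roots among $\alpha_1,\dots,\alpha_n$; choosing the nodes $\alpha_i$ from a structured set (e.g.\ a subgroup of $\mathbb{F}_q^\times$ with efficient deterministic root extraction) makes this cost $\OO(k^2)$. Finally, recovering the nonzero entries of $\vec{x}$ on the now-known support is a $k\times k$ Vandermonde linear system, solvable in $O(k^2)$ time by the standard structured solver. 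Summing the three phases gives the claimed $\OO(k^2)$ report time. The step I expect to be hardest is matching both bounds \emph{deterministically}: the batched flush has to stay within $\OO(k)$ via fast polynomial arithmetic, and deterministic factorization of $\chi$ over $\mathbb{F}_q$ is delicate, so choosing the evaluation nodes $\alpha_i$ so that root extraction is provably fast and deterministic is the main technical design decision that makes the whole scheme go through.
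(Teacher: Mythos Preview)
Your proposal is correct and matches the paper's overall architecture: a $2k\times n$ Vandermonde sketch over a prime field, Reed--Solomon/error-locator style decoding for the report, and buffering $\Theta(k)$ updates so that a single $\OO(k)$-time batch flush yields $\OO(1)$ amortized update cost. The paper in fact only proves the amortized update bound here (the other three bullets are attributed to the cited reference), so your extra discussion of Berlekamp--Massey and deterministic root extraction goes beyond what the paper argues.

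The one genuine methodological difference is in how the batch flush is carried out. You compute the increments $\Delta p_0,\dots,\Delta p_{2k-1}$ directly by writing $\sum_j(\Delta p_j)z^j$ as a rational function $P(z)/Q(z)$, building $P$ and $Q$ via a subproduct tree, and then doing a truncated power-series inversion and multiplication. The paper instead observes that the $2k\times 2k$ column submatrix $A$ selected by the buffered indices is itself Vandermonde, so $A^\top u$ is just multi-point evaluation of the polynomial with coefficient vector $u$ and costs $O(k\log^2 k\log\log k)$ by a standard fast algorithm; it then invokes the Transposition Principle to conclude that $Av'$ (which is exactly the batch increment) can be computed in the same asymptotic time. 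Both routes give $\OO(k)$ per batch; yours is more explicit and self-contained, while the paper's is shorter because it offloads the work to two black-box results (fast multi-point evaluation and the Transposition Principle) and avoids reasoning about power-series inversion or numerator assembly.
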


The above theorem except the amortized runtime has been shown in \cite{DeterministicCompressedSensing}.

\begin{proof}[Proof (of the runtime in \normalfont{Theorem \ref{thm:rrec2}})]
The measurement $\Phi$ in Theorem $\ref{thm:rrec2}$ is a $2k \times n$ Vandermonde matrix explicitly defined \footnote{For our discussion, we choose a Vandermonde matrix that is easy to describe. Though many other Vandermonde matrices also satisfy the specifications.} as $\Phi_{i, j} = j^{i-1}$ over $\Z_p$, where $p > 2M$ is a large enough prime for $M \in \poly(n)$ being the bound on the magnitude of the stream entries. The reconstruction algorithm uses the same idea as the algebraic algorithm for decoding Reed-Solomon codes. It uses the sketch to construct an error-locator polynomial; the roots of this polynomial identify the signals appearing in the sparse superposition.


We note that the Vandermonde matrix is dense but structured. Thus to achieve the claimed $\OO(1)$ update time, we buffer every $2k$ stream updates and then efficiently process them in one batch. We need the following result for fast multi-point evaluation of polynomials and the Transposition Principle:

\begin{theorem}[\cite{gathen_gerhard_2013}]
\label{thm:fast-eval}
    Let $R$ be a ring, and let $q \in R[x]$ be a degree-$d$ polynomial. Then given at most $d$ distinct points $x_1, \cdots, x_d \in R$, all values $q(x_1), \cdots, q(x_d)$ can be computed using $\mathcal{O}(d\log^2 d \log\log d)$ total operations over $R$.
\end{theorem}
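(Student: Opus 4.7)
The plan is to prove Theorem \ref{thm:fast-eval} by the classical subproduct-tree plus top-down remainder algorithm.

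First, I would build a balanced binary subproduct tree $T$ on the points $x_1, \ldots, x_d$. Each leaf $\ell_i$ stores the degree-one polynomial $M_{\ell_i}(x) = x - x_i$, and each internal node $v$ with children $v_L, v_R$ stores $M_v = M_{v_L} \cdot M_{v_R}$. Thus a node at depth $j$ from the root holds a polynomial of degree $d/2^j$, and the root stores $\prod_{i=1}^d (x - x_i)$. Building $T$ bottom-up requires, at each level, one multiplication per node of two polynomials whose degrees are roughly half the node's degree; the per-level cost is $O(d \log d \log\log d)$ using a fast multiplication subroutine with cost $M(m) = O(m \log m \log\log m)$ per pair of degree-$m$ polynomials. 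Summing over the $O(\log d)$ levels gives $O(d \log^2 d \log\log d)$ for the construction.

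Second, I would evaluate $q$ by a top-down recursion on $T$ that maintains at each node $v$ the residue $r_v = q \bmod M_v$. The invariant holds at the root because $\deg q < d = \deg M_{\text{root}}$ forces $r_{\text{root}} = q$. For an internal node $v$ I compute $r_{v_L} = r_v \bmod M_{v_L}$ and $r_{v_R} = r_v \bmod M_{v_R}$ using Newton-iteration-based fast polynomial division, each at cost $O(M(\deg M_v))$. At each leaf $\ell_i$ the invariant yields $r_{\ell_i} = q(x_i)$ since $q \equiv q(x_i) \pmod{x - x_i}$, so reading off the leaves gives all $d$ evaluations. The per-level cost of the descent is $2^j \cdot O(M(d/2^j)) = O(d \log d \log\log d)$, which again sums to $O(d \log^2 d \log\log d)$, establishing the claimed bound.

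The main obstacle is honoring the ``arbitrary ring $R$'' hypothesis: a generic $R$ need not admit roots of unity, so textbook radix-2 FFT does not apply out of the box. I would invoke the Cantor--Kaltofen multiplication algorithm (or a Sch\"onhage-style construction that adjoins virtual roots of unity) so that $M(m) = O(m \log m \log\log m)$ ring operations is achievable unconditionally over any commutative ring, and then observe that fast polynomial division via Newton iteration on the reversed polynomial reduces to a constant number of multiplications plus $O(m)$ extra ring operations, preserving the same bound. A secondary technicality is when $d$ is not a power of two or when fewer than $d$ points are supplied; I would handle both by padding the tree with identity leaves ($M_{\ell}(x) = 1$) or by working with an unbalanced tree of depth $\lceil \log d \rceil$, observing that the recurrence $T(d) \le 2T(\lceil d/2 \rceil) + O(M(d))$ still solves to $O(d \log^2 d \log\log d)$.
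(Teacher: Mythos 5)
Your proposal is correct and is essentially the standard subproduct-tree/remainder-cascade argument from von zur Gathen--Gerhard, which is exactly the source the paper cites for this theorem without reproducing a proof of its own. You also handle the two real technicalities appropriately: the subproduct polynomials are monic, so Newton-iteration division needs no inverses beyond the unit leading coefficient, and Cantor--Kaltofen multiplication gives $M(m)=\mathcal{O}(m\log m\log\log m)$ ring operations over an arbitrary commutative ring, so the total $\mathcal{O}(d\log^2 d\log\log d)$ bound follows as claimed.
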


\begin{theorem}[Transposition Principle, \cite{SHOUP1994371}]
\label{thm:transpos}
    Let $K$ be a field and $\mat{A}$ be an $r \times w$ matrix over $K$. Suppose there exists an arithmetic circuit of size $s$ that can compute $\mat{A}\vec{y}$ for arbitrary length-$w$ vector $\vec{y}$ over $K$. Then there exists an $\mathcal{O}(s)$-time algorithm that transforms this circuit to compute $\mat{A^\top}\vec{x}$ for arbitrary length-$r$ vector $\vec{x}$ over $K$, with the size of the new circuit in $\calO(s)$.
\end{theorem}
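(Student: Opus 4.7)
The plan is to prove this by the classical gate-by-gate reversal construction underlying Tellegen's / Kaltofen's formulation of the transposition principle. Since the circuit realizes a linear map $\vec{y}\mapsto\mat{A}\vec{y}$ (every output coordinate is a fixed linear form in $y_1,\ldots,y_w$), I may assume, after a preprocessing step that at most multiplies the size by a constant, that the circuit is a \emph{linear} straight-line program over $K$ using only three gate types: (i) binary additions $z\leftarrow u+v$ with fan-in $2$ and fan-out $1$; (ii) scalar multiplications $z\leftarrow c\cdot u$ for a constant $c\in K$ fixed by the gate; and (iii) explicit copy gates $(u,v)\leftarrow z$ that duplicate one wire into two. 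Any circuit of size $s$ computing a linear map can be put into this normal form in $O(s)$ time: fan-out is made explicit by inserting copy gates along edges (total edge count only doubles), and any gate that would multiply two $y_i$'s can be discarded because such terms do not appear in $\mat{A}\vec{y}$.

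Next I would specify the transposition rules. Reverse the direction of every wire; swap the role of the $w$ input nodes with the $r$ output nodes; and replace each gate locally by its dual: an addition $z\leftarrow u+v$ becomes a copy $(u',v')\leftarrow z'$, a copy $(u,v)\leftarrow z$ becomes an addition $z'\leftarrow u'+v'$, and a scalar multiplication $z\leftarrow c\cdot u$ becomes $u'\leftarrow c\cdot z'$ with the same constant. Call the resulting DAG $\mathsf{Circ}^\top$. Each rule preserves the number of gates and wires exactly, so $|\mathsf{Circ}^\top|=|\mathsf{Circ}|=O(s)$, and the whole rewrite is executed by one reverse topological scan in $O(s)$ time.

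For correctness, I would model the forward computation as a product of elementary matrices acting on an extended ``state vector'' indexed by all wires. Enumerate the gates in topological order $g_1,\ldots,g_s$ and let $M_i$ be the elementary matrix that writes the value produced by $g_i$ into its output wire from its inputs: an addition gives a row operation, a scalar multiplication gives a diagonal-plus-off-diagonal elementary matrix, and a copy gives a row-duplication matrix. Writing $E_{\mathrm{in}}$ for the embedding of the $w$ input coordinates into the state space and $P_{\mathrm{out}}$ for the projection onto the $r$ output wires, the circuit's specification is the matrix identity
\[
\mat{A} \;=\; P_{\mathrm{out}}\,M_s M_{s-1}\cdots M_1\,E_{\mathrm{in}}.
\]
Transposing yields $\mat{A}^\top = E_{\mathrm{in}}^\top M_1^\top\cdots M_s^\top P_{\mathrm{out}}^\top$, and a direct check shows the local rules above realize exactly $M_i^\top$: the transpose of ``add row $u$ into row $z$'' is ``add row $z$ into row $u$,'' the transpose of a duplication is a two-into-one addition, and scalar multiplication is self-transpose; swapping inputs with outputs implements $E_{\mathrm{in}}^\top$ and $P_{\mathrm{out}}^\top$. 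Executing the transposed gates in the reverse topological order thus evaluates $\mat{A}^\top\vec{x}$ on any input $\vec{x}\in K^r$.

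The main obstacle I expect is making the normal-form reduction and the duality of fan-in versus fan-out fully rigorous: without the upfront canonicalization to fan-in $2$ and fan-out $1$, a naive transposition of a high-fan-out wire would turn into a tree of additions of unbounded degree, and one must argue that the explicit-copy encoding only inflates size by a constant factor. Once the gate set is normalized so that each local transposition is itself a constant-size gadget with the same edge count, the size preservation is immediate and the matrix-product identity above closes the correctness argument.
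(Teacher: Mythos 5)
The paper does not actually prove this statement; it imports it by citation from \cite{SHOUP1994371} and uses it as a black box (in the proof of Proposition \ref{prop:Vandermonde}), so there is no in-paper argument to compare yours against. Your proof is the standard Tellegen/Bordewijk-style argument for the transposition principle, and its core is correct: once the circuit is a linear straight-line program with binary additions, scalar multiplications, and explicit copies, writing the computation as $\mat{A}=P_{\mathrm{out}}M_s\cdots M_1E_{\mathrm{in}}$ with elementary matrices, transposing the product, and checking that each $M_i^\top$ is realized by the dual gate (addition $\leftrightarrow$ copy, scalar multiplication self-dual) gives both correctness and exact gate/wire preservation, hence an $\mathcal{O}(s)$-size transposed circuit produced by a single $\mathcal{O}(s)$-time reverse topological pass.

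The one step you should tighten is the normalization of an arbitrary arithmetic circuit to a linear one. The claim that ``any gate that would multiply two $y_i$'s can be discarded because such terms do not appear in $\mat{A}\vec{y}$'' is not sound as stated: a general circuit computing a linear map may use nonlinear intermediate values that later cancel, and naively deleting multiplication gates changes the function. Two standard fixes: (i) state the theorem for linear circuits (linear straight-line programs), which is how the transposition principle is usually formulated and is all the paper needs, since the fast multipoint-evaluation circuit invoked in Proposition \ref{prop:Vandermonde} is linear for fixed evaluation points; or (ii) keep general circuits but replace ``discard'' by the syntactic extraction of the homogeneous components of degree $0$ and $1$ at every gate (the degree-$0$ parts are field constants, obtainable by evaluating at $\vec{y}=0$), under which an addition splits componentwise and a product contributes its degree-$1$ part via two scalar multiplications and one addition; this inflates the size by only a constant factor and then your transposition argument applies verbatim. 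With either repair the proof is complete.
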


Theorem $\ref{thm:fast-eval}$ and Theorem $\ref{thm:transpos}$ together show the following update time for Vandermonde measurements: 

\begin{proposition}
\label{prop:Vandermonde}
    Suppose $\Phi$ is a $2k \times n$ Vandermonde matrix as described above. Let $x_1, x_2, \cdots, x_{2k}$ be a sequence of updates with $x_t:=(i_t, \Delta_t)$ for each $t \in [2k]$. Define $\vec{v} = \sum_{t \in [2k]} \vec{b}_{i_t} \cdot \Delta_t$, then $\Phi(\vec{v})$ can be computed using $\OO(k)$ time.
\end{proposition}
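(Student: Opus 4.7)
My plan is to first reduce the problem to a $2k \times 2k$ Vandermonde matrix-vector product, then solve the transpose of that product via fast multi-point polynomial evaluation (Theorem \ref{thm:fast-eval}), and finally invoke the Transposition Principle (Theorem \ref{thm:transpos}) to obtain a circuit of the same size for the original product.

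First, I would observe that $\vec{v} = \sum_{t \in [2k]} \vec{b}_{i_t} \cdot \Delta_t$ is supported on at most $2k$ coordinates $I := \{i_1, \ldots, i_{2k}\}$, so
\[(\Phi \vec{v})_i \;=\; \sum_{t=1}^{2k} \Phi_{i, i_t}\, \Delta_t \;=\; \sum_{t=1}^{2k} i_t^{\,i-1}\, \Delta_t.\]
Letting $\tilde{\Phi} \in \Z_p^{2k \times 2k}$ be the Vandermonde submatrix with columns indexed by $I$, i.e. $\tilde{\Phi}_{i,t} = i_t^{\,i-1}$, and $\vec{\Delta} := (\Delta_1, \ldots, \Delta_{2k})^\top$, this gives $\Phi \vec{v} = \tilde{\Phi}\, \vec{\Delta}$. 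If some $i_t$'s coincide I first aggregate their $\Delta$ values in $\OO(k)$ time (by sorting or hashing) so that without loss of generality the indices in $I$ are distinct. This step reduces the task to applying a $2k \times 2k$ Vandermonde matrix to a $2k$-vector.

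Second, I would pass to the transpose. For any $\vec{y} \in \Z_p^{2k}$, defining the polynomial $p_{\vec{y}}(x) := \sum_{i=1}^{2k} y_i\, x^{i-1}$ of degree at most $2k-1$, we have
\[(\tilde{\Phi}^\top \vec{y})_t \;=\; \sum_{i=1}^{2k} y_i\, i_t^{\,i-1} \;=\; p_{\vec{y}}(i_t),\]
so computing $\tilde{\Phi}^\top \vec{y}$ is exactly the multi-point evaluation of $p_{\vec{y}}$ at the $2k$ distinct points $i_1, \ldots, i_{2k}$. By Theorem \ref{thm:fast-eval}, this admits an arithmetic circuit of size $s = \OO(k)$ over $\Z_p$. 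Applying the Transposition Principle (Theorem \ref{thm:transpos}) to this circuit then yields an arithmetic circuit of size $\calO(s) = \OO(k)$ that computes $\vec{\Delta} \mapsto \tilde{\Phi}\, \vec{\Delta}$. Evaluating this transposed circuit on $\vec{\Delta}$ produces $\Phi \vec{v}$ in $\OO(k)$ time, as claimed.

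The main subtleties are (i) deduplicating repeated updates and extracting the list of distinct column indices $I$ within the $\OO(k)$ budget, and (ii) producing the transposed arithmetic circuit on the fly, since $\tilde{\Phi}$ depends on the current batch of $2k$ updates and cannot be precomputed. The Transposition Principle is constructive and its application cost matches the size of the forward circuit, so point (ii) does not inflate the asymptotic bound; point (i) is standard. Once these are in place, the proposition follows directly from Theorems \ref{thm:fast-eval} and \ref{thm:transpos}.
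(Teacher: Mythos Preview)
Your proposal is correct and follows essentially the same route as the paper: reduce to the $2k\times 2k$ Vandermonde submatrix on the distinct update indices, recognize that its transpose acts by multi-point polynomial evaluation (Theorem~\ref{thm:fast-eval}), and then invoke the Transposition Principle (Theorem~\ref{thm:transpos}) to get the forward product in $\OO(k)$ time. The paper handles the deduplication and submatrix extraction in the same way you outline, so there is no meaningful difference between the two arguments.
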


\begin{proof}[Proof (of \normalfont{Proposition \ref{prop:Vandermonde}})]
     Without loss of generality, we assume all updated coordinates $i_1, i_2, \cdots,$ $i_{2k}$ are distinct, as one can always compress multiple updates to the same coordinate as a single update. Then $i_1, i_2, \cdots, i_{2k}$ index $2k$ columns of $\Phi$ and $2k$ coordinates of $\vec{v}$ (which are the only coordinates that can possibly be non-zero). Let $\mat{A}$ be the indexed $2k \times 2k$ column sub-matrix and  $\vec{v'}$ be the $2k$-length vector containing only $i_1, i_2, \cdots, i_{2k}$-th entries of $\vec{v}$ in their original order. We have $\Phi\vec{v} = \mat{A}\vec{v'}$.

    Observe that the column submatrix $\mat{A}$ is still a Vandermonde matrix, and each row $i$ of $\mat{A}^\top$ takes the form $[1, \alpha_i, \alpha_i^2, \cdots, \alpha_i^{2k-1}]$ for some $\alpha_i \in [n]$. For an arbitrary vector $u \in \Z_p^{2k}$, we interpret it as the coefficient vector of a degree-$2k$ polynomial in $\Z_p^{2k}$. Then $\mat{A}^\top\vec{u}$ can be computed as evaluating the polynomial at $2k$ points $\alpha_1, \alpha_2, \cdots \alpha_{2k}$, which takes $\OO(k)$ time using the fast multi-point evaluation in Theorem \ref{thm:rrec1}. By the Transposition Principle, this implies asymptotically the same runtime for evaluating $\mat{A}\vec{v'}$ thus $\Phi\vec{v}$.
\end{proof}

Therefore, once we buffer the updates into $2k$-sized batches and amortize the cost of processing each batch to the future $2k$ updates, we get $\OO(1)$ amortized runtime for $\sa_0.\update$.

\end{proof}

\subsection{Pseudorandom FHE}\label{sec:instant-pfhe}
    We focus on GSW \cite{C:GenSahWat13}, a standard LWE based fully homomorphic encryption scheme, for our instantiation. 
    To start with, we briefly review the construction of GSW. We omit $\dec$ since it is not involved in our scheme. Instead, we will describe a decoding algorithm $\ldec$ for linear combinations of ciphertexts (which largely resembles the original $\dec$ of GSW) when showing the $\Z$ Linear Homomorphism property of GSW.

    \begin{itemize}[leftmargin=*]
        \item $\setup(1^\secparam)$: Choose a modulus $q$, a lattice dimension parameter $g$, and error distribution $\chi$ parametrized by $\secparam$. Also, choose some $h = \Theta(g \log q)$. 
        Let $params = (g, h, q, \chi)$ be parameters for $\LWE_{g, h, q, \chi}$ .

        \item $\skgen(params)$: Sample $\overline{\vec{s}} \gets \Z_q^{ g}$. Output $\sk = \vec{s}^\top = [-\overline{\vec{s}}^\top \mid 1]$.

        \item $\pkgen(params, \sk)$: Sample a matrix $\overline{\mat{A}} \gets \Z^{g \times h}_q$ uniformly and an error vector $\vec{e} \gets \chi^h$. Let $\vec{\alpha} = \overline{\vec{s}}^\top \cdot\overline{\mat{A}} + \vec{e}^\top$. Set $\pk = \mat{A}$ to be the $(g+ 1)$-row matrix consisting of $g$ rows of $\overline{\mat{A}}$ followed by $\vec{\alpha}$.  

        \item $\enc(params, \pk, \mu)$: Sample a uniform matrix $\mat{R} \gets \{0, 1\}^{h\times h}$. For $\pk = \mat{A}$, output $\mat{A} \cdot \mat{R} + \mu \cdot \mat{G}$ where $\mat{G}$ is a gadget matrix.

    \end{itemize}

    We assume that our set of LWE parameters $params$ is $(\calT, \epsilon)$-secure, such that for any $\calT(\secparam)$ time adversary, the advantage of distinguishing an LWE sample from a uniformly random sample is at most $\epsilon(\lambda)$. We now verify that GSW satisfies the additional properties required by $\pfhe$.

\paragraph{GSW Satisfies $\pfhe$ Properties.}

    \begin{itemize}[leftmargin=*, label=-]
        \item $\eval$ is deterministic by the construction of GSW.

        \item (Pseudorandom $\pk$, $\ct$.)  This is just the built-in security of GSW shown via the LWE hardness and the Left-over Hash lemma. Assuming the $(\calT, \epsilon)$-hardness of $\LWE_{g, h, q, \chi}$, we have:
        
        \[\left\{ (\mat{A}, \mat{A}\cdot \mat{R} + \mu \cdot \mat{G}):  \begin{aligned}
        \mat{A} &\gets \pkgen \\
        \mat{R} &\gets \{0,1\}^{h \times h} \\
        \mu &\in \{0,1\}
        \end{aligned}\right\} \approx_{(\calT, \epsilon)} \left\{(\mat{U}_1, \mat{U}_2):  \begin{aligned}
        \mat{U}_1 &\gets \Z_q^{(g+1) \times h} \\ \mat{U}_2 &\gets \Z_q^{(g+1) \times h}
        \end{aligned}\right\}.\]

        \item ($\Z$ Linear Homomorphism.) Consider the following arbitrary linear combination, where each ciphertext $\ct_i = \mat{A}\cdot \mat{R}_i + \mu_i \cdot \mat{G}$ for $i \in [k]$:
        \[
            \calM = \sum_{i \in [k]} x_i\cdot \ct_i
            = \mat{A}\cdot ( \sum_{i \in [k]} x_i\mat{R}_i) + (\sum_{i \in [k]} x_i\mu_i) \cdot \mat{G}
        \]

        For $k\cdot \max_{i \in [k]} x_i \ll q$ , it can be deterministically decrypted using the normal GSW decryption for integer messages. For completeness, we briefly describe $\ldec$ as follows:

        \begin{itemize}
            \item $\ldec(\pk, \sk, \calM)$: Compute
            \begin{align*}
                \sk \cdot \calM &= [-\overline{\vec{s}}^\top \mid 1] \cdot \mat{A}\cdot ( \sum_{i \in [k]} x_i\mat{R}_i) + (\sum_{i \in [k]} x_i\mu_i)\cdot  [-\overline{\vec{s}}^\top \mid 1] \cdot \mat{G} \\
                &= \vec{e}^\top\cdot ( \sum_{i \in [k]} x_i\mat{R}_i) + (\sum_{i \in [k]} x_i\mu_i)\cdot  [-\overline{\vec{s}}^\top \mid 1] \cdot \mat{G} \\
                &\approx  (\sum_{i \in [k]} x_i\mu_i)\cdot  [-\overline{\vec{s}}^\top \mid 1] \cdot \mat{G}
            \end{align*}
            Since it is easy to solve LWE with respect to $\mat{G}$, we can recover  $(\sum_{i \in [k]} x_i\mu_i)\cdot  [-\overline{\vec{s}}^\top \mid 1]$ and hence $\sum_{i \in [k]} x_i\mu_i$ (modulo q).
        \end{itemize}
        
    \end{itemize}

    \paragraph{Efficiency Based on Subexponential LWE.}
    We choose the lattice dimension parameter $g = \secparam$, the modulus $q$ and the noise $\sigma$ with subexponential modulus-to-noise ratio, and the sample complexity $h \in \Theta(g \cdot \log q)$. The subexponential hardness of LWE requires the below indistinguishability to hold for adversaries of size $\calT(\secparam) = 2^{\secparam^\beta}$ for some constant $\beta > 0$, with subexponentially small distinguishing advantage $\epsilon(\secparam) = 2^{-\secparam^c}$ for some constant $c > 0$ denoted by $\approx_\epsilon$:
    \[\left\{ \begin{aligned}(\mat{A}, \vec{s} \cdot \mat{A} + \vec{e}) &\mod q:\\
        \mat{A} &\gets \Z^{g \times h}_q\\
        \vec{s} &\gets \Z^{1 \times g}_q\\
        \vec{e} &\gets \chi^{1 \times h}_\sigma 
    \end{aligned}\right\} \approx_\epsilon \left\{(\mat{A}, \vec{u}):\begin{aligned}
        \mat{A} &\gets \Z^{g \times h}_q\\
        \vec{u} &\gets \Z^{1 \times g}_q\\
    \end{aligned}\right\}\]

We set the security parameter be $\secparam = \log^{r} n$ for some constant $r > 1/\beta$ and $r > 1/c$. This satisfies the relationship between functions required by the Robustness Theorem \ref{thm:stream}: $\calT(\log^{r} n) = 2^{{(\log n)^{r\beta} }} = \Omega(\poly n)$ and
$\epsilon(\log^{r} n) = 2^{{-(\log n)^{rc} }}\leq \negl(n)$ for some negligible functions $\negl$.

Given these parameters, using GSW to implement $\pfhe$ and Theorem \ref{thm:rrec1} to implement $\sa_0$, we get the following result:

\begin{proposition}\label{prop:GSWcomplexity}
    Let $\calR(\A)$ denote the runtime of an algorithm $\A$. Assuming the above subexponential hardness of LWE, we have the following complexity regarding $\sa$ in Construction \ref{const1}:
    \begin{enumerate}[topsep=0.7em]
        \item $(\tilde{\pk}, \tilde{\ct_1}, \cdots, \tilde{\ct_{\ceil{\log n}}})$ and $sketch$ can be stored using $\OO(1)$ total bits of space. 

        \item $\calR(\sa.\update) = \OO(1) + \calR(\sa_0.\update)$.
        
        \item $\calR(\sa.\report) = \OO(k) + \calR(\sa_0.\report)$.
        

        
    \end{enumerate}
\end{proposition}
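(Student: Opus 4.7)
The plan is to substitute $\secparam = \log^{r} n$ into the GSW parameters from Section \ref{sec:instant-pfhe} so that every individual FHE object has polylogarithmic-in-$n$ bit size and every FHE evaluation on the short circuits $C_i$ runs in polylogarithmic time; then each of the three bullets becomes a straightforward accounting exercise. Concretely, with $g = \secparam = \log^{r} n$, modulus $q = \poly(n)$ (so $\log q = O(\log n)$), and sample complexity $h = \Theta(g \log q) = \poly(\log n)$, a GSW ciphertext is a $(g+1)\times h$ matrix over $\Z_q$ occupying $\poly(\log n) = \OO(1)$ bits by FHE compactness (Definition \ref{def:FHEcompactness}).

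First I would verify the storage claim. The state held across updates consists of $\tilde\pk$, the $\ceil{\log n}$ ciphertexts $\tilde{\ct_1}, \ldots, \tilde{\ct}_{\ceil{\log n}}$, and the running $sketch$; the latter is a $\Z_q$-linear combination of evaluated ciphertexts and therefore lives in the same ambient matrix space as a single ciphertext, hence is again $\poly(\log n)$ bits. Summing $\OO(1)$-many such objects gives $\OO(1)$ bits in total, as claimed.

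Next I would bound per-call time. For $\sa.\update$, on top of the $\sa_0.\update$ call we do one evaluation $\pfhe.\eval(\tilde\pk, C_{i_t}, \tilde{\ct_1}, \ldots, \tilde{\ct}_{\ceil{\log n}})$ followed by a scalar-multiply-and-add into $sketch$. The circuit $C_{i_t}$ is a point function on $\ceil{\log n}$ bits, realizable by a $\poly(\log n)$-size, $O(\log\log n)$-depth circuit of $\nand$ gates, so GSW evaluates it in $\poly(\secparam, \log n) = \OO(1)$ time, and the update to $sketch$ touches one $\OO(1)$-bit object; this gives $\calR(\sa_0.\update) + \OO(1)$. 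For $\sa.\report$, I would observe that $hash = \sum_i \hat{\ct_i}\vec{x}'_i$ receives contributions only from indices with $\vec{x}'_i \neq 0$; once the explicit test $\lVert \vec{x}'\rVert_0 \le k$ has passed this is at most $k$ indices, yielding at most $k$ evaluations of $\OO(1)$ each, plus $\OO(k)$ work to accumulate $hash$ and compare it with $sketch$. This gives $\calR(\sa_0.\report) + \OO(k)$.

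The main obstacle---really the only non-trivial point---is pinning down the cost of one $\pfhe.\eval$ call at these aggressive parameters. I need to argue that with $g = \log^{r} n$ and $h = \poly(\log n)$, a single homomorphic $\nand$ in GSW (a multiplication of $(g+1)\times h$ matrices with $O(\log n)$-bit entries) costs $\poly(\log n)$, and that the $\poly(\log n)$ gates composing a point function on $\ceil{\log n}$ bits therefore combine to $\poly(\log n) = \OO(1)$ total. Once this single estimate is established, all three bullets of Proposition \ref{prop:GSWcomplexity} drop out by counting stored objects and counting evaluations performed per call.
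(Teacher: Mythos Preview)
Your proposal is correct and follows essentially the same route as the paper's own proof: both plug $\secparam = \log^{r} n$ into the GSW dimensions to get $\poly(\log n)$-sized ciphertext matrices, count the stored objects for item~1, bound a single $\pfhe.\eval$ on the $O(\log n)$-gate point-function circuit as $\poly(\log n)$ for item~2, and observe for item~3 that the $k$-sparsity check on $\vec{x}'$ limits the number of evaluations in $hash$ to $k$. Your identification of the single non-trivial estimate---the per-gate cost of GSW homomorphic multiplication at these parameters---matches exactly what the paper spends its effort on.
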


\begin{proof} Recall that computations in $\sa$ are modulo $q$ with $q \in \poly(n)$ and $q \gg n\cdot N$, where $N \in \poly(n)$ is the bound of the stream. Thus all entries have bit complexity in $\calO(\log n)$.
\begin{enumerate}[leftmargin=*]
    \item $\tilde{\pk}, sketch$, and $\tilde{\ct}_i$ for $i \in \ceil{\log n}$ are all $\calO(g) \times \calO(h)$ matrices over $\Z_q$. For $g = \secparam \in \poly(\log n)$ and $h \in \Theta(g \cdot \log q)$, they can be stored using $\calO(\poly(\log n)) = \OO(1)$ total bits of space.

    \item    The second bullet point considers the following operation:
    \begin{itemize}[topsep=0.3em]
        \item Evaluate $\hat{\ct_{i_t}} \gets \pfhe.\eval(\tilde{\pk}, C_{i_t}, \tilde{\ct_1}, \cdots, \tilde{\ct_\ell})$. Then update $sketch$ as $sketch \gets sketch + \Delta_t \cdot \hat{\ct_{i_t}}$.
    \end{itemize}

    Updating $sketch$ takes $\OO(1)$ time. For evaluation, recall that $C_{i_t}$ is a point function checking whether the input bits equal $i_t$. This can be naively implemented using $\calO(\log n)$ arithmetic operations, each of which runs in $\poly \log n$ time as described in $\eval_{+}$ and $\eval_{\times}$ above. Thus the total runtime is $\OO(1)$. 

    \item  The third bullet point considers the following operation:
    \begin{itemize}[topsep=0.3em]
        \item (If $\lVert \vec{x}'\rVert_0 > k$, output $\bot$. Otherwise:) Evaluate $\hat{\ct_i} \gets \pfhe.\eval(\tilde{\pk}, C_{i}, \tilde{\ct_1},$ $ \cdots, \tilde{\ct_\ell})$ for $i \in [n]$. Then compute $hash \gets \sum_{i \in [n]} \hat{\ct_i}\cdot \vec{x'}_i$.
    \end{itemize}

    Observe that although the construction is described as evaluating $n$-many $\hat{\ct_i}$, one for each coordinate of $\vec{x}'$, at this line of the algorithm $\vec{x}'$ is guaranteed to have at most $k$ non-zero entries. Thus $k$ evaluations (corresponding to non-zero coordinates of $\vec{x}'$) suffice for computing the linear combination $hash$. Each evaluation is $\OO(1)$, so the total runtime is $\OO(k)$.
\end{enumerate}\end{proof}

\paragraph{Efficiency of $\pfhe$ Based on Polynomial LWE.}

Under the polynomial LWE assumption, we set $\lambda = n^w$ for an arbitrarily small constant $w>0$. The $(\calT, \epsilon)$-security then reduces to the standard definition in terms of PPT adversaries and negligible advantage. We again choose $g = \secparam$ and $h \in \Theta(g \cdot \log q)$. The complexity is:

\begin{proposition}
    Let $\calR(\A)$ denote the runtime of an algorithm $\A$. Assuming the polynomial hardness of LWE, we have the following complexity regarding $\sa$ in Construction \ref{const1}:
    \begin{enumerate}[topsep=0.7em]
        \item $(\tilde{\pk}, \tilde{\ct_1}, \cdots, \tilde{\ct_{\ceil{\log n}}})$ and $sketch$ can be stored in $\OO(n^{w_1})$ bits of space. 

        \item $\calR(\sa.\update) = \OO(n^{w_2}) + \calR(\sa_0.\update)$.
        
        \item $\calR(\sa.\report) = \OO(k\cdot n^{w_3}) + \calR(\sa_0.\report)$.
    \end{enumerate}

    where $w_1, w_2$, and $w_3$ are arbitrarily small positive constants.
\end{proposition}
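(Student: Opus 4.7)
The plan is to follow essentially the same three-step analysis used for the subexponential case (the GSW complexity proposition), simply re-substituting the new setting $\lambda = n^w$ and re-deriving the sizes and runtimes. Since all arithmetic in $\sa$ is done modulo $q \in \poly(n)$, each scalar continues to have bit complexity $\calO(\log n)$; what changes is only the ambient parameters $g = \lambda = n^w$ and $h = \Theta(g \log q) = \Theta(n^w \log n)$ of the underlying GSW instance. My proof will walk through the three bullets in turn, plugging these in.

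For the space bound, I would observe that each of $\tilde{\pk}$, $sketch$, and the $\ceil{\log n}$ ciphertexts $\tilde{\ct_i}$ is a $\calO(g) \times \calO(h)$ matrix over $\Z_q$. Their total bit complexity is therefore $\calO(\log n) \cdot \calO(g h) = \calO(n^{2w} \cdot \poly\log n) = \OO(n^{2w})$. Choosing $w_1$ to be any constant strictly larger than $2w$ (or simply setting $w_1 = 2w$ with the $\OO$ hiding polylogs) yields the claimed $\OO(n^{w_1})$ bound.

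For $\sa.\update$, I would recall that the only $\pfhe$-related work is one evaluation $\hat{\ct}_{i_t} \gets \pfhe.\eval(\tilde{\pk}, C_{i_t}, \tilde{\ct_1}, \dots, \tilde{\ct_\ell})$ followed by one scaled addition into $sketch$. The point-function circuit $C_{i_t}$ has $\calO(\log n)$ gates, and each homomorphic gate on GSW ciphertexts costs $\poly(g, h) = \poly(n^w)$. The scaled addition is another $\poly(n^w)$ operation. Summing gives $\OO(n^{w_2})$ for a suitable constant $w_2$, to which one adds $\calR(\sa_0.\update)$ from the inner Relaxed-Sparse-Recovery subroutine. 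For $\sa.\report$, as in the subexponential analysis the key observation is that although the construction is described as summing over all $n$ coordinates, $\vec{x}'$ is guaranteed $k$-sparse at this point, so only $k$ nonzero evaluations contribute to $hash$. Each such evaluation costs $\OO(n^{w_3})$ by the same reasoning as for $\update$, giving total report overhead $\OO(k \cdot n^{w_3})$ above $\calR(\sa_0.\report)$.

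There is no real obstacle here: the argument is a straightforward re-parameterization of the GSW-complexity proposition, since nothing in the construction depends on $\lambda$ being polylogarithmic rather than polynomial. The only small item to verify is that the polynomial-LWE hypothesis with $\lambda = n^w$ still satisfies the premises of the robustness theorem, namely $\calT(\calS(n)) = \Omega(\poly n)$ and $\epsilon(\calS(n)) \leq \negl(n)$; under polynomial LWE this reduces to the statement that every PPT adversary has negligible distinguishing advantage, with $\calS(n) = n^w$, and a PPT$_\secparam$ bound translates to a PPT$_n$ bound since $n \leq \poly(\secparam^{1/w})$. Hence the same hybrid argument in Theorem \ref{thm:stream} applies and the efficiency bounds above are the only new content needed.
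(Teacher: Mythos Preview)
Your proposal is correct and follows exactly the same approach as the paper: the paper's proof is a one-line remark that the argument is identical to the subexponential GSW complexity proposition, except that the ciphertext matrix dimensions blow up from $\poly\log n$ to $n^{w}$. Your write-up simply spells this re-parameterization out in more detail, including the robustness-theorem premise check that the paper states in the surrounding text.
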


\begin{proof} The proof is exactly the same as the proof for Proposition \ref{prop:GSWcomplexity}, except that now the dimension of the ciphertext matrices blows up from $\poly \log n$ to $n^w$ for an arbitrarily small $w > 0$.

\end{proof}

\subsection{Complexity of Construction \ref{const1}}

Putting together Sections \ref{sec:instant-sa0} and \ref{sec:instant-pfhe}, we summarize the complexity of our construction in Table \ref{tab:stream} below.

\begin{table*}[h!]
    \caption{A summary of the time and space complexities of the $\sa$ scheme in Construction \ref{const1}, given different instantiations of $\sa_0$ and $\pfhe$. $c$ denotes an arbitrarily small positive constant.}\label{tab:stream}
    \begin{center}
    \begin{small}
    \begin{sc}
    \begin{tabular}{lcccccr}
    \toprule
    Relaxed Rec. Scheme & LWE Assumption & Space & Update Time & Report Time\\
    \midrule

     Theorem \ref{thm:rrec1}  & subexp & $\tilde{\mathcal{O}}(k)$ & $\tilde{\mathcal{O}}(1)$  & {$\tilde{\mathcal{O}}(k^{1+c})$} \\

     Theorem \ref{thm:rrec2}  & subexp & $\tilde{\mathcal{O}}(k)$ & $\tilde{\mathcal{O}}(1)$  & {$\tilde{\mathcal{O}}(k^2)$}  \\

     Theorem \ref{thm:rrec1}   & poly & $\tilde{\mathcal{O}}(k+n^{c})$ & $\tilde{\mathcal{O}}(n^{c})$  & {$\tilde{\mathcal{O}}(k \cdot (k^{c} + n^{c}))$} \\

     Theorem \ref{thm:rrec2}  & poly & $\tilde{\mathcal{O}}(k+n^{c})$ & $\tilde{\mathcal{O}}(n^{c})$  & {$\tilde{\mathcal{O}}(k \cdot (k + n^{c}))$}  \\
    \bottomrule
    \end{tabular}
    \end{sc}
    \end{small}
    \end{center}
    \vskip -0.1in
\end{table*}

\section{Distributed $K$-Sparse Recovery from Ring LWE}\label{sec:dis}

\subsection{Distributed Computation}\label{sec:dis-model}

In the distributed computation model, the dataset is 
 split over multiple servers. The goal is to design a communication protocol for servers to collectively compute a function on the aggregated dataset. In this work, we consider the message-passing coordinator model (see, e.g., \cite{phillips2015lower}), where there are $T$ servers and a single central coordinator. Each server $s_t: t\in[T]$ receives a partition of data in the form of an $n$-dimensional vector $\vec{x}_t$, and communicates with the central coordinator through a two-way, single-round
communication channel. At the end of the communication, the coordinator needs to compute or approximate a function $f_n(\vec{x})$ of the
aggregated vector $\vec{x} = \sum_{t \in [T]} \vec{x}_t$. We aim to design protocols that
minimize the total amount of communication through the channel and speed up the computation at each server and the coordinator. See also \cite{10.1145/2213977.2214063} and the references therein on the distributed model. 

Similar to the streaming model, we focus on designing a robust protocol in a white-box adversarial distributed setting, where the dataset is generated and partitioned across each server by a white-box adversary who monitors the internal state of the coordinator and the servers. We define the model as follows:

\begin{definition}[Distributed Protocol]
    Given a query function ensemble $\mathcal{F} = \{f_n : \Z^n \to X\}_{n \in \N}$ for some domain $X$, a distributed protocol $\dpr$ that computes $\mathcal{F}$ contains a tuple of algorithms $\dpr = (\dpr.\setup, $ $\dpr.\ser, \dpr.\cor)$, with the following properties:

    \begin{itemize}
        \item $\dpr.\setup(n, \rho_0) \to \zeta$ : On input a parameter $n$, the coordinator samples randomness $\rho_0$ and uses $\rho_0$ to compute some state $\zeta$. It sends $\zeta$ to all servers.

        \item $\dpr.\ser(\zeta, \vec{x}_t, \rho_t) \to p_t$ : On input a state $\zeta$ and a data partition $\vec{x}_t \in \Z^n$, the $t$-th server samples randomness $\rho_t$ and uses it to compute a packet $p_t$. It sends $p_t$ back to the coordinator.

        \item $\dpr.\cor(p_1, \cdots, p_T, \rho_c) \to r$: On input $T$ packets $p_1, \cdots, p_T$, the coordinator samples randomness $\rho_c$ and computes a query response $r \in X$.
        
        A response $r$ is said to be correct if $r = f_n(\vec{x})$ where $\vec{x} = \sum_{t \in [T]}\vec{x}_t$.
    \end{itemize}
\end{definition}

As in the streaming setting, we again assume bounded integer inputs, such that an arbitrary subset sum of $\{\vec{x}_t: t\in[T]\}$ is in $[-N, N]^n$ for some $N \in \poly(n)$.

\begin{definition}
    We say that a distributed protocol $\dpr$ is white-box adversarially robust if for all dimension parameters $n \in \N$, the following holds: For any PPT adversary $\A$,
    the following experiment $\expt_{\A, \dpr}(n)$ outputs $1$ with probability at most $\negl(n)$: \vspace{0.5em}
    
    $\expt_{\A, \dpr}(n): $
    \begin{enumerate}
        \item The challenger and $\A$ agree on a query function ensemble $\mathcal{F} = \{f_n : \Z^n \to X\}_{n \in \N}$ for some domain $X$. On input a dimension parameter $n$, let $f_n \in \mathcal{F}$ be the query function of the experiment.

        \item The challenger samples all randomness $\rho_0$, $\rho_t : t \in [T]$, and $\rho_c$ for the servers and the coordinator. It then provides $\rho_0, \rho_1, \cdots, \rho_T$ and $\rho_c$ to $\A$.

        \item $\A$ generates $T$ partitions of data $\vec{x}_1, \cdots, \vec{x}_T \in \Z^n$ and provides them to the challenger.

        \item The challenger executes the protocol to sequentially compute the following:
        
         \begin{enumerate}[label=(\roman*)]
             \item $\zeta \gets \dpr.\setup(n, \rho_0)$.

             \item $p_t \gets \dpr.\ser(\zeta, \vec{x}_t, \rho_t)$ for all $t \in [T]$.

             \item $r \gets \dpr.\cor(p_1, \cdots, p_T, \rho_c)$.
         \end{enumerate}

        \item The experiment outputs $1$ if and only if at some time $r \neq f_n(\vec{x})$.
    \end{enumerate}
\end{definition}

\begin{remark}
    We remark on the connection between the streaming model and the distributed model. There exists a well-known reduction from a streaming algorithm to a distributed protocol under the oblivious setting: Each server runs $\sa.\update$ on the same initial state $\zeta \gets \sa.\setup$, using all entries of $\vec{x}_t$ (its data partition) as updates. It then sends the resulting sketch to the coordinator. As long as the sketches are linear, the coordinator can run $\sa.\report$ on the sum of all sketches to compute a query response.
    \end{remark}

    One may use the same recipe to construct a {\it WAR} distributed protocol from a WAR streaming algorithm. In general, the robustness is retained by such a reduction if $\sa.\update$ and $\sa.\report$ are deterministic. (Otherwise, the adversaries in the experiment $\expt_{\A, \dpr}$ would have additional views to all ``future'' randomness, while adversaries in the streaming experiment $\expt_{\A, \sa}$ only monitor the past randomness already used by $\sa$.) Our Construction \ref{const1} satisfies this condition.

    \subsection{Distributed Protocol Construction}

    For completeness, we briefly describe the distributed protocol for $k$-Sparse Recovery as follows, which essentially follows the same route as our streaming Construction \ref{const1}.

    \begin{const}\label{const2}

    Similar to Construction \ref{const1}, let $\pfhe$ be a pseudorandom $\fhe$ scheme. And let $\sa_0$ be a streaming algorithm for the Relaxed $k$-Sparse Recovery problem. We assume that $\sa_0$ performs a deterministic linear measurement $\Phi_n$ on an arbitrary length-$n$ input vector, as described in the previous instantiation Section \ref{sec:instant-sa0}.
    \begin{itemize}[leftmargin=*,topsep=1em,itemsep=1em]
        \item $\dpr.\setup(n)$ :        Let $\lambda = \calS(n)$ be the security parameter of $\pfhe$. Sample $\tilde{\pk} \gets \D_{\tilde{\pk}(\lambda)}$ and $\tilde{\ct_1}, \tilde{\ct_2}, \cdots, \tilde{\ct_{\ceil{\log n}}} \gets \D_{\tilde{\ct}(\lambda)}$. Send $\zeta:= (\tilde{\pk}, \tilde{\ct_1}, \cdots, \tilde{\ct_{\log n}})$ to all servers

        \item $\dpr.\ser(\zeta, \vec{x_t})$ :
        Given $\zeta:= (\tilde{\pk}, \tilde{\ct_1}, \cdots, \tilde{\ct_{\ceil{\log n}}})$, evaluate $\hat{\ct_i} \gets \pfhe.$ $\eval(\tilde{\pk}, C_{i}, \tilde{\ct_1}, \cdots,$ $\tilde{\ct_{\ceil{\log n}}})$ for $i \in [n]$, where $C_i$ is the same point function as in Construction \ref{const1}. 
        
        Compute $\Phi_n(\vec{x_t})$ and $sketch_t \gets \sum_{i \in [n]} \hat{\ct_i}\cdot (\vec{x_t})_i$, where $(\vec{x_t})_i$ denotes the $i$-th coordinate of $\vec{x_t}$. Send $p_t := (sketch_t, \Phi_n(\vec{x_t}))$ to the coordinator.

        \item $\dpr.\cor(p_1, \cdots, p_T)$: Let $\vec{x}' \gets \sa_0.\report(\sum_{t \in  [T]} \Phi_n(\vec{x_t}))$, i.e., the output of running the reconstruction algorithm of $\sa_0$ on the sum of all measurement results. If $\lVert \vec{x}'\rVert_0 > k$, output $\bot$.

        Otherwise, evaluate $\hat{\ct_i} \gets \pfhe.\eval(\tilde{\pk}, C_{i}, \tilde{\ct_1}, \cdots, \tilde{\ct_{\ceil{\log n}}})$ for $i \in [n]$. Then compute $hash \gets \sum_{i \in [n]} \hat{\ct_i}\cdot \vec{x'}_i.$ Output $\vec{x}'$ if $hash = \sum_{t \in [T]} sketch_t$, and output $\bot$ otherwise.

        \end{itemize}
    \end{const}

    \subsection{Efficiency}
    In this section, we focus on showing that we can use BV \cite{BV11F}, a Ring-LWE based $\fhe$ scheme, to achieve near-linear processing time on $\dpr.\ser$ under the {\it polynomial} ring-LWE assumption.

    As a warm-up, we state the complexity of Constrution \ref{const2} using GSW to implement $\pfhe$:
    
\begin{proposition}
    Let $\calR(\A)$ denote the runtime of an algorithm or an evaluation $\A$, we have the following complexity regarding $\dpr$ in Construction \ref{const2}:
    \begin{itemize}[leftmargin=*]
    \item Assuming the subexponential hardness of LWE:
    \begin{enumerate}[topsep=0.3em]
        \item The total bits of communication between an arbitrary server and the coordinator is $\OO(1)$ + $size(\Phi_n(\vec{x}))$.
        
        \item $\calR(\dpr.\cor) = \OO(k) + \calR(\sa_0.\report)$.
        
        \item $\calR(\dpr.\ser) = \OO(k) + \calR(\Phi_n(\vec{x}))$.
    \end{enumerate}

        \item Assuming the polynomial hardness of LWE: :
    \begin{enumerate}[topsep=0.3em]
        \item The total bits of communication between an arbitrary server and the coordinator is $\OO(n^{w_1})$ + $size(\Phi_n(\vec{x}))$.
        
        \item $\calR(\dpr.\cor) = \OO(k\cdot n^{w_2}) + \calR(\sa_0.\report)$.
        
        \item $\calR(\dpr.\ser) = \OO(k\cdot n^{w_3}) + \calR(\Phi_n(\vec{x}))$.
    \end{enumerate}
\end{itemize}    

    where $\vec{x} \in [-N, N]^n$ is arbitrary, and $w_1, w_2,$ and $w_3$ are arbitrarily small positive constants.
\end{proposition}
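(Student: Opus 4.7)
The plan is to follow the same accounting recipe used in the streaming case (the GSW complexity proposition), adapted to the three quantities in the distributed setting: per-server communication, coordinator runtime, and per-server runtime. The underlying inputs to the accounting are already established: under subexponential LWE with $\secparam = \log^{r} n$, the GSW dimensions $g, h$ are $\poly(\log n)$, so every individual ciphertext (including $\tilde{\pk}$, each $\tilde{\ct_i}$, each evaluated $\hat{\ct_i}$, and any sketch-shaped aggregate) fits in $\OO(1)$ bits and a single evaluation of a point circuit $C_i$ on $\ceil{\log n}$ base ciphertexts takes $\OO(1)$ time. Under polynomial LWE with $\secparam = n^{w}$, these same quantities blow up by $n^{w_1}$ and $n^{w_2}$ respectively, for arbitrarily small constants.

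For the communication bound I would just inspect the packet $p_t = (sketch_t, \Phi_n(\vec{x_t}))$ sent by server $t$. The coordinator's message $\zeta$ consists of $\tilde{\pk}$ and $\ceil{\log n}$ pseudo-ciphertexts, all of size $\OO(1)$ bits. In the reverse direction, $sketch_t = \sum_{i \in [n]} \hat{\ct_i} \cdot (\vec{x_t})_i$ lives in the same $\Z_q$-module as a single ciphertext, because ciphertext addition and integer scaling by entries bounded by $\poly(n) \ll q$ are closed operations by the $\Z$ Linear Homomorphism property of $\pfhe$. Hence $sketch_t$ also has size $\OO(1)$ (respectively $\OO(n^{w_1})$), and adding $\mathrm{size}(\Phi_n(\vec{x_t}))$ yields the stated communication cost.

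For the coordinator, I would first argue that summing the $T$ incoming measurements and sketches is subsumed in the eventual $\sa_0.\report$ call, whose input is exactly $\sum_t \Phi_n(\vec{x_t}) = \Phi_n(\vec{x})$. After $\sa_0.\report$ returns $\vec{x}'$, the remaining work is the sparsity check (free up to lower-order terms) and the hash computation. Because the hash is evaluated only when $\|\vec{x}'\|_0 \leq k$, the linear combination $hash = \sum_{i : \vec{x}'_i \neq 0} \hat{\ct_i}\cdot \vec{x}'_i$ touches at most $k$ indices; each requires one $\pfhe.\eval$ on $C_i$ plus one scaled ciphertext addition. This gives the claimed $\OO(k) + \calR(\sa_0.\report)$ under subexponential LWE, and $\OO(k\cdot n^{w_2}) + \calR(\sa_0.\report)$ in the polynomial case.

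For the server, I would first charge $\calR(\Phi_n(\vec{x}))$ for the measurement computation and then account for forming $sketch_t$. The latter naively needs $n$ point-circuit evaluations and $n$ scaled additions, each $\OO(1)$ under subexponential LWE (resp.\ $\OO(n^{w_3})$ under polynomial LWE). The main obstacle here is justifying why this bulk $n$-sized cost does not show up as a separate additive term in the stated $\OO(k) + \calR(\Phi_n(\vec{x}))$ bound; the resolution is that the $\sa_0$ measurements from Section~\ref{sec:instant-sa0} already inspect every coordinate of $\vec{x_t}$, so $\calR(\Phi_n(\vec{x_t})) = \Omega(n)$, and the $n$ ciphertext-side operations can be interleaved with the traversal that computes $\Phi_n(\vec{x_t})$ without changing the asymptotic total. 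The only residual work that is genuinely new beyond the measurement is the $\OO(k)$-sized bookkeeping needed to collect the final sketch modulo $q$, which explains the additive $\OO(k)$ term. The polynomial-LWE bound follows by replacing the $\OO(1)$-per-operation cost with $\OO(n^{w_3})$ everywhere.
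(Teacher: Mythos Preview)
Your treatment of the communication and coordinator runtime is essentially the same as the paper's, just more explicit: the paper simply observes that these two quantities coincide with the space and report-time of the streaming scheme in Construction~\ref{const1}, and invokes Proposition~\ref{prop:GSWcomplexity}.

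For $\calR(\dpr.\ser)$, however, your argument and the paper's diverge, and yours has a real gap. The paper's proof is one line: each server does $n$ coordinate updates, so $\calR(\dpr.\ser) \approx n \cdot \calR(\sa.\update)$, giving $\OO(n)$ under subexponential LWE and $\OO(n^{1+w})$ under polynomial LWE. These are exactly the numbers recorded in Table~\ref{tab:summary}, which strongly suggests that the $\OO(k)$ (resp.\ $\OO(k\cdot n^{w_3})$) in item~3 of the proposition is a typo for $\OO(n)$ (resp.\ $\OO(n\cdot n^{w_3})$). You instead try to justify the literal $\OO(k)$ by absorbing the $n$ evaluations into $\calR(\Phi_n(\vec{x_t}))$. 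That absorption is fine under subexponential LWE, where both sides are $\OO(n)$, but it breaks under polynomial LWE: the $n$ evaluations cost $\OO(n\cdot n^{w_3})$, whereas $\calR(\Phi_n(\vec{x_t}))$ is still $\OO(n)$ and cannot swallow it. Your final claim about ``$\OO(k)$-sized bookkeeping to collect the final sketch'' has no basis either: the server's sketch is a sum over all $n$ coordinates, and $k$ never enters the server-side computation at all. In short, you have over-engineered an explanation for what is almost certainly a transcription error in the statement; the paper's own proof does not attempt to establish the $\OO(k)$ term.
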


\begin{proof} 
    The communication complexity and the runtime of $\dpr.\cor$ are the same as the bit complexity and reporting time of $\sa$ in Construction \ref{const1} (except for some negligible addition operations). The runtime of $\dpr.\ser$ is roughly $n \cdot \calR(\sa.\update)$, because every server has to perform $n$ evaluations, one for each coordinate of its data partition.

\end{proof}

\paragraph{Efficiency Based on Ring-LWE.} Before discussing the efficiency advantage, we briefly review some important features of BV to show that it satisfies a variant of the $\pfhe$ definition. See \cite{BV11F} for details of the BV construction.

Consider a ring $R := \frac{\mathbb{Z}[x]}{x^{g}+1}$ where $g$ is the degree parameter. Let $R_q$ denote $R/qR$ for a prime modulus $q$, and let $R_t$ be the message space for some prime $t \in \Z_q^*$ and $t \ll q/g$:

\begin{itemize}[leftmargin=*, topsep=0.2em]
    \item The public keys of BV are two-dimensional vectors in $R^2_q$, which are computationally indistinguishable from uniformly random samples from $R^2_q$ under the Ring-LWE assumption.
    
    The ciphertexts are also vectors over $R_q$ that are indistinguishable from uniform random.

    \item BV satisfies an analogy of $\Z$ Linear Homomorphism, with coefficients of linear combinations $\calM = \sum_{i \in [k]}x_i\cdot \ct_i$ being small norm ring elements $x_i \in R_t$ as opposed to integer scalars. As long as the linear combination of corresponding messages ($\calM_{msg} = \sum_{i \in [k]}x_i\cdot m_i$, where $m_i \in R_t$ is the message encrypted by $\ct_i$) still lies in $R_t$, running the standard BV decoding algorithm on $\calM$ output the correct message.

    \item $\eval$ and $\dec$ are deterministic by the construction of BV.
\end{itemize}

We choose parameters for our Ring-LWE based construction as follows: Let $\secparam = n^w$ for an arbitrarily small constant $w >0$. Let the dimension $g = \secparam$, the modulus of the message space $t > n \cdot N$, where $N \in \poly(n)$ is the bound of the input data. The modulus $q$ is chosen with subexponential modulus-to-noise ratio and $q \gg g \cdot t$ the Ring-LWE assumption states that for any $h \in 
 \poly(\secparam)$, it holds that 
\[\left\{(a_i, a_i\cdot s  + e_i): 
        e_i \gets \chi 
    \right\}_{i \in [h]} \approx_c \left\{(a_i, u_i):\begin{aligned}
        u_i &\gets R_q\\
    \end{aligned}\right\}_{i \in [h]}\]

    where $s$ is sampled from the noise distribution $\chi$ and $a_i$ are uniform in $R_q$.

We have the following complexity:

\begin{proposition}
    Let $\calR(\A)$ denote the runtime of an algorithm or an evaluation $\A$. Assuming the polynomial hardness of Ring-LWE, a variant of $\dpr$ in Construction \ref{const2} has the following complexity:

    \begin{enumerate}[topsep=0.3em]
        \item The total bits of communication between an arbitrary server and the coordinator is $\OO(n^{w_1})$ + $size(\Phi_n(\vec{x}))$.
        
        \item $\calR(\dpr.\cor) = \OO(n) + \calR(\sa_0.\report)$.
        
        \item $\calR(\dpr.\ser) = \OO(k\cdot n^w_2) + \calR(\Phi_n(\vec{x}))$.
    \end{enumerate}

    where $\vec{x} \in [-N, N]^n$ is arbitrary, and $w_1, w_2$ are arbitrarily small positive constants.
\end{proposition}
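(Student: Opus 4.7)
The plan is to verify each complexity claim by carrying over the GSW-based analysis of Section~\ref{sec:instant-pfhe} with two Ring-LWE specific ingredients. First, each BV ciphertext is a pair of elements of $R_q = \Z_q[x]/(x^g+1)$ with $g = \secparam = n^w$, giving a per-ciphertext bit-size of $\OO(g) = \OO(n^{w_1})$ once $\log q \in \poly(\log\secparam)$ is absorbed into the $\OO$ notation. Second, when the plaintext modulus $t$ is chosen so that $R_t$ splits via CRT into $g$ copies of a smaller field, a single BV ciphertext encodes $g$ independent slots and all $\pfhe.\eval$ operations act slot-wise. Plugging the per-ciphertext size into the description of $\zeta$ and $p_t$ immediately yields the communication bound $\OO(n^{w_1}) + \text{size}(\Phi_n(\vec{x}))$.

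For the server and coordinator runtimes, I would exploit the SIMD packing to batch the $n$ point-function evaluations. Partition $[n]$ into $\lceil n/g\rceil$ blocks of size $g$ and, for each block $b$, run a single $\pfhe.\eval$ whose $j$-th slot computes the point-function indicator for the $j$-th coordinate inside block $b$. Since each point-function circuit has size $\mathcal{O}(\log n)$ and each ring operation costs $\OO(g)$ by NTT-based multiplication, each block evaluation costs $\OO(g)$, for a total of $\OO(n)$ across all $n/g$ blocks. The server then packs the corresponding $g$ entries of $\vec{x}_t$ into one plaintext in $R_t$ and multiplies it by the packed evaluated ciphertext at cost $\OO(g)$, accumulating the result into $sketch_t$; summing over all blocks yields $\OO(n)$ server time on top of $\calR(\Phi_n(\vec{x}_t))$, which subsumes the stated bound. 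The coordinator performs the same batch evaluations once, computes $hash$ on the packed ciphertexts, sums the received $sketch_t$'s and compares, all at cost $\OO(n)$, and then calls $\sa_0.\report$, giving the claimed $\OO(n) + \calR(\sa_0.\report)$ bound.

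The main obstacle will be to re-verify that BV satisfies the variant of $\pfhe$ required by the security proof of Theorem~\ref{thm:stream} and that the hybrid argument of that theorem survives SIMD packing. Pseudorandomness of public keys and ciphertexts follows directly from polynomial Ring-LWE together with standard leftover-hash arguments, as reviewed for GSW in Section~\ref{sec:instant-pfhe}. For the linear homomorphism / decoding property one must check that scaling ciphertexts by plaintext coefficients $(\vec{x}_t)_i \in [-N,N]$ and summing up to $n$ such terms keeps the noise below the decoding threshold; this is guaranteed by the parameter choice $t > nN$ and $q \gg g\cdot t$ with subexponential modulus-to-noise ratio, after which BV's decoding recovers $\sum_i \mu_i \cdot (\vec{x}_t)_i$ componentwise in each slot. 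The hybrid argument of Theorem~\ref{thm:stream} then carries over essentially verbatim, with the modification that the challenger plants one candidate ``disagreement coordinate'' per slot (so that a uniformly random $m \in [n]$ still hits a true disagreement with probability $1/n$): a mismatch $\vec{x}_m \neq \vec{x'}_m$ propagates to a slot-wise decoding mismatch between $hash$ and $\sum_t sketch_t$, contradicting the equality checked by $\dpr.\cor$ exactly as in the streaming proof.
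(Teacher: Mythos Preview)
Your proposal is workable but takes a different route from the paper. The paper does \emph{not} use CRT/SIMD slot packing at all. Instead it uses the coefficient embedding of the ring: it partitions $[n]$ into $n^{1-w}$ chunks of length $g=n^{w}$, replaces the point functions $C_i$ by \emph{interval} indicators $C'_j(\mu)=\mathds{1}(\mu\in[jg,(j{+}1)g))$ (so each $C'_j$ just tests the high-order bits of the encrypted index), and packs the $g$ coordinates of $\vec{x}$ in the $j$-th chunk as the \emph{coefficients} of a single small-norm ring element $x_j\in R_q$. The sketch is then $\sum_{j}\hat{\ct}'_j\cdot x_j$, where the product uses the ring-coefficient version of linear homomorphism. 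Security is immediate: if $m$ lies in chunk $j^{*}$ the sum decodes to the polynomial $x_{j^{*}}$, and a disagreement at coordinate $m$ means $x_{j^{*}}\neq x'_{j^{*}}$ as ring elements, so $hash\neq sketch$ with no modification to the hybrid argument.

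Your SIMD route can be pushed through, but it needs more care than you indicate. For a single $\pfhe.\eval$ to compute a \emph{different} point function in each slot, the circuit must take the target index as a per-slot plaintext constant; you should spell this out rather than leave it implicit. Also, your ``plant one candidate disagreement coordinate per slot'' modification is unnecessary and muddies the argument: encrypting the single random $m$ (identically in all slots) already suffices, exactly as in Theorem~\ref{thm:stream}. Compared to the paper's approach, yours requires the plaintext modulus $t$ to split $R_t$ into $g$ CRT slots and requires full $\log n$-bit comparison circuits with packed plaintext constants, whereas the paper avoids any CRT assumption and only checks the high bits of $m$. Both give the same asymptotics; the coefficient-packing version is just lighter on assumptions and on the security re-verification.
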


\begin{proof}
    In comparison to the complexity under polytime standard LWE, we save a $n^{\Theta(1)}$ factor in the runtime of $\dpr.\ser$. This is due to speeding up the following two steps:
    \begin{enumerate}
        \item Evaluate $\hat{\ct_i} \gets \pfhe.\eval(\tilde{\pk}, C_{i}, \tilde{\ct_1}, \cdots, \tilde{\ct_{\ceil{\log n}}})$ for all $i \in [n]$

        \item Compute $\sum_{i \in [n]} \hat{\ct_i}\cdot \vec{x}_i$
    \end{enumerate}

    The idea is to represent the messages as ring elements over $R_q$ as opposed to individual scalars, and apply fast Fourier transforms to efficiently multiply them. 
    
    In a variant of Construction \ref{const2}, instead of performing $n$ evaluations one for each coordinate $i \in [n]$, we partition the dimension $n$ into $n^{1-w}$ chunks. Each chunk is of length $n^w$, where $g = n^w$ is the degree parameter of the ring for an arbitrarily small constant $w >0$. Such that a length-$n$ vector $\vec{x}$ is partitioned into $n^{1-w}$ consecutive segments, each can be represented as a (small-normed in comparison to $q$) ring element $x_j \in R_q$ for $j \in [n^{1-w}]$.

    We also define a new set of functions $C_j'$ as follows:
    \[C'_{j}(\mu_1, \cdots, \mu_{\ceil{\log n}}) = \]
    \[\begin{cases}
            1 & \text{if $(\mu_1, \cdots, \mu_{\ceil{\log n}})$ represents a value in $[jn^w, (j+1)n^w)$} \\
            0 & \text{otherwise}.
        \end{cases}\]

    We evaluate $\hat{\ct_j'} \gets \pfhe.\eval(\tilde{\pk}, C'_{j}, \tilde{\ct_1}, \cdots, \tilde{\ct_{\ceil{\log n}}})$ for all $j \in [n^{1-w}]$ to replace the above step 1. The total evaluation time is $\calO(n^{1-w} \cdot n^{w}\poly \log (n^w)) = \OO(n)$, since two ring elements can be multiplied in $\calO(n^{w}\poly \log (n^w))$ time using the FFT. Then the linear combination in the above step 2 is replaced by $\sum_{j \in [n^{1-w}]} \hat{\ct'_j}\cdot x_j$, which similarly takes $\OO(n)$ time to compute.

    Lastly, we check that given two vectors $\vec{x}$ and $\vec{x'}$ differ at a coordinate $m$, when $\ct_1, \cdots, \ct_{\ceil{\log n}}$ encrypt the bits of $m$,  we still have \[sketch = \sum_{j \in [n^{1-w}]} \hat{\ct'_j}\cdot x_j \neq \sum_{j \in [n^{1-w}]} \hat{\ct'_j}\cdot x'_j = hash.\]

    because they can be decrypted into distinct ring elements. This concludes that the above variant of Construction \ref{const2} remains correct while achieving near-linear runtime of $\dpr.\cor$ and $\dpr.\ser$. 
    
\end{proof}

\section{Low-Rank Matrix and Tensor Recovery}\label{sec:matrixTensor}

The low-rank matrix and tensor recovery problems are defined as follows:

\begin{definition}[Rank-$k$ Matrix Recovery]
    Given a rank parameter $k$ and an input matrix $\mat{X} \in \Z^{n \times m}$, output $\mat{X}$ if $rank(\mat{X}) \leq k$ and report $\bot$ otherwise. 
\end{definition}

Let $\otimes$ denote the outer product of two vectors. We use the following definition for tensor rank:

\begin{definition}[CANDECOMP/PARAFAC (CP) rank]
    For a tensor $X \in \R^{n^d}$, consider it to be the sum of $k$ rank-$1$ tensors:
    $X = \sum^k_{i=1}(x_{i1} \otimes x_{i2} \otimes \cdots  
    \otimes x_{id})$ where $x_{ij} \in \R^{n}$.
    The smallest number of rank-$1$
    tensors that can be used to express a tensor $X$ is then defined to be the rank of the tensor. 
\end{definition}

\begin{definition}[Rank-$k$ Tensor Recovery]
    Given a rank parameter $k$ and an input tensor $\mat{X} \in \Z^{n^d}$, output $\mat{X}$ if $CP-rank(\mat{X}) \leq k$ and report $\bot$ otherwise.
\end{definition}

The low-rank matrix and tensor recovery problems are natural generalizations of the $k$-sparse vector recovery problems. Using the same framework as the vector case, they can be solved by maintaining two hashes: one recovers the matrix or tensor but potentially produces garbage output, and the other uses a cryptographic tool to verify whether the output matches the input data. These are natural generalizations of the $k$-sparse vector recovery problems.

In \cite{fw23}, the recovery hash is measured by a random Gaussian matrix, and the verification hash is measured by an SIS matrix. Both measurements are assumed to be heuristically compressed by an idealized hash function. We instead replace the former with a deterministic measurement, which can be explicitly constructed and decoded in polynomial time \cite{10.1145/2213977.2213995}. The latter is replaced by the same FHE-based hashing as the one we used for the vector case in Construction \ref{const1}, which operates on the vectorization of the data matrix or tensor. Similar to the vector case, our construction does not require using an idealized hash function or storing a separate large state. 

In the streaming model, the complexities of our matrix and tensor recovery algorithms are as follows:

\begin{proposition}
    Let $\calR(\A)$ denote the runtime of an algorithm or an evaluation $\A$. Assuming the polynomial hardness of LWE,
    \begin{itemize}[topsep=0.3em]
        \item There exists a WAR streaming algorithm $\sa$ for the rank-$k$ matrix recovery problem using $\OO(nk)$ bits of space.
        
        \item There exists a WAR streaming algorithm $\sa$ for the rank-$k$ tensor recovery problem using $\OO(nk^{\ceil{\log d}})$ bits of space.
    \end{itemize}

\end{proposition}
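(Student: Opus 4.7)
My plan is to mirror Construction \ref{const1} and Theorem \ref{thm:stream}, replacing the sparse recovery primitive $\sa_0$ with a deterministic low-rank recovery primitive, while hashing the vectorization of the matrix or tensor with exactly the $\pfhe$-based scheme of Construction \ref{const1}. Concretely, the streaming algorithm maintains in parallel (i) a deterministic linear sketch supporting rank-$k$ reconstruction, and (ii) a hash $\sum_i \hat{\ct}_i \cdot \vec{x}_i$ where $\vec{x}$ is the vectorization of the underlying matrix $\mat{X}\in\Z^{n\times m}$ (respectively, tensor $X\in\Z^{n^d}$) and $i$ ranges over $[nm]$ (respectively, $[n^d]$). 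The hash key consists of $\ceil{\log(nm)}$ or $d\ceil{\log n}$ pseudorandom ciphertexts, and each update evaluates the corresponding point-function circuit on the relevant vectorized index.

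For the matrix case, I would instantiate the recovery primitive via the explicit construction of \cite{10.1145/2213977.2213995}, which provides a deterministic rank-$k$ recovery scheme with $\OO(nk)$ measurements and a polynomial-time decoder; the $\pfhe$ hash contributes only $\OO(n^c)$ additional bits under polynomial LWE, which is absorbed into $\OO(nk)$. For the tensor case, I would use a divide-and-conquer matricization: partition the $d$ modes into two halves, unfold into a matrix, and recover via the matrix primitive. A CP-rank-$k$ tensor has matrix rank at most $k$ under any such unfolding, so the recursion yields a factor $k$ per level and terminates in $\ceil{\log d}$ levels, giving $\OO(nk^{\ceil{\log d}})$ total space.

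The robustness proof transfers nearly verbatim. One replays the hybrid sequence $\hybrid_0,\ldots,\hybrid_4$ from the proof of Theorem \ref{thm:stream}, with the only change that the hidden disagreement index $m$ is sampled uniformly from $[nm]$ or $[n^d]$ rather than $[n]$. This weakens the bound in Lemma \ref{lem:hyb12} to $\Pr[\hyb_2(\A)=1]\geq \Pr[\hyb_1(\A)=1]/(nm)$ or $\Pr[\hyb_1(\A)=1]/n^d$, a loss that is still polynomial in $n$ (since $d=O(1)$) and therefore still absorbed by the negligible $\pfhe$ advantage. Lemma \ref{lem:hyb4} then pins any hash collision to equality on the guessed coordinate via the $\Z$-linear-homomorphism property, exactly as in the vector case. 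Determinism of the recovery primitive guarantees that the adversary's simulated extraction reproduces the challenger's output, so Claim \ref{clm:hybClaim} also carries over unchanged.

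The main obstacle is the deterministic tensor recovery primitive: unlike matrices, there is no off-the-shelf low-space, polynomial-time-decodable streaming scheme for CP-rank-$k$ tensors, and the bound $\OO(nk^{\ceil{\log d}})$ is tailored to the recursive matricization strategy above. The chief technical work is to show that this recursion can be implemented as a linear sketch (so updates compose additively with the $\pfhe$ hash), that the top-level decoding can be unwound in polynomial time, and that the space accounting across the $\ceil{\log d}$ recursion levels multiplies out precisely to the claimed bound. The cryptographic half of the proof, by contrast, transfers mechanically from the vector case and requires no new ideas.
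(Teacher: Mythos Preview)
Your framework is exactly the paper's: run a deterministic low-rank recovery sketch alongside the $\pfhe$ hash of Construction~\ref{const1} applied to the vectorization, and replay the hybrid sequence of Theorem~\ref{thm:stream} with the guessed index $m$ now ranging over $[nm]$ or $[n^d]$. The paper's own proof is in fact much terser than yours---it simply cites \cite{10.1145/2213977.2213995} for deterministic matrix \emph{and} tensor recovery schemes with the stated space bounds, then adds the $\OO(n^c)$ hash overhead.

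The one substantive divergence is your treatment of the tensor primitive. You flag it as ``the main obstacle'' and propose to build it yourself via recursive matricization, arguing that $\ceil{\log d}$ levels of halving the modes, each contributing a factor of $k$, yields $\OO(nk^{\ceil{\log d}})$. The paper does not do this: it treats the tensor recovery scheme as an off-the-shelf black box from the same reference \cite{10.1145/2213977.2213995}, so no recursion argument is needed. Your recursive sketch is plausible in outline but has a gap you should be aware of: when you matricize a CP-rank-$k$ tensor and take a rank-$k$ factorization of the resulting matrix, the factor vectors (reshaped back into order-$d/2$ tensors) need \emph{not} themselves have low CP rank, so the recursion as stated does not obviously close. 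The actual construction behind the $nk^{\ceil{\log d}}$ bound is more delicate than a naive divide-and-conquer on modes. Since the paper sidesteps this entirely by citation, you can too; but if you want to spell it out, the recursion needs a more careful invariant than ``each half is again CP-rank-$k$.''
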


\begin{proof} 
    The recovery scheme for matrices and tensors takes $\OO(nk)$ and $\OO(nk^{\ceil{\log n}})$ bits of space, respectively \cite{10.1145/2213977.2213995}. Under the polynomial hardness assumption of LWE, we set $\secparam = n^c$ for an arbitrarily small positive constant $c$. Taking the sum of the two terms produces the claimed complexities.
\end{proof}

In addition, the above streaming algorithms can be converted into WAR distributed protocols for matrix and vector recovery, following the same roadmap as in Section \ref{sec:dis}.

\begin{proposition}
    Let $\calR(\A)$ denote the runtime of an algorithm or an evaluation $\A$. Assuming the polynomial hardness of LWE: :
    \begin{itemize}[topsep=0.3em]
        \item There exists a WAR distributed protocol $\dpr$ for the rank-$k$ matrix recovery problem, using $\OO(nk)$ bits of communication between each server and the coordinator, and $\OO(n^2k)$ processing time at each server.
        
        \item There exists a WAR streaming algorithm $\sa$ for the rank-$k$ tensor recovery problem, using $\OO(nk^{\ceil{\log d}})$ bits of communication between each server and the coordinator.
    \end{itemize}
\end{proposition}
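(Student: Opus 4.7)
The plan is to adapt the FHE-based distributed protocol of Construction \ref{const2} to operate on vectorized matrices (length $nm$) and tensors (length $n^d$), using the deterministic low-rank measurement schemes of \cite{10.1145/2213977.2213995} in place of $\sa_0$. Concretely, for matrix recovery the coordinator samples $\tilde{\pk} \gets \mathcal{D}_{\tilde{\pk}(\lambda)}$ and $\lceil \log(nm) \rceil$ pseudorandom ciphertexts $\tilde{\ct_1}, \ldots, \tilde{\ct_{\lceil \log nm \rceil}} \gets \mathcal{D}_{\tilde{\ct}(\lambda)}$, which form the shared state $\zeta$. Each server, given its partition $\mat{X}_t \in \Z^{n \times m}$, computes the deterministic $\OO(nk)$-dimensional linear measurement $\Phi(\mat{X}_t)$ of \cite{10.1145/2213977.2213995}, and the FHE verification hash
\[
sketch_t \gets \sum_{(i,j) \in [n] \times [m]} \hat{\ct}_{i,j} \cdot (\mat{X}_t)_{i,j},
\]
where $\hat{\ct}_{i,j} \gets \pfhe.\eval(\tilde{\pk}, C_{(i,j)}, \tilde{\ct_1}, \ldots, \tilde{\ct_{\lceil \log nm \rceil}})$ and $C_{(i,j)}$ is the point circuit testing whether its input bits encode index $(i,j)$. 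It sends $p_t := (\Phi(\mat{X}_t), sketch_t)$ to the coordinator, which sums the measurements, runs the reconstruction to obtain a candidate $\mat{X}'$ of rank $\leq k$ (outputting $\bot$ otherwise), and outputs $\mat{X}'$ iff its FHE hash equals $\sum_t sketch_t$. The tensor protocol is identical except that the vectorization has length $n^d$, the measurement scheme of \cite{10.1145/2213977.2213995} produces $\OO(nk^{\lceil \log d \rceil})$ coordinates, and $\lceil d \log n \rceil$ pseudorandom ciphertexts suffice.

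WAR security follows the same hybrid template as the proof of Theorem \ref{thm:stream}. The coordinator privately samples a uniform index $m$ in the vectorized domain ($[nm]$ for matrices, $[n^d]$ for tensors). The analogue of Claim \ref{clm:hybClaim} forces any adversarial collision $\mat{X} \neq \mat{X}'$ to differ in at least one vectorized coordinate, so the coordinator's guess hits a disagreement with probability at least $1/(nm)$, resp.\ $1/n^d$. Switching the pseudorandom key and the ciphertexts to real encryptions of the bits of $m$ costs at most $\negl(n)$ by the pseudorandomness properties of $\pfhe$ exactly as in Lemmas \ref{lem:hyb23}--\ref{lem:hyb34}. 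The $\Z$-linear homomorphism (Definition \ref{def:PFHE}) then forces $hash = sketch$ to decode to $\mat{X}'_m$ and $\mat{X}_m$ respectively, contradicting $\mat{X}'_m \neq \mat{X}_m$. Because $d = O(1)$, the $1/n^d$ union-bound blowup is absorbed into the remaining negligible slack.

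For the complexity claims, under polynomial LWE I set $\lambda = n^w$ for an arbitrarily small $w > 0$, so each ciphertext and each FHE evaluation has size and time $\OO(n^w)$. For matrices, server-to-coordinator communication is $\OO(nk)$ from $\Phi(\mat{X}_t)$ plus one FHE-sized hash of $\OO(n^w) \subseteq \OO(nk)$. Processing time at a server is dominated by $\Phi(\mat{X}_t)$: the measurement matrix of \cite{10.1145/2213977.2213995} can be taken to have $O(n)$ nonzeros per row, so each of the $\OO(nk)$ rows costs $\OO(n)$ multiplications, giving $\OO(n^2 k)$; the FHE hash requires one evaluation and one scalar multiplication per nonzero entry of $\mat{X}_t$, costing $\OO(n^{2+w})$, which is absorbed. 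For tensors, communication is $\OO(nk^{\lceil \log d \rceil})$ from the recovery scheme plus $\OO(n^w)$ from the FHE hash.

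The main obstacle I expect is verifying that the $\Z$-linear-homomorphism decoding of Definition \ref{def:PFHE} is still valid when the number of summands grows to $n^d$ in the tensor case: the precondition $k \cdot \max_i x_i \ll q$ becomes $n^d \cdot N \ll q$, which forces $q \gg n^d N$, a $\poly(n)$-sized modulus only because $d = O(1)$. Once this parameter choice is fixed (and folded into the $\pfhe$ setup), the rest of the argument is a mechanical coordinate-wise lift of the vector-case proof of Theorem \ref{thm:stream}, with indices ranging over the vectorized domain and the random guess $m$ drawn from $[nm]$ or $[n^d]$ respectively.
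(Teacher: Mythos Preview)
Your proposal is correct and follows essentially the same approach as the paper: vectorize the matrix/tensor, replace $\sa_0$ by the deterministic low-rank measurement scheme of \cite{10.1145/2213977.2213995} (whose $\OO(nk)$ measurements are $n$-sparse, giving the $\OO(n^2k)$ server time), and reuse the FHE verification hash of Construction~\ref{const2} on the vectorized domain. The paper's own proof is a terse two-line summary of exactly this reasoning, so your more detailed treatment—particularly the explicit hybrid argument and the observation that $q \gg n^d N$ is still $\poly(n)$ when $d = O(1)$—simply fills in what the paper leaves implicit.
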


\begin{proof}
    The total number of bits of communication between each server and the coordinator equals the size of the state in the above streaming algorithms. For low-rank matrix recovery, the deterministic scheme performs $\OO(nk)$ measurements using $n$-sparse matrices, and the runtime for computing the FHE-based hash is comparably small. This gives $\OO(n^2k)$ processing time at each server.
\end{proof}

We remark that the rank-$k$ matrix recovery algorithm in the previous work \cite{fw23} requires $\OO(n^3k)$ processing time in the distributed setting, which our construction improves upon.

\nocite{langley00}

\section*{Acknowledgements}
Aayush Jain was supported by a Google faculty research scholar award 2023. David P.\ Woodruff was supported by a Simons Investigator Award and NSF Grant No. CCF-2335412.

\def\shortbib{0}
\bibliographystyle{alpha}

\bibliography{content/Bibliography/abbrev3,content/Bibliography/stream,content/Bibliography/crypto,content/Bibliography/custom}



\end{document}